\newcommand{\jcref}[1]{\cite{DBLP:journals/corr/abs-2506-20356}}
\newcommand*{\inlineequation}[2][]{%
  \begingroup
    \refstepcounter{equation}%
    \ifx\\#1\\%
    \else
      \label{#1}%
    \fi
    \relpenalty=10000 %
    \binoppenalty=10000 %
    \ensuremath{%
      #2%
    }%
    ~\@eqnnum
  \endgroup
}
\newcommand{\keyword}[1]{\mathsf{#1}}
\newcommand{\typeof}[1]{\keyword{typeof}({#1}) }
\newcommand{\fv}[1]{\keyword{fv}({#1}) }
\newcommand{\fpv}[1]{\keyword{fpv}({#1})}
\newcommand{\ftv}[1]{\keyword{ftv}({#1})}
\newcommand{\grmeq}{\; ::= \;\;}
\newcommand{\grmor}{\;\mid\;}
\newcommand*{\priority}[1]{\textcolor{teal}{#1}}
\newcommand{\oracons}{\ensuremath{\priority{\kappa}}}
\newcommand{\lowestpriority}[1]{\priority{|#1|}}
\newcommand{\nuLowestpriorityEnv}[2]{\priority{|\textcolor{black}{#1}|}_{#2}}
\newcommand{\glb}{\sqcup}
\newcommand{\lub}{\sqcap}
\newcommand*{\pr}[1][{}]{\priority{\rho_{#1}}}
\newcommand*{\prsigma}{\priority{\sigma}}
\newcommand{\prvar}{\priority{\iota}}
\newcommand{\prbot}{\priority{\bot}}
\newcommand{\prtop}{\priority{\top}}
\newcommand{\prvalue}{\priority{n}}
\newcommand{\prlub}[2]{\priority{{#1} \lub {#2}}}
\newcommand{\prglb}[2]{\priority{{#1} \glb {#2}}}
\newcommand{\prdisp}[2]{\priority{{#1}+{#2}}}
\newcommand{\intervalvar}{\priority{\varphi}}
\newcommand{\intervalcc}[2]{[\priority{#1}, \priority{#2}]}
\newcommand{\intervaloc}[2]{(\priority{#1}, \priority{#2}]}
\newcommand{\intervalco}[2]{[\priority{#1}, \priority{#2})}
\newcommand{\intervaloo}[2]{(\priority{#1}, \priority{#2}) }
\newcommand{\tunit}{\keyword{Unit}}
\newcommand{\tint}{\keyword{Int}}
\newcommand{\tarrow}[5]{{#1} \rightarrow_{#5}^{\priority{#3},\priority{#4}} {#2}}
\newcommand{\tprod}[2]{{#1}\times {#2}}
\newcommand{\tskip}[0]{\keyword{Skip}}
\newcommand{\tout}[2]{!^{\priority{#2}}{#1}}
\newcommand{\tin}[2]{?^{\priority{#2}}{#1}}
\newcommand{\tseq}[2]{{#1};{#2}}
\newcommand{\tclose}[1]{\keyword{Close}^{\priority{#1}}}
\newcommand{\twait}[1]{\keyword{Wait}^{\priority{#1}}}
\newcommand{\trec}[2]{\mu{#1}.{#2}}
\newcommand{\tvar}{\alpha}
\newcommand{\svar}{\beta}
\newcommand{\tpolyt}[4]{\forall {#1}^{\priority{#2}}. {#3}}
\newcommand{\tpolyp}[4]{{\fforall}^{\fallstyle{#1}} {\priority{#2}}^{\priority{#3}}. {#4}}
\newcommand{\tintchoice}[4]{\oplus^{\priority{#4}}\{{#1}\colon {#2}\}_{#3}}
\newcommand{\textchoice}[4]{\&^{\priority{#4}}\{{#1}\colon {#2}\}_{#3}}
\newcommand{\linmult}{\keyword{lin}}
\newcommand{\unmult}{\keyword{un}}
\newcommand{\emptyenv}{\epsilon}
\newcommand*{\envtvars}{\Gamma}
\newcommand{\envpoly}{\Delta}
\newcommand{\envpr}{\Theta}
\newcommand{\envmap}{\Psi}
\newcommand*{\dom}[1]{\keyword{dom}({#1})}
\newcommand*{\fst}[1]{\priority{\lceil} #1\priority{\rangle\!\rangle}}
\newcommand*{\tail}[1]{\priority{\rangle\!\rangle}#1\priority{\rfloor}}
\newcommand*{\unr}[1]{\keyword{unr}({#1})}
\newcommand{\dualof}[1]{\overline{#1}}
\newcommand{\subs}[3]{#1[\raisebox{.5ex}{\small$#2$}\! / \mbox{\small$#3$}]}
\newcommand{\enewk}{\keyword{new}}
\newcommand{\eunit}{\,()\,}
\newcommand{\eabs}[4]{\lambda_{#4} {#1} : {#2}. {#3}}
\newcommand{\etabs}[2]{\Lambda {#1}. {#2}}
\newcommand{\eprabs}[2]{\mathbb{\Lambda} {#1}. {#2}}
\newcommand{\eapp}[2]{{#1}{#2}}
\newcommand{\etapp}[2]{{#1}[{#2}]}
\newcommand{\eprapp}[2]{{#1}\{{#2}\}}
\newcommand{\epair}[2]{({#1},{#2})}
\newcommand{\elet}[3]{\keyword{let}\, {#1} = {#2}\, \keyword{in}\, {#3}}
\newcommand{\eletpair}[4]{\keyword{let}\, \epair{#1}{#2} = {#3}\, \keyword{in}\, {#4}}
\newcommand{\eseq}[2]{{#1}; {#2}}
\newcommand{\eselect}[1]{\keyword{select}\,{#1}}
\newcommand{\ematch}[4]{\keyword{match}\,{#1}\, \keyword{with}\, \{{#2}\rightarrow{#3}\}_{#4}}
\newcommand{\enewpoly}[3]{\enewk\,{\mathcal{#1}}\,{\priority{#2}}\,{\priority{#3}}}
\newcommand{\enew}[1]{\enewk\,{\mathcal{#1}}}
\newcommand{\efork}{\keyword{fork}}
\newcommand{\esend}{\keyword{send}}
\newcommand{\ereceive}{\keyword{receive}}
\newcommand{\efix}{\keyword{fix}}
\newcommand{\einst}[2]{\keyword{inst}^{\textcolor{darkgray}{#2}}\,{#1}}
\newcommand{\evar}{x}
\newcommand{\e}{e}
\newcommand{\econst}{c}
\newcommand{\enat}{n}
\newcommand{\eclose}{\keyword{close}}
\newcommand{\ewait}{\keyword{wait}}
\newcommand{\enext}{\keyword{next}}
\newcommand{\fallstyle}[1]{\textsc{\textcolor{gray}{#1}}}
\newcommand{\thread}{\phi}
\newcommand{\mainthread}{\bullet}
\newcommand{\childthread}{\circ}
\newcommand{\confthread}[2]{{#1}\langle{#2}\rangle}
\newcommand{\confpar}[2]{{#1} \,\|\, {#2}}
\newcommand{\confnu}[4]{(\nu {#1} {#2}^{\priority{#4}})\, {#3}}
\newcommand{\config}[1]{\mathcal{#1}}
\newcommand{\typingpoly}[9]{{#1} \mid {#2} \mid {#3} \mid {#4} \vdash {#5} : {#6} \,\,\textcolor{darkgray}{\diamond}\,\, \priority{{#7}} \,\,\textcolor{darkgray}{\diamond}\,\,\priority{{#8}}}
\newcommand{\wfpolyctx}[3]{{#1} \mid {#2} \vdash {#3}}
\newcommand{\istype}[3]{{#1} \mid {#2} \vdash {#3}}
\newcommand{\runtimetyping}[4]{{#1} \vdash_{\priority{#4}}^{#3} {#2}}
\newcommand{\ctxsplit}[5]{{#1}\mid{#2} \vdash {#3} = {#4} \circ {#5}}
\newcommand*{\structcong}{\equiv\,}
\newcommand{\red}[2]{{#1}\rightarrow{#2}}
\newcommand{\evalhole}[1]{[{#1}]}
\newcommand{\evalctx}{E}
\newcommand{\threadctx}{\mathcal{F}}
\newcommand{\configctx}{\mathcal{G}}
\newcommand{\evalapp}[2]{{#1}\evalhole{#2}}
\newcommand{\declrel}[1]{\emph{#1}\hfill{ }}
\newcommand\Small{\small}
\definecolor{darkviolet}{rgb}{0.5,0,0.4}
\definecolor{darkgreen}{rgb}{0,0.4,0.2}
\definecolor{darkblue}{rgb}{0.1,0.1,0.9}
\definecolor{darkgrey}{rgb}{0.5,0.5,0.5}
\definecolor{lightblue}{rgb}{0.4,0.4,1}
\lstdefinestyle{eclipse}{
numbers=left,
  breaklines=true,
  basicstyle=\sffamily\Small,
  emphstyle=\color{red}\bfseries,
  keywordstyle=\color{darkviolet}\bfseries,
  commentstyle=\color{gray},
  stringstyle=\color{darkblue},
  numberstyle=\color{darkgrey},
  xleftmargin=5.5ex, 
  emphstyle=\color{red},
  showstringspaces=false,
  moredelim=**[is][\color{teal}]{~}{~},
}
\newcommand*{\fforall}{%
  {\mathpalette\fforallAux{}}%
}
\newcommand*{\fforallAuxx}[1]{%
  \sbox0{$\m@th#1\forall$}%
  \sbox2{%
    \rlap{%
      \raisebox{\depth}{$\m@th#1\backslash$}%
    }%
    \kern\ht0 %
  }%
  \sbox2{\resizebox{\ht2}{\height}{\copy2}}%
  \sbox2{\resizebox{!}{\ht0}{\copy2}}%
  \wd2=0pt %
  \copy2
  \forall
}
\newsavebox\forallBox
\newdimen\forallLineWidth
\newdimen\forallSep
\newcommand*{\fforallAux}[1]{%
  \sbox\forallBox{$\m@th#1\forall$}%
  \setlength{\forallLineWidth}{.06\wd\forallBox}%
  \setlength{\forallSep}{.09\wd\forallBox}%
  \tikz[
    inner sep=0pt,
    line cap=round,
    line width=\forallLineWidth,
  ]
  \draw
    (0,0) node (A) {\copy\forallBox}
    (A.south) ++(-\forallSep-\forallLineWidth,.4\forallLineWidth)
    coordinate (A1)
    (A.north west) ++(-\forallSep,-\forallLineWidth)
    coordinate (A2)
    (A1) -- (A2)
  ;%
}
\begin{document}
\title{Deadlock-free Context-free Session Types}
%
%
 \author{Andreia Mordido\inst{1}\orcidID{0000-0002-1547-0692} 
 \and
  Jorge A. P\'{e}rez\inst{2}
  \orcidID{0000-0002-1452-6180} 
  }
 \authorrunning{Mordido and P\'{e}rez}
 \institute{LASIGE, Faculdade de Ci\^{e}ncias, Universidade de Lisboa, Portugal \and
University of Groningen, The Netherlands
}
\maketitle              
\begin{abstract}
  We tackle the problem of statically ensuring that message-pas\-sing programs never run into \emph{deadlocks}. 
  We focus on concurrent functional programs 
  governed by \emph{context-free session types}, which can express rich tree-like structures not expressible in regular session types.
  We propose a new type system based on context-free session types: it enforces both protocol conformance and deadlock freedom, also for programs implementing cyclic  communication topologies with recursion and polymorphism.
  We show how the \emph{priority-based} approach to 
  deadlock freedom {can be extended to this expressive setting.}
  We prove that well-typed concurrent programs 
  respect their protocols and never deadlock.
\end{abstract}

\section{Introduction}
A long-standing issue 
in concurrent programming
is ensuring the absence of \emph{deadlocks}: states in which processes are forever blocked awaiting a mes\-sage. 
Motivated by this challenge, this paper develops new compositional verification techniques that ensure deadlock freedom for  processes that exchange messages by following some \emph{protocols}.
We focus on \emph{session types}, which specify reciprocal protocols on heterogeneously typed 
channels~\cite{DBLP:conf/concur/Honda93,DBLP:conf/esop/HondaVK98}. 
As key novelty, we consider \emph{context-free} session types (CFSTs)~\cite{DBLP:conf/icfp/ThiemannV16}, in which 
sequential composition, non-regular recursion, and polymorphism can jointly specify tree-like structures not expressible in regular session types. By targeting CFSTs, our work broadens the range of concurrent processes that can be  verified to be deadlock-free.

The expressiveness of CFSTs leads to challenges in the study of notions such as type  equivalence~\cite{DBLP:conf/tacas/AlmeidaMV20} and subtyping~\cite{DBLP:conf/concur/SilvaMV23}. Addressing deadlock freedom is not an exception, for two reasons. 
\emph{First},  enforcing deadlock freedom entails performing \emph{non-local analyses} on the behavior of processes. The goal is to identify circular dependencies between processes that cannot be detected by analyzing each process in isolation. Existing disciplines based on CFSTs prioritize programming flexibility; as they rely  on purely local analyses, they accept deadlocked processes as typable.
\emph{Second}, there is no ``obviously small'' formulation of CFSTs for studying deadlock freedom. For regular session types, useful deadlock analyses can be done in a setting without recursion,  whereas in the case of CFSTs one must include recursion \emph{and} polymorphism---indeed, they are ``entangled''~\cite{DBLP:conf/icfp/ThiemannV16}: limiting recursion would make 
our system degenerate into regular session types; 
removing polymorphism would prevent us from having recursion. 
Defined on top of a concurrent functional language, our
CFSTs thus combine three distinctive features: 
sequential composition, non-regular recursion, and polymorphism. 

We handle this formulation of CFSTs by extending the original proposal by Thiemann and Vasconcelos~\cite{DBLP:conf/icfp/ThiemannV16}. To carry out the non-local analyses required, we follow the \emph{priority-based} 
approach pioneered by Ko\-ba\-yashi~\cite{DBLP:conf/unu/Kobayashi02,DBLP:conf/concur/Kobayashi06}, later streamlined by Padovani~\cite{padovani_linear_pi}. The key idea is to endow each communication action  with a \emph{priority}, an integer that denotes its urgency: immediately enabled actions have a higher priority than those that appear later on in the process. This idea allows to detect processes that are stuck due to insidious circular dependencies between threads, as their associated actions cannot be given a valid priority. The approach supports processes organized in \emph{cyclic network topologies}, which abound in practical  distributed scenarios; it also extends to processes with recursion and polymorphism, in which priority management requires care. 

Originally developed for variants of the $\pi$-cal\-cu\-lus with \emph{linear types}~\cite{DBLP:journals/toplas/KobayashiPT99},   here we build upon the type systems by Padovani and Novara~\cite{padovani:hal-00954364,DBLP:conf/forte/PadovaniN15}, who adopted the priority-based approach in~\cite{padovani_linear_pi} to higher-order concurrent programs.  
Following their approach, but casting it in the more expressive setting of CFSTs, we  develop a type system and associated meta-theoretical results of \emph{type preservation} (protocol conformance, \Cref{lem:subject-red-processes}) and \emph{deadlock freedom} (\Cref{thm:config-deadlock-free}).

\paragraph{Contributions and Outline.} 
This paper's contributions are two-fold. 
First,  in the context of a call-by-value concurrent language, we endow  CFSTs with a priority-based mechanism to guarantee deadlock freedom.
Second, we devise novel mechanisms for supporting polymorphic types with priorities, namely  constructs for channel creation and polymorphic recursion at the level of priorities.

\Cref{sec:deadlocks} motivates our work via examples
written in FreeST~\cite{freest}, a functional programming language with CFSTs.
\Cref{sec:types} introduces the type language and notions of type duality and priorities. 
\Cref{sec:semantics} introduces expressions, configurations, and the typing rules. 
\Cref{sec:safety} presents the operational semantics and proves type preservation and deadlock freedom.  
\Cref{sec:conclusion} discusses related work and  
concludes the paper. 
\emph{The appendices contain omitted definitions and proofs, extra examples, and extended related work discussions}.

\section{Deadlock Freedom in Programs with CFSTs}
\label{sec:deadlocks}

To motivate our developments, we present two examples written in FreeST~\cite{freest}, a functional programming language with CFSTs:
a classic deadlock scenario avoided by priorities and
a recursive \emph{stream}, that motivates \emph{polymorphic recursion} at the level of priorities.
An example of non-regular recursion in binary trees is in~\cref{ss:trees}. 
Details on cyclic topologies are deferred to~\cref{ap:cyclic-scheduler}.

\paragraph{A Classic Deadlock Scenario.}
Consider a communicating 
program involving two processes that share two channels. They 
aim to send in one channel the value received in the other, but do so in an order that prevents synchronization.
Each process holds two linear \emph{endpoints}: one 
obeys the type 
\lstinline|T = ?Int; Wait|, meaning that it
first \emph{receives} an integer and then \emph{waits} to be closed; 
the other follows  the type \emph{dual} to \lstinline|T| (denoted \lstinline|!Int; Close|), which \emph{sends} an integer and then \emph{closes} the channel.
In FreeST, each process can be implemented  as follows:
\begin{lstlisting}
f : T -> dualof T 1-> ()
f x y = let (n,x) = receive x in
        let y = send n y in
        wait x; close y
\end{lstlisting}
where  `\lstinline|1->|' in the type of \lstinline|f| represents a \emph{linear} function, to be used \emph{once}.
When a FreeST program launches two parallel processes, 
each running an instance of function \lstinline|f|, 
the program type-checks, but deadlocks: both processes block indefinitely, each waiting to receive a value that the other never sends.

To avoid deadlocks, one solution is to prescribe 
the intended order of operations in the program body. 
We annotate the types with  {priorities} associated to the corresponding 
operation, following  Padovani and Novara~\cite{padovani:hal-00954364,DBLP:conf/forte/PadovaniN15}.
Priorities are integers that indicate an action's \emph{urgency} in the protocol.
Throughout the paper we use 
colour \textcolor{teal}{teal} to 
highlight priorities in our
type system.

Revisiting the classic deadlock scenario, 
the types that govern the two channels  should now be typed
differently, priority-wise: 
\begin{lstlisting}
type R = ?~p~Int; Wait ~(p+2)~
type S = ?~q~Int; Wait ~(q+2)~
\end{lstlisting}
Type \lstinline|R| extends \lstinline|T| with priorities: it receives an \lstinline|Int| at priority \lstinline|~p~|
and then waits at priority \lstinline|~p+2~|; 
type \lstinline|S| extends  \lstinline|T| similarly, under priority 
\lstinline|~q~|.
To showcase the use of priorities, we could consider two variants of function 
\lstinline|f| (the meaning of priority bounds on arrows, shown within 
\lstinline|~[]~| after the arrow, is 
formalized in~\cref{sec:types}):

\lstset{firstnumber=3}
\begin{lstlisting}
f : R ->~[top, bot]~ dualof S 1->~[p, q+2]~ ()
f x y =
  let (n,x) = receive x in -- priority: p
  let y = send n y in      -- priority: q
  ...
g : S ->~[top, bot]~ dualof R 1->~[q, p+2]~ ()
g y x = ... 
\end{lstlisting}

By following 
the priorities of the arguments' types,
we detect a mismatch:
function \lstinline|f| 
requires \lstinline|~p~| < \lstinline|~q~| (Lines 5 and 6)
whereas function \lstinline|g| requires \lstinline|~q~| < \lstinline|~p~|.
The type checker flags the mismatch when \lstinline|f| and 
\lstinline|g| run in parallel: no priorities \lstinline|~p~| and 
\lstinline|~q~| satisfy both constraints, so this deadlocked state is 
rejected statically. The fix is to swap the operations
receive/send and wait/close in one of the functions.

\paragraph{Recursion and Polymorphism over Priorities.}
\label{subsec:deadlocks-simple-rec}

To express repetitive protocols, session types leverage recursion and are 
usually equi-recursive. Recursion introduces 
challenges~\cite{padovani:hal-00954364} that are familiar to issues found in the theory of CFSTs~\cite{DBLP:conf/icfp/ThiemannV16}.

Consider a protocol that sends a
stream of unit values, defined by  the type 
\lstinline|Stream|  as
$\trec{\svar}{\tseq{!\tunit }{\svar}}$. We could  write a recursive program for 
two interacting \lstinline|Stream|s.
However, with equi-recursive types, iterating the protocol forces 
reusing the same priorities for later actions, 
which makes repeated sends/receives inconsistent; this motivates us to define \emph{polymorphic recursion} at the level of priorities.

We
propose \emph{priority abstractions} to allow the type to evolve 
while avoiding inconsistent states.
We use a quantifier
$\fforall\,\prvar^{\intervalvar}$ to abstract over priorities 
and bind a \emph{priority
variable} $\prvar$ to a \emph{priority interval} $\intervalvar$, as 
proposed in~\cite{padovani:hal-00954364}.
The type \lstinline|Stream| can be defined by:
$\trec{\svar}{(\tpolyp{}{\prvar}{\intervaloo{\bot}{\top}}{\tseq{\tout{\tunit}{\prvar}}{\svar}})}.$
A channel of this type first instantiates the 
priority variable $\priority{\prvar}$ 
with an integer in interval 
$\intervaloo{\prbot}{\prtop}$ that materializes the order/urgency of 
the send operation, sends a unit value and finally recurs.

The instantiation of priorities \emph{on channels} cannot be arbitrary.
When creating a channel (using \lstinline|new|), we assign a \emph{priority sequence}    {predefined} for the channel's endpoints. 
Such a sequence  intuitively acts as a `stack' of priorities,  common to both endpoints.
With this mechanism for channel creation, we could define a \lstinline|client| function 
as follows (noting that \lstinline|()| represents $\tunit$ in FreeST):

\lstset{firstnumber=1}
\begin{lstlisting}[label=lst:stream, caption={Example of polymorphism over priorities on session and functional types.},captionpos=b]
type Stream = ~forallp i belongsTo (bot,top) =>~ !~i~() ; Stream
  
client: ~forallp p belongsTo (bot,top)=> ~Stream ->~[top,bot]~ dualof Stream 
                                   1->~[p,top]~ ()
client c1 c2 =                                                                
  let c1 = send () (inst c1) in      -- c1 : Stream  
  let (_,c2) = receive (inst c2) in  -- c2 : Stream
  client~{next c1}~ c1 c2
\end{lstlisting}

We overload \lstinline|forallp| for priority-quantified sessions (Line 1) 
and functions (Line~3); the two are instantiated differently 
(\lstinline|inst| \emph{vs.} \lstinline|~{...}~|). In Line 6, \lstinline|inst c1| uses 
the next priority for \lstinline|c1|, sends \lstinline|()| at priority $\pr$, 
and pops $\pr$ from \lstinline|c1|'s stack (returning \lstinline|c1| with type 
\lstinline|Stream|); in Line 8, the recursive call to \lstinline|client| explicitly instantiates 
\lstinline|~p~| with the \lstinline|next| priority assigned to \lstinline|c1|.


The consistency of program operations 
is verified at compile time, according to the 
priorities fixed upon channel creation. 
The function \lstinline|main| below creates two channels of type 
\lstinline|Stream| that are used by two threads, one executing \lstinline|client|
and another executing \lstinline|newclient| (similar to \lstinline|client| with \lstinline|send| and \lstinline|receive| swapped):

\lstset{firstnumber=9}
\begin{lstlisting}[label=lst:main,caption={Creation of polymorphic channels with  priority sequences.},captionpos=b]
main : ()
main =
  let (w1, r1)  = new Stream ~1 2~ () in
  let (w2, r2)  = new Stream ~2 2~ () in
  fork (\_:() 1-> client~{next w1}~ w1 r2);
  newclient~{next w2}~ w2 r1
\end{lstlisting}
For priority-polymorphic sessions, \lstinline|new| allocates dual endpoints 
and fixes their priority sequence, ensuring that both endpoints  are always governed by the same priorities.
This sequence is determined by two user-provided numbers, 
an \emph{initial value} and an \emph{increment}. 
In Lines 11-12,  endpoints 
for \lstinline|r|eading and \lstinline|w|riting are created: 
\lstinline|r1| and 
\lstinline|w1| get
the priority sequence [1, 3, 5, \ldots]
from \lstinline|~1~| (initial value) and \lstinline|~2~| (increment), while \lstinline|r2| and \lstinline|w2| get
[2, 4, 6, \ldots] from \lstinline|~2 2~|; recursive calls instantiate their 
priority variables with the channels' \lstinline|next| priorities.


\paragraph{Non-Regular Recursion.}
The classic example that capitalises on all the features of CFSTs is exchanging binary 
trees~\cite{DBLP:conf/tacas/AlmeidaMV20,DBLP:conf/icfp/ThiemannV16}.
Consider the type \lstinline|TreeChannel|, defined as 
$\trec{\svar}{(\tpolyp{S}{\prvar}{\intervaloo{\bot}{\top}}{\&^{\priority{\prvar}}\{\keyword{LeafC}: \tskip, \keyword{NodeC}: \tseq{\tin{\tint}{\prvar+1}}{\tseq{\svar}{\svar}}\}{}}{})}$:
it first
instantiates the priority variable $\priority{\prvar}$
and then offers two branches:  one with label 
$\keyword{LeafC}$, which  does nothing; and another 
with label $\keyword{NodeC}$, which receives an integer 
under priority $\priority{\prvar+1}$ and then receives the 
left and right subtrees.  
Consider a function \lstinline|receiveTree| in our approach (assuming \lstinline!data Tree = Leaf | Node Int Tree Tree!): 
\lstset{firstnumber=1}
\begin{lstlisting}
receiveTree : ~forallp p belongsTo (bot, top) =>~ forall a :: ~p~ =>
             TreeChannel; a ->~[top,top]~ (Tree, a)
receiveTree c =
  match (inst c) with {
    LeafC c ->
      (Leaf, c),
    NodeC c ->
      let (x, c) = receive c in
      let (left, c) = receiveTree~{next c}~ @(TreeChannel;a) c in
      let (right, c) = receiveTree~{next c}~ @a c in
      (Node x left right, c) }
\end{lstlisting}
\lstset{firstnumber=1}

This function inputs a channel of type `\lstinline|TreeChannel; a|', where 
\lstinline|a| is a \emph{type variable} that represents a type
with priority \lstinline|~p belongsTo (bot, top)~|.
This variable is instantiated in a function call 
through the \lstinline|@| operator (Lines 9-10).
Since channel \lstinline|c| is polymorphic,  \lstinline|receiveTree| first instantiates its priority
variable (\lstinline|inst c| in Line~4).
The \lstinline|match| expression handles the 
choices on  channel \lstinline|c|.
The recursive calls apply polymorphic recursion 
on priorities \emph{and} types: they
instantiate the priority variable \lstinline|~p~| with the 
\lstinline|next| priority assigned to the operations in channel \lstinline|c|;
note that  \lstinline|next ~c~| has different values in Lines~9 and 10.
The function
returns a pair with a reconstruction of the node
and  the continuation of the channel.

\section{Context-Free Session Types, with Priorities}
\label{sec:types}

We define types in two syntactic categories:
functional types and session types.
We rely on some base sets: 
$\ell\in L$ stands for labels, $\tvar$ stands for
functional type variables, $\svar$ stands for session type
variables, and $m$ stands for multiplicities (relevant for arrow types).
We use $\gamma$ as a meta-variable for  type variables (functional or session).
As we have seen, $\priority{\pr}$, $\prvar$, and $\priority{\intervalvar}$ stand for priorities, priority variables, and priority intervals, respectively.
We use $\prbot$ and $\prtop$ to stand for the lowest and highest 
priority, respectively. 
Also, we write $\priority{\bar{\pi}}$ to denote a priority sequence, as motivated in~\cref{subsec:deadlocks-simple-rec}.
The full grammar of types is in~\cref{fig:types}.

\begin{figure}[!t]
  \begin{align*}
    T,& \,U \grmeq & \text{(Functional types)}  &\quad & S& \grmeq & \text{(Session types)}\\
     &\tunit &\text{unit type} &&\grmor&\tskip & \text{type skip}\\
    &\grmor \tarrow{T}{U}{\pr_1}{\pr_2}{m}  \hspace*{-4mm}& \text{function type} &&\grmor& \tout{T}{\pr} & \text{send value of type $T$}\\
    &\grmor \tprod{T}{U} & \text{product type} &&\grmor& \tin{T}{\pr} & \text{receive value of type $T$}\\
    &\grmor \alpha &\text{(functional) type variable} &&\grmor& \tintchoice{\ell}{S_\ell}{\ell\in L}{\pr} \hspace*{-8mm}& \text{internal choice}\\
    &\grmor \trec{\tvar}{T} &\text{rec. functional type} & &\grmor& \textchoice{\ell}{S_\ell}{\ell\in L}{\pr} \hspace*{-8mm}&\text{external choice}
    \\ 
    &\grmor \tpolyt{\alpha}{\pr}{T}{\prsigma} & \text{poly. functional type}&&\grmor& \tseq{S_1}{S_2} & \text{sequential composition} \\ 
    &\grmor \tpolyt{\svar}{\pr}{S}{\prsigma}  & \text{poly. session type}& &\grmor& \tclose{\pr} & \text{ready to be closed}\\
    &\grmor \tpolyp{F}{\iota}{\intervalvar}{T} & \text{priority abstraction} &&\grmor& \twait{\pr} & \text{wait to be closed}\\
    &\grmor S & \text{session type} & &\grmor & \svar & \text{(session) type variable}\\
    &&&&\grmor& \trec{\svar}{S} & \text{recursive session type}\\
    &&&&\grmor & \tpolyp{S}{\prvar}{\intervalvar}{S} & \text{priority abstraction}
 \end{align*}\vspace*{-5mm}
 \begin{align*}
    m \grmeq & \linmult \grmor \unmult &\tag{Multiplicities}\\
  \pr \grmeq & \prbot \grmor \prvalue \grmor \prtop \grmor \prvar
    \grmor \prdisp{\priority{\pr}}{\priority{n}}
    \tag{Priorities}\\
    \intervalvar \grmeq & \intervalcc{\priority{\pr_1}}{\priority{\pr_2}} \grmor \intervaloc{\priority{\pr_1}}{\priority{\pr_2}}
    \grmor \intervalco{\priority{\pr_1}}{\priority{\pr_2}} \grmor \intervaloo{\priority{\pr_1}}{\priority{\pr_2}}
    \tag{Intervals}
%
\\[1mm]
    \envpoly \grmeq& \emptyenv \grmor \envpoly, \tvar:: \pr \grmor \envpoly, \svar :: \pr 
    \qquad \tag{Polymorphic variables} \\
    \envpr \grmeq& \emptyenv \grmor \envpr, \priority{\prvar\in \intervalvar}
    \qquad \tag{Priority variables}
    \vspace{-3mm}
 \end{align*}
 \caption{Types, priorities and priority variables, 
  polymorphic and priority contexts.}
 \label{fig:types}
\end{figure}

Having two syntactic categories avoids the need for a kinding system~\cite{DBLP:journals/iandc/AlmeidaMTV22,DBLP:conf/icfp/ThiemannV16}, but
entails some duplication: polymorphic types can 
either have a {functional} body, in which case the 
quantifier binds a 
\emph{functional} type variable, or the body can be a session, 
in which case the quantifier binds a \emph{session} type 
variable; the same for recursive types. This means that in addition 
to $\tvar\neq \svar$, we also have two different (but overloaded)
$\forall$ and $\mu$ operators. 
Recursive 
types keep their body's original ``nature'', whereas polymorphic
types are always functional. (We leave the details of this 
choice to previous work on context-free session 
types~\cite{DBLP:conf/tacas/AlmeidaMV20}.) 

As discussed in \Cref{sec:deadlocks}, the distinguishing feature of our work is the extension of CFSTs with priority 
annotations to enforce deadlock freedom~\cite{DBLP:journals/lmcs/KokkeD23,padovani:hal-00954364,DBLP:conf/forte/PadovaniN15}.
The function type 
$\tarrow{T}{U}{\pr_1}{\pr_2}{m}$ features a 
\emph{lower priority bound} $\priority{\pr_1}$ which indicates that the closure of 
the function contains channels with priority $\priority{\pr_1}$ or 
greater, and an \emph{upper priority bound} $\priority{\pr_2}$, which stands for 
the maximum priority that is pending in the function and 
will be triggered once the function is applied (see the
explanation by 
Padovani and Novara~\cite{padovani:hal-00954364} for further details). 
The  $m$ denotes the function's {multiplicity}:
$m = \linmult$ means that the function can be used once, whereas
$m = \unmult$ means that it can be used zero or 
more times. 

In polymorphic types ($\tpolyt{\alpha}{\pr}{T}{\prsigma}$, 
$\tpolyt{\svar}{\pr}{S}{\prsigma}$), the type variable is given a 
priority~$\pr$. Priorities are  
defined via priority variables $\prvar$, that are
abstracted through a $\fforall$ binder that assigns a 
priority interval $\priority{\intervalvar}$  to $\prvar$. 
Since priority applications are handled differently, we distinguish between 
binders for functional types $\fforall^\fallstyle{F}$ 
and  for session types $\fforall^\fallstyle{S}$;  the superscript is omitted when clear from context.
To define channel creation, we write $\EuScript{S}^{\fforall}$ to denote session types that contain 
a priority abstraction $\fforall^\fallstyle{S}$;
session types without priority binders are denoted by $\EuScript{S}$.

Following~\cite{DBLP:journals/lmcs/KokkeD23,DBLP:conf/fossacs/Padovani14,padovani:hal-00954364}, 
our synchronizations (send/receive, internal/external choice, close/wait) operate under a priority $\pr$. Type $\tskip$ and sequential composition 
do not explicitly represent an operation, so do not have priorities.

\cref{fig:types} defines priorities $\pr$ and 
priority intervals $\intervalvar$. 
Priorities can be $\prtop$ (which are given to  \emph{unrestricted types}, cf.~\cite{vasconcelos2012fundamentals}), integers $\priority{n}$, 
priority variables $\prvar$, and 
a \emph{displacement} $\prdisp{\pr}{n}$, which
represents the addition of integers extended with the
  rules $\prdisp{\prbot}{n} = \prbot$ and 
$\prdisp{\prtop}{n} = \prtop$. Priority intervals can be closed or open to account for the inclusion of 
the limits.

We denote the set of \emph{free 
type variables} in a type $T$ as $\ftv{T}$ and the set of \emph{free priority variables} in a priority $\pr$
(resp., interval $\intervalvar$) as $\fpv{\pr}$ (resp., $\fpv{\intervalvar}$).
We adopt the usual notions of substitution for both type and priority substitutions:
we denote by $\subs{T}{U}{\gamma}$ the substitution
of a type $U$ for a type variable $\gamma$ in $T$ and by $\subs{T}{\pr}{\prvar}$
the substitution of a priority $\priority{\pr}$ for a priority variable $\prvar$ in $T$.

Not all types are well formed. For example, 
we do not wish to define the type 
`$\tpolyt{\tvar}{\prvar}{\tin{\tvar}{\pr}}{}$' if $\prvar$ and $\pr$
are free variables.
To define type formation we use typing contexts for
polymorphic and priority variables, denoted  $\envpoly$ and $\envpr$, respectively, are 
presented in~\cref{fig:types}. 
We record  polymorphic variables
with the corresponding priority assignment in
$\envpoly$ and priority variables with their respective 
range in $\envpr$. 

\begin{figure}[!t]
  \begin{gather*}
    \infer[\textsf{F-TAbs}]
        {\istype{\envpoly, \gamma :: \pr}{\envpr}{T}
        \quad 
        \fpv{\pr}\in \dom{\envpr}}
        {\istype{\envpoly}{\envpr}{\tpolyt{\gamma}{\pr}{T}{\prsigma}}}
      \quad
      \infer[\textsf{F-PAbsF}]
            {\istype{\envpoly}{\envpr, \priority{\prvar\in \intervalvar}}{T}
            \quad 
            \fpv{\intervalvar}\in \dom{\envpr}}
            {\istype{\envpoly}{\envpr}{\tpolyp{F}{\prvar}{\intervalvar}{T}}}
\\ 
      \infer[\textsf{F-PAbsS}]
            {\istype{\envpoly}{\priority{\prvar\in \intervalvar}}{S}
            \quad 
            \fpv{\intervalvar}\in \dom{\envpr}}
            {\istype{\envpoly}{\emptyenv}{\tpolyp{S}{\prvar}{\intervalvar}{S}}}
  \end{gather*}
  \caption{Type formation.}
  \label{fig:type-formation}
      \vspace{-3mm}
\end{figure}

The judgement  $\istype{\envpoly}{\envpr}{T}$ 
denotes that type $T$ is well-formed under the polymorphic context $\envpoly$
and the priority context $\envpr$.
The standard CFST type-formation rules are as in~\cite{DBLP:conf/icfp/ThiemannV16}; the complete set of rules appears 
in~\cref{ap:types}. \Cref{fig:type-formation} shows only the 
new or modified type formation rules introduced in this work.
%

Rule \textsf{F-TAbs} states that a polymorphic type $\tpolyt{\gamma}{\pr}{T}{}$ 
can only assign a type variable $\gamma$
to a priority $\pr$ if all the free priority variables 
in $\pr$ are defined in $\envpr$; furthermore, $T$ should be well-formed
under the extended environment $\envpoly, \gamma::\pr$.
Similarly, rule \textsf{F-PAbsF} states that a priority variable $\prvar$
can only be assigned to $\intervalvar$ through the $\fforall^{\fallstyle{F}}$ binder
if all the free variables occurring in $\intervalvar$ are defined 
in $\envpr$; furthermore, the body of the type should be well-formed 
under the priority environment $\envpr, \prvar\in \intervalvar$. 
Rule \textsf{F-PAbsS}
imposes the same conditions but requires the priority environment to be empty, thus enforcing that the
$\fforall^{\fallstyle{S}}$ binder occurs at most once in a well-formed type; this 
restriction is meant to simplify the representation of 
priority sequences (cf. the definition of priority maps in~\Cref{fig:terms}).


Priorities govern the order of  {operations}. This is apparent 
in polymorphic types, which can be instantiated with different priority 
values. For this reason, priorities are closely related to typing environments
and a natural \emph{priority ordering} will be imposed in the typing rules
(cf.~\cref{sec:semantics}), rather than in type formation rules.
The typing rules will allow to ensure that there are 
no viable programs for types such as  $\tout{(\tout{\tint}{1})}{2}$, 
where the operation that follows in the payload is \emph{more urgent}, 
i.e., has lower priority, than sending the message itself.


Duality is paramount to establish  {compatible} communication 
between two endpoints of a channel. This notion is defined only 
over session types; the definition is 
standard~\cite{DBLP:conf/tacas/AlmeidaMV20,DBLP:conf/icfp/ThiemannV16}
 and is presented in~\cref{ap:types}.
Crucially, duality preserves priorities (e.g., $\dualof{\tout{T}{\pr}} = \tin{T}{\pr}$).
The dual of a priority abstraction 
($\dualof{\tpolyp{}{\svar}{\intervalvar}{S}}$)
is the priority abstraction 
of the dual session type, $\tpolyp{}{\svar}{\intervalvar}{\dualof{S}}$.


\section{Statics: Configurations and Typing Rules}
\label{sec:semantics}

\begin{figure*}[!t]
    $\enat  \grmeq  1 \grmor 2 \grmor\ldots$ \hfill{(Natural numbers)}\smallskip\\
    $\prsigma  \grmeq  \pr \grmor \enext^k x$ \hfill{(Priority values)}\smallskip\\
    $ \econst  \grmeq \efork \grmor \esend \grmor \ereceive \grmor \eclose \grmor \ewait \grmor \eunit \grmor \eselect{k} \grmor \efix$ \hfill {(Constants)}\smallskip\\
    $ v  \grmeq x \grmor \econst \grmor \eabs{x}{T}{e}{m} \grmor \epair{v}{v}
     \grmor \etabs{\tvar}{v} \grmor \eprabs{\prvar}{v} \grmor
     \eprapp{\eclose}{\prsigma} 
     \grmor \eprapp{\ewait}{\prsigma} 
     \grmor \eprapp{\efork}{\priority{\prsigma_1}} $\smallskip\\
     \hspace*{2mm} $ \grmor \eprapp{\eprapp{\efork}{\priority{\priority{\prsigma_1}}}}{\priority{\prsigma_2}}
        \grmor \eprapp{\esend}{\priority{\prsigma_1}}
      \grmor \etapp{\eprapp{\esend}{\priority{\prsigma_1}}}{T}
      \grmor \eprapp{\etapp{\eprapp{\esend}{\priority{\prsigma_1}}}{T}}{\priority{\prsigma_2}}$\smallskip\\
      \hspace*{2mm} $ \grmor \eprapp{\eprapp{\etapp{\eprapp{\esend}{\priority{\prsigma_1}}}{T}}{\priority{\prsigma_2}}}{\priority{\prsigma_3}}
     \grmor \etapp{\eprapp{\eprapp{\etapp{\eprapp{\esend}{\priority{\prsigma_1}}}{T}}{\priority{\prsigma_2}}}{\priority{\prsigma_3}}}{S}\grmor 
     \eapp{\etapp{\eprapp{\eprapp{\etapp{\eprapp{\esend}{\priority{\prsigma_1}}}{T}}{\priority{\prsigma_2}}}{\priority{\prsigma_3}}}{S}}{v}$\smallskip\\
     \hspace*{2mm} $\grmor \eprapp{\ereceive}{\priority{\prsigma_1}} 
     \grmor \etapp{\eprapp{\ereceive}{\priority{\prsigma_1}}}{T}
     \grmor \eprapp{\etapp{\eprapp{\ereceive}{\priority{\prsigma_1}}}{T}}{\priority{\prsigma_2}}
     \grmor \eprapp{\eprapp{\etapp{\eprapp{\ereceive}{\priority{\prsigma_1}}}{T}}{\priority{\prsigma_2}}}{\priority{\prsigma_3}}$\smallskip\\
     \hspace*{2mm} $\grmor \etapp{\eprapp{\eprapp{\etapp{\eprapp{\ereceive}{\priority{\prsigma_1}}}{T}}{\priority{\prsigma_2}}}{\priority{\prsigma_3}}}{S}
     \grmor \fbox{$\eprapp{x}{\prsigma}$}
       $
     \hfill{(Values)}\smallskip\\
     $e  \grmeq v \grmor \eapp{e}{e} \grmor \epair{e}{e} \grmor \elet{x}{e}{e}
     \grmor \eletpair{x_1}{x_2}{e}{e}
     \grmor \eseq{e}{e} 
     \grmor \etapp{e}{T} \grmor \eprapp{e}{\prsigma}$\smallskip\\
     \hspace*{2mm} $\grmor \ematch{e}{\ell}{e_\ell}{\ell\in L} 
     \grmor \enew{S} \grmor \enewpoly{\EuScript{S}^\fforall}{n_1}{n_2}  
     \grmor \einst{x}{\envmap} $ \hfill
     {(Expressions)}\smallskip\\
     $\config{C}  \grmeq \confthread \phi \e \grmor \confpar{\config{C}}{\config{C}} 
     \grmor \confnu{x}{y}{\config{C}}{\bar{\pi}} $ \hfill{(Configurations)}\smallskip\\
     $\phi \grmeq  \childthread \grmor \mainthread $ \hfill{(Thread flags)}\smallskip\\
    $\envtvars \grmeq  \emptyenv \grmor \envtvars, x: T $ \hfill{(Typing contexts)}\smallskip\\
     $\envmap \grmeq  \emptyenv \grmor \envmap, x : T \mapsto \priority{\bar{\pi}} $ \hfill{(Priority maps)}
  \caption{The syntax of values, expressions, configurations, priorities and typing contexts.}
  \label{fig:terms}
    \vspace{-2mm}
\end{figure*}

Here we present our concurrent functional language with sessions, which extends earlier works~\cite{DBLP:conf/tacas/AlmeidaMV20,DBLP:conf/icfp/ThiemannV16} with applications and abstractions over priority 
variables. 
Our language adopts a \emph{call-by-value} strategy;
as usual in functional sessions (cf.~\cite{DBLP:journals/jfp/GayV10}),  session communication arises at the level of \emph{configurations}, which is where deadlocks could occur.
A configuration is either an expression, 
a parallel composition of configurations, or a configuration in the scope of a channel restriction $(\nu{x}{y}^{\priority{\bar{\pi}}})$ that binds together the (dual) endpoints $x$ and $y$. 
It is convenient to consider
configurations with a \emph{flag} $\phi$, useful to indicate whether the configuration is the main thread ($\phi = \mainthread$) or is a child thread ($\phi= \childthread$). 
The syntax of constants, (runtime) values, expressions and configurations is given in \cref{fig:terms}.

Further novelties concern priorities and their associated constructs.
Priority values $\prsigma$ can be a priority $\pr$ (cf. \Cref{fig:types}) or an element from the priority sequence associated with a channel $x$, denoted $\enext^k x $ (for the $k$-th element in the sequence).
Priority abstractions are of the form $\eprabs{\prvar}{e}$, where
$\mathbb{\Lambda}$ binds a priority variable $\prvar$ that might occur free
in $e$. Like type abstractions, priority abstractions of the form 
$\eprabs{\prvar}{v}$ are values. 
Priority applications over 
an expression $e$ are denoted by $\eprapp{e}{\prsigma}$ and $\einst{e}{\envmap}$
(we often omit  $\envmap$ as it is easily inferred from the context). 
In \Cref{fig:terms}, the box enclosing priority applications over channel names  $\eprapp{x}{\prsigma}$ means that they are \emph{runtime values}, not present in source programs.

There are several forms of partial 
applications for constants, which are considered values; they will be justified by 
the types presented in~\cref{fig:types-constants}. 
We denote by $\fv{\e}$ the set of free (term) variables occurring in $\e$.
We have two different ways of declaring sessions (`$\enew{-}$'), depending on whether the session type involved contains a priority abstraction or not; if that is the case, session declaration requires the initial value and increment for the associated priority sequence.
We use $\eseq{e_1}{e_2}$ as an abbreviation for $\elet{x}{e_1}{e_2}$
when $e_1$ has type $\tunit$.

\paragraph{Typing Contexts and Priority Split.}
\cref{fig:terms} also defines the syntax of \emph{typing contexts} (for term variables)
and of \emph{priority maps}. 
The priority map $\envmap$ relates 
type assignments $ x : T $ with 
priority sequences  $\priority{\bar{\pi}}$.
This map is essential to determine the \lstinline|next| priority assigned to a channel. 
We write $\envmap(x)$ instead of $\envmap(x:T)$ when the type $T$ is unimportant or clear from the context. A type $T$ may appear more than once in $\envmap$, associated with different variables.
We write $\envmap(T)$ to denote the (possibly empty) set of variables associated with $T$ in  $\envmap$.
Given $\envmap$, we write $\fst{\envmap(x)}$ to denote  
the  first element in the priority sequence for $x$
and $\tail{\envmap(x)}$ to denote the rest of the sequence. That is, $ \envmap(x) = \fst{\envmap(x)} \cdot \tail{\envmap(x)}$.

Given a priority map $\envmap = \{(x_i:T_i) \mapsto \priority{\bar{\pi}_i}\}_{i\in I}$, the \emph{priority split}
$\envmap = \envmap_1 \circ \envmap_2$ is defined if $\envmap_1 = \{(x_i:T_i) \mapsto \priority{\bar{\pi}_i^1}\}_{i\in I}$
and $\envmap_2 = \{(x_i:T_i) \mapsto \priority{\bar{\pi}_i^2}\}_{i\in I}$ and  $\priority{\bar{\pi}_i}=\priority{\bar{\pi}_i^1} \cdot \priority{\bar{\pi}_i^2}$ for each $i\in I$. Hence, 
$\priority{\bar{\pi}_i}$ is the   sequence that results from appending $\priority{\bar{\pi}_i^2}$ to $\priority{\bar{\pi}_i^1}$.
We denote by  $\left.\envmap\right|_{\Gamma}$ the restriction of priority map $\envmap$ to the typing context $\Gamma$.
%

We work modulo $\alpha$-conversion of bound names ($\lambda$-binders and restricted channel endpoints), 
applying the same renaming to $\envmap$ whenever applicable.

\paragraph{Priority of a Type.}
To statically check the order of operations, we must determine a type's priority. 
Remarkably, 
priorities of polymorphic types heavily depend on the particular \emph{instances} 
under consideration: for example, the session type
\lstinline|Stream|  in~\cref{lst:stream} is instantiated with two 
different priority sequences   in~\cref{lst:main}. Whereas 
\lstinline|r1| in~\cref{lst:main} has \emph{upcoming priority} 1, 
\lstinline|r2|'s next priority is 2, according to the priorities predefined
in Lines 11 and 12. For this reason, we must  
compute the priorities of types according to the typing contexts.

The priority of a type $T$, defined in \Cref{fig:newpriority-types}, depends on its leading operation. 
To expose it, we use the \emph{unravel function}~\cite{DBLP:journals/tcs/CostaMPV24}, which
unfolds a type appropriately, adapting the behavior of type abstractions to priority abstractions. 
The unravel of $T$, denoted  $\unr{T}$, is such that  $\unr{\tseq{\tskip}{\tout{S}{\pr}}} = \tout{S}{\pr}$,
$\unr{\tpolyp{F}{\prvar}{\intervalvar}{U}} = \tpolyp{F}{\prvar}{\intervalvar}{U}$
and $\unr{\tseq{(\tpolyp{S}{\prvar}{\intervalvar}{S_1})}{S_2}} = \tpolyp{S}{\prvar}{\intervalvar}{(\tseq{S_1}{S_2})}$, 
assuming that $\prvar$ does not occur in $S_2$.
Since recursive
types are required to be contractive
(cf. Rule \textsf{F-Rec} in~\cref{ap:semantics}),  
the function $\keyword{unr}$ terminates and never returns a recursive type.

Remarkably, the priority of a type 
is a set of integers. Typically, the set $\nuLowestpriorityEnv{T}{}$ is a singleton, unless $T$ is polymorphic. 
When $T$ is a polymorphic session type, $\nuLowestpriorityEnv{T}{}$ contains the elements of the priority sequence associated with $T$.
On the other hand, when $T$ represents a priority abstraction, $\nuLowestpriorityEnv{T}{}$ consists of 
a set of integers that can be instantiated to provide a concrete priority.
Formally, $\nuLowestpriorityEnv{T}{}$ relies on contexts $\envmap$, $\envpoly$, $\envpr$, and $\envtvars$. 
However, to lighten up notation, we shall write  $\nuLowestpriorityEnv{T}{\mathit{cond}}$ to denote the priority of $T$, where $\mathit{cond}$ specifies the least condition required from $\envmap$, $\envpoly$, $\envpr$, and $\envtvars$ (which are inferred from the typing rule where the definition is used).
We write $\nuLowestpriorityEnv{T}{}$ when no specific condition is used to compute the priority of~$T$.
Also, we write $\nuLowestpriorityEnv{\Gamma}{}$ for the expected extension to typing contexts $\Gamma$, enabled by $\prlub{\cdot}{\cdot}$.
With a little abuse of notation, we write for instance `$\pr < \nuLowestpriorityEnv{T}{}$'  
and `$ \nuLowestpriorityEnv{T}{} < \prtop$'
to mean that  for every $\priority{\pr_i} \in \nuLowestpriorityEnv{T}{}$, $\pr < \priority{\pr_i}$
and $\priority{\pr_i} < \prtop$, respectively.
Similarly, we may write `$  \prlub{{\nuLowestpriorityEnv{T_1}{}}}{{\nuLowestpriorityEnv{T_2}{}}}$' instead of   `$\prlub{\min{\nuLowestpriorityEnv{T_1}{}}}{\min{\nuLowestpriorityEnv{T_2}{}}}$'. Finally, we write $\pr$ instead of the singleton $\{\pr\}$ when clear from the context.

When the leading operation is a message exchange or a choice, the type 
inherits the priority of that operation; if it is a function, the priority 
of the type is the lower priority bound in the arrow. In either case, the priority is a singleton. The priority of a product type $\tprod{T_1}{T_2}$ is 
the minimum between the least elements of the sets
$\nuLowestpriorityEnv{T_1}{}$ and $\nuLowestpriorityEnv{T_2}{}$.

The priority of a polymorphic type of the form 
$\tpolyp{S}{\prvar}{\intervalvar}{U}$ is 
a set containing the values in an associated priority sequence; the definition 
depends on the particular instances in the context. Because different instances of the same session type
can result in different priority sequences, it is clear that the leading operation 
can occur with different priorities in distinct instances. 
When an instance of the type governs channel $x$, the priority of the type results from replacing the priority variable with the  priority sequence for $x$.
For functional types, we keep track 
of priority restrictions, updating the environment $\envpr$. For type abstractions, 
we update the environment for polymorphic variables $\envpoly$ with the
respective restriction. 
To compute the priority of a type variable $\gamma$, we check whether its associated $\pr$ (obtained via $\envpoly$) contains free priority variables: if $\fpv{\pr} = \emptyset$ then the priority of $\gamma$ is $\{\pr\}$;
 otherwise, we obtain the values of the variables in $\fpv{\pr}$ using $\envpr$. 
 In the figure, $ \Theta(\pr)$ represents $\Theta(\prvar)$ if $\pr = \prvar$ and the displacement of $\Theta(\priority{\pr_0})$ by $\priority{n}$ 
 when $\pr = \priority{\pr_0 + n}$.
In the remaining cases, the priority of a type is $\prtop$.

\begin{figure}[!t]
  \[ 
  \nuLowestpriorityEnv{T}{}= 
  \begin{cases} 
    \priority{n} \,&  \text{ if } \unr{T} = \tout{U}{n}, \tin{U}{n}, \textchoice{\ell}{S_\ell}{\ell\in L}{n}, 
    \tintchoice{\ell}{S_\ell}{\ell\in L}{n}, 
    \\
    & \qquad \tclose{n}, \twait{n},\tarrow{T_1}{T_2}{n}{\sigma}{m}, \tseq{\tout{U}{n}}{S} \text{ or } \tseq{\tin{U}{n}}{S}
    \\
    \prlub{{\nuLowestpriorityEnv{T_1}{}}}{{\nuLowestpriorityEnv{T_2}{}}} & \text{ if }\unr{T} = \tprod{T_1}{T_2} 

    \\
    \bigcup_{x\in \envmap(T)}\bigcup_{\pr\in \envmap(x:T)}\{\nuLowestpriorityEnv{\subs{U}{\pr}{\prvar}}{}\} & \text{ if } \unr{T} = \tpolyp{S}{\prvar}{\intervalvar}{U} 
    \\
    \nuLowestpriorityEnv{U}{\prvar\in\intervalvar} & \text{ if } \unr{T} = \tpolyp{F}{\prvar}{\intervalvar}{U}\\
    \nuLowestpriorityEnv{T}{\gamma :: \pr} & \text{ if } \unr{T} = \tpolyt{\gamma}{\pr}{T}{}
    \\
    \pr & \text{ if } \unr{T} = \gamma \text{ and } (\gamma :: \pr) \in \envpoly \text{ and } \fpv{\pr} = \emptyset
    \\
    \Theta(\pr) & \text{ if } \unr{T} = \gamma \text{ and } (\gamma :: \pr) \in \envpoly 
    \\
    \prtop & \text{ otherwise}
  \end{cases}
  \]
  \caption{Priority of types.}
  \label{fig:newpriority-types}
\end{figure}

\paragraph{Context Formation and Split.}
Context formation and context split are standard; see~\cref{ap:semantics} for their definition.
Context split handles the distribution 
of bindings depending on their priority, recalling that types with priority 
$\prtop$ are  {unrestricted}. The typing environments needed to determine the 
priority of $T$ are prescribed by the typing rules, where context split is invoked
(see~\cref{fig:typing-rules-expr}).

\begin{figure}[!t]
  \begin{align*}
    \esend & :  
    \tpolyp {}{\priority{\prvar_1}}
            {\intervaloc{\prbot}{\prtop}}
            {\tpolyt {\tvar} {\prvar_1} 
                     {\tpolyp{}{\priority{\prvar_2}}
                             {\intervaloo{\prbot}{\prvar_1}}
                             {\tpolyp{}{\prvar_3}{\intervaloc{\prvar_2}{\prtop}}{\tpolyt {\svar} 
                                       {\prvar_3}
                                       {\tarrow{\tarrow{\tvar}{\tseq{\tout{\tvar}{\prvar_2}}{\svar}}{\prtop}{\prbot}{\unmult}}{\svar}{\top}{\prvar_2}{\linmult}}{}
                                       }}
                              }
                      {?}}
                     \\
    \ereceive & : 
    \tpolyp {}{\prvar_1}
            {\intervaloc{\prbot}{\prtop}}
            {\tpolyt {\tvar}
                     {\prvar_1}
                     {\tpolyp {}{\prvar_2}
                              {\intervaloo{\prbot}{\prvar_1}}
                              {\tpolyp{}{\prvar_3}{\intervaloc{\prvar_2}{\prtop}}{\tpolyt {\svar}
                                       {\prvar_3}
                                       {\tarrow{\tseq{\tin{\tvar}{\prvar_2}}{\svar}}{\tprod{\tvar}{\svar}}{\top}{\prvar_2}{\unmult}}
                                       {?}}}
                              }
                      {?}}
    \\
                \efork & :  
    \tpolyp {}{\priority{\prvar_1}}
            {\intervaloc{\prbot}{\prtop}}
            {\tpolyp {}{\priority{\prvar_2}}
                     {\intervalcc{\prbot}{\prtop}}
                     {\tarrow{(\tarrow{\tunit}{\tunit}{\prvar_1}{\prvar_2}{\unmult})}{\tunit}{\prtop}{\prbot}{\unmult}}
                     } 
            \\
    \efix & : \tpolyp{}{\prvar}{\intervalcc{\prbot}{\prtop}}{\tpolyt{\tvar}{\prvar}{\tarrow{(\tarrow{(\tarrow{\tvar}{\tvar}{\prtop}{\prbot}{\unmult})}{(\tarrow{\tvar}{\tvar}{\prtop}{\prbot}{\unmult})}{\prtop}{\prbot}{\unmult})}{(\tarrow{\tvar}{\tvar}{\prtop}{\prbot}{\unmult})}{\prtop}{\prbot}{\unmult}}{}} 
    \\
    \eclose & :  \tpolyp{}{\prvar}{\intervaloc{\prbot}{\prtop}}{\tarrow{\tclose{\prvar}}{\tunit}{\prtop}{\prvar}{\unmult}} 
    \qquad \quad 
    \ewait  :  \tpolyp{}{\prvar}{\intervaloc{\prbot}{\prtop}}{\tarrow{\twait{\prvar}}{\tunit}{\prtop}{\prvar}{\unmult}} 
    \qquad \quad 
    \eunit  :  \tunit 
  \end{align*}
      \vspace{-6mm}
  \caption{Types for constants.}
  \label{fig:types-constants}
    \vspace{-2mm}
\end{figure}

Types for constants are defined in
\cref{fig:types-constants}. By upgrading 
priorities to the level of types, we can give proper (functional) types to all constants, 
instead of relying on type schemes~\cite{DBLP:journals/lmcs/KokkeD23,padovani:hal-00954364}.
Following Padovani and Novara~\cite{padovani:hal-00954364}, 
our priority binders control the order of operations  
in a type. 
A representative case is $\esend$.
This type first binds the priority 
$\priority{\prvar_1}$ of the payload, then requires the priority of the 
send operation to be lower (i.e., more urgent) than that of the payload 
and, finally, the lowest priority of the continuation channel should be 
higher than that of the send operation. Once every type and priority variables are instantiated, function $\esend$ expects an 
input value to be sent, followed by a channel of type 
$\tseq{\tout{\tvar}{\prvar_2}}{\svar}$, and returns the continuation channel 
$\svar$. It has latent effect $\priority{\prvar_2}$, meaning that as soon 
as function $\esend$ is fully applied, it triggers operations with 
priority at most $\priority{\prvar_2}$. 

The types for constants should clarify the different forms of values in~\cref{fig:terms}. Consider the type of $\esend$: we can apply it to a priority and get a partial application 
$\eprapp{\esend}{\priority{\prsigma_1}}$, or further make a type application and get 
$\etapp{\eprapp{\esend}{\priority{\prsigma_1}}}{T}$. Then, we can do one or two 
priority applications again, reaching $\eprapp{\etapp{\eprapp{\esend}{\priority{\prsigma_1}}}{T}}{\priority{\prsigma_2}}$ 
and $\eprapp{\eprapp{\etapp{\eprapp{\esend}{\priority{\prsigma_1}}}{T}}{\priority{\prsigma_2}}}{\priority{\prsigma_3}}$. At 
this point, we are left with another type application yielding  
$\etapp{\eprapp{\eprapp{\etapp{\eprapp{\esend}{\priority{\prsigma_1}}}{T}}{\priority{\prsigma_2}}}{\priority{\prsigma_3}}}{S}$,  
which has type $\tarrow{\tarrow{T}{\tseq{\tout{T}{\prsigma_2}}{S}}{\prtop}{\prbot}{\unmult}}{S}{\top}{\prsigma_2}{\linmult}$. 
Before we get $\esend$ fully applied, we can do a term-level application of the 
previous expression to a value $v$ and get the expression 
$\eapp{\etapp{\eprapp{\eprapp{\etapp{\eprapp{\esend}{\priority{\prsigma_1}}}{T}}{\priority{\prsigma_2}}}{\priority{\prsigma_3}}}{S}}{v}$, 
which is still   a value.
For the other constants, the reasoning is similar.

\paragraph{Typing Rules.}
The typing judgement  $\typingpoly{\envpoly}{\envtvars}{\envpr}{\envmap}{e}{T}{\bar{\pi}}{\pr}{\oracons}$ means
that an expression $e$ has type $T$ and latent effect $\pr$ under the typing environments $\envpoly, \envtvars$, $\envpr$, and $\envmap$.
%
The evaluation of expression $e$ might yield a sequence of priorities $\priority{\bar{\pi}}$ 
which predetermines the possible instantiations of polymorphic channels---this annotation is useful
for creating new channels. The predefined priorities are mapped to the corresponding channel names
and stored in $\envmap$.

\begin{figure}
  \begin{mathpar}
    \infer[\textsf{T-Const}]
      {\wfpolyctx{\envpoly}{\envpr}{\envtvars} \qquad  \nuLowestpriorityEnv{\envtvars}{} = \prtop}
      {\typingpoly{\envpoly}{\envtvars}{\envpr}{\envmap}{c}{\typeof c}{\emptyenv}{\bot}{\oracons}}
    \quad
    \infer[\textsf{T-Var}]
    {\wfpolyctx{\envpoly}{\envpr}{\envtvars} \qquad \nuLowestpriorityEnv{\envtvars}{} = \prtop}
      {\typingpoly{\envpoly}{\envtvars, \gamma : T}{\envpr}{\envmap}{\gamma}{T}{\emptyenv}{\bot}{\oracons}}
    \\
    \infer[\textsf{T-TAbs}]
    {\typingpoly{\envpoly, \gamma :: \pr}{\envtvars}{\envpr}{\envmap}{e}{T}{\emptyenv}{\prsigma}{\oracons} \\\\
      \nuLowestpriorityEnv{\envtvars}{} = \prtop \quad 
      \fpv{\pr}\in \dom{\envpr}
       }
      {\typingpoly{\envpoly}{\envtvars}{\envpr}{\envmap}{\etabs{\gamma}{e}}{\tpolyt{\gamma}{\pr}{T}{\prsigma}}{\emptyenv}{\bot}{\oracons}}
    \quad
    \infer[\textsf{T-PAbs}]
      {\typingpoly{\envpoly}{\envtvars}{\envpr, \prvar\in \intervalvar}{\envmap}{e}{T}{\emptyenv}{\prsigma}{\oracons}\\\\
      \nuLowestpriorityEnv{\envtvars}{} = \prtop
      \quad 
      \fpv{\intervalvar}\in \dom{\envpr}}
      {\typingpoly{\envpoly}{\envtvars}{\envpr}{\envmap}{\eprabs{\prvar}{e}}{\tpolyp{}{\prvar}{\intervalvar}{T}}{\emptyenv}{\bot}{\oracons}}
    \\
    \infer[\textsf{T-TApp}]
      {\typingpoly{\envpoly}{\envtvars}{\envpr}{\envmap}{e}{\tpolyt{\tvar}{\pr}{U}{\prsigma}}{\emptyenv}{\pr_1}{\oracons}
        \\\\
        \istype{\envpoly}{\envpr}{T}
        \quad 
        \pr \in \nuLowestpriorityEnv{T}{} 
        }
      {\typingpoly{\envpoly}{\envtvars}{\envpr}{\envmap}{\etapp {e}{T}}{\subs{U}{T}{\tvar}}{\emptyenv}{\pr_1}{\oracons}}
    \quad
    \infer[\textsf{T-PApp}]
      {\typingpoly{\envpoly}{\envtvars}{\envpr}{\envmap}{e}{\tpolyp{}{\prvar}{\intervalvar}{U}}{\emptyenv}{\pr_1}{\oracons}
        \quad 
        \pr \in \intervalvar}
      {\typingpoly{\envpoly}{\envtvars}{\envpr}{\envmap}{\eprapp {e}{\pr}}{\subs{U}{\pr}{\prvar}}{\emptyenv}{\pr_1}{\oracons}}
    \\
     \infer[\textsf{T-Inst}]
      {\typingpoly{\envpoly}{\envtvars}{\envpr}{\envmap}{x}{\tpolyp{S}{\prvar}{\intervalvar}{S}}{\emptyenv}{\pr_1}{\oracons}
        \quad 
        \pr = \fst{\envmap(x)}
        \quad
        \pr \in \intervalvar
        }
      {\typingpoly{\envpoly}{\envtvars}{\envpr}{\envmap}{\einst{x}{\envmap}}{\subs{S}{\pr}{\prvar}}{\emptyenv}{\pr_1}{\oracons}}
    \\
    \infer[\textsf{T-App}]
      {\typingpoly{\envpoly}{\envtvars_1}{\envpr}{\envmap}{e_1}{\tarrow{T_1}{T_2}{\pr}{\prsigma}{m}}{\emptyenv}{\rho_1}{\oracons}
        \quad
        \typingpoly{\envpoly}{\envtvars_2}{\envpr}{\envmap}{e_2}{T_1}{\emptyenv}{\rho_2}{\oracons}
        \\
        \priority{\rho_1} < \nuLowestpriorityEnv{\envtvars_2}{}
        \quad 
        \priority{\rho_2} < \pr
        \quad 
        }
      {\typingpoly{\envpoly}{\envtvars_1 \circ \envtvars_2}{\envpr}{\envmap}{\eapp {e_1}{e_2}}{T_2}{\emptyenv}{\prglb{\prglb{\pr_1}{\pr_2}}{\prsigma}}{\oracons}}
    \\
      \infer[\textsf{T-Let}]
      {\typingpoly{\envpoly}{\envtvars_1}{\envpr}{\envmap}{e_1}{T}{\emptyenv}{\pr_1}{\oracons}
        \quad 
        \exists \envmap_1 \,.\, \envmap = \envmap_1\circ \envmap_2\text{ and } \left.\envmap\right|_{\envtvars_2} = \left.\envmap_2\right|_{\envtvars_2}
        \\\\
        \typingpoly{\envpoly}{\envtvars_2, x : T}{\envpr}{\envmap_2}{e_2}{S}{\emptyenv}{\pr_2}{\oracons}
        \quad 
        \quad
        \priority{\rho_1} < \nuLowestpriorityEnv{\envtvars_2}{}
      }
      {\typingpoly{\envpoly}{\envtvars_1\circ \envtvars_2}{\envpr}{\envmap}{\elet{x}{e_1}{e_2}}{S}{\emptyenv}{\prglb{\pr_1}{\pr_2}}{\oracons}}
    \\
     \infer[\textsf{T-LetPair}]
      {\typingpoly{\envpoly}{\envtvars_1}{\envpr}{\envmap}{e_1}{\tprod{T_1}{T_2}}{\bar{\pi}}{\pr_1}{\oracons}
      \quad 
        \exists \envmap_1 \,.\, \envmap = \envmap_1\circ \envmap_2\text{ and } \left.\envmap\right|_{\envtvars_2} = \left.\envmap_2\right|_{\envtvars_2}
        \\\\
        \text{if } \priority{\bar{\pi}} \neq \priority{\emptyenv} \text{ then }
        \envmap_2' = \envmap_2,x_1:T_1 \mapsto \priority{\bar{\pi}}, x_2:T_2 \mapsto \priority{\bar{\pi}}
        \text{ else } \envmap_2' = \envmap_2
        \\\\
        \typingpoly{\envpoly}{\envtvars_2, x_1 : T_1, x_2 : T_2}{\envpr}{\envmap_2'}{e_2}{S}{\emptyenv}{\pr_2}{\oracons}
        \quad
        \priority{\rho_1} < \nuLowestpriorityEnv{\envtvars_2}{}
      }
      {\typingpoly{\envpoly}{\envtvars_1\circ \envtvars_2}{\envpr}{\envmap}{\eletpair{x_1}{x_2}{e_1}{e_2}}{S}{\emptyenv}{\prglb{\pr_1}{\pr_2}}{\oracons}}

    \infer[\textsf{T-Sel}]
      {\wfpolyctx{\envpoly}{\envpr}{\envtvars}
        \quad
        \istype{\envpoly}{\envpr}{S_\ell}
        \quad 
        \pr < \nuLowestpriorityEnv{S_\ell}{}
        \enspace
        (\forall \ell\in L)
        \quad
        k \in L
        \quad 
        \nuLowestpriorityEnv{\envtvars}{} = \prtop}
      {\typingpoly{\envpoly}{\envtvars}{\envpr}{\envmap}{\eselect{k}}{\tarrow{\tintchoice{\ell}{S_\ell}{\ell\in L}{\pr}}{S_k}{\prtop}{\glb_{\ell\in L}\nuLowestpriorityEnv{S_\ell}{}}{m}}{\emptyenv}{\bot}{\oracons}} 
      \\
    \infer[\textsf{T-Match}]
      { \typingpoly{\envpoly}{\envtvars_1}{\envpr}{\envmap_1}{e}{\textchoice{\ell}{S_\ell}{\ell\in L}{\pr}}{\emptyenv}{\priority{\pr_1}}{\oracons}
        \quad
        \priority{\pr_1} < \nuLowestpriorityEnv{\envtvars_2}{}
        \\
        \typingpoly{\envpoly}{\envtvars_2}{\envpr}{\envmap_2}{e_\ell}{\tarrow{S_\ell}{T}{\nuLowestpriorityEnv{\envtvars_2}{}}{\pr_2}{\linmult}}{\emptyenv}{\prbot}{\oracons}
      }
      {\typingpoly{\envpoly}{\envtvars_1 \circ \envtvars_2}{\envpr}{\envmap_1 \circ \envmap_2}{\ematch{e}{\ell}{e_\ell}{\ell\in L}}{T}{\emptyenv}{\prglb{\prglb{\pr_1}{\pr_2}}{\pr}}{\oracons}}
      \\
    \infer[\textsf{T-NewPoly}]
      {\wfpolyctx{\envpoly}{\envpr}{\envtvars}
      \\
      \istype{\envpoly}{\envpr}{\EuScript{S}^{\fforall}}
      \\
      \nuLowestpriorityEnv{\envtvars}{} = \prtop
      \\ 
      \priority{\bar{\pi}} = [\priority{n_1} + k \cdot \priority{n_2} \mid k\in\mathbb{N}]
      }
      {\typingpoly{\envpoly}{\envtvars}{\envpr}{\envmap}{\enewpoly{\EuScript{S}^\fforall}{n_1}{n_2}}{\tprod{\EuScript{S}^\fforall}{\dualof{\EuScript{S}^\fforall}}}{\bar{\pi}}{\bot}{\oracons}}
    \\
    \infer[\textsf{T-New}]
      {\wfpolyctx{\envpoly}{\envpr}{\envtvars}
      \\
      \istype{\envpoly}{\envpr}{\EuScript{S}}
      \\
      \nuLowestpriorityEnv{\envtvars}{} = \prtop
      \\ 
      \unr{\EuScript{S}} \neq \tskip
      }
      {\typingpoly{\envpoly}{\envtvars}{\envpr}{\envmap}{\enew{\EuScript{S}}}{\tprod{\EuScript{S}}{\dualof{\EuScript{S}}}}{\emptyenv}{\bot}{\oracons}}
  \end{mathpar}
  \caption{Typing rules for expressions (excerpt).}
  \label{fig:typing-rules-expr}
\end{figure}

\cref{fig:typing-rules-expr} presents selected typing rules for expressions and 
\cref{fig:typing-rules-proc} presents the typing rules for configurations, which 
result from incorporating the priority-based approach to 
deadlock freedom~\cite{DBLP:journals/lmcs/KokkeD23,padovani:hal-00954364,DBLP:conf/forte/PadovaniN15}
in the type system of context-free session types~\cite{DBLP:conf/tacas/AlmeidaMV20,DBLP:conf/icfp/ThiemannV16}. 
Now we discuss the typing rules of \cref{fig:typing-rules-expr}; 
the complete set of typing rules (together with additional discussion) 
is given in \cref{ap:semantics}.

Constants are 
typed through \textsf{T-Const} according to the types 
 in~\cref{fig:types-constants}; they have no 
priority effect ($\prbot$), do not yield any priority sequence  
($\priority{\bar{\pi}}=\emptyenv$), and assume that the context 
$\envtvars$ has priority $\prtop$, i.e., it is 
\emph{unrestricted}. Rule \textsf{T-Var} checks 
that variables are typed under an unrestricted environment, 
with no effects. 

We have abstractions on types, priorities, and terms.
Rule \textsf{T-TAbs} states that a type 
abstraction $\etabs{\gamma}{\e}$ has type $\tpolyt{\gamma}{\pr}{T}{}$ with no effect if $\e$ has type $T$ under the extended 
polymorphic environment $\envpoly, \gamma :: \pr$, assuming 
that the free priority variables occurring in $\pr$ are 
defined in $\envpr$ and that   $\envtvars$ is 
unrestricted, because we do not want to abstract over linear values. 
Rule \textsf{T-PAbs} is similar with the expected adaptation to priority abstractions.
Rules \textsf{T-AbsLin} and \textsf{T-AbsUn} consider linear and unrestricted functions, which are 
 presented in~\cref{ap:semantics}. 

Rule \textsf{T-TApp} types the application $\etapp {e}{T}$ requiring that the type variable $\tvar$ is associated to a priority  
$\pr \in \nuLowestpriorityEnv{T}{}$, interpreted under environment $\envpr$. 
Rule \textsf{T-PApp}, distinctive of our work, is applicable to functional and session types
(recalling that the ins\-tan\-tia\-tion of priorities in sessions is only a runtime value),
and states that a priority $\pr$ can instantiate 
$\prvar$ if $\pr$ belongs to the priority interval assigned to 
$\prvar$;  all the occurrences of $\prvar$ in $U$
are updated with the new value $\pr$. 
Rule \textsf{T-Inst} governs channel instantiation: it takes the first priority $\pr$ in the 
  sequence for $x$ ($\pr=\fst{\envmap(x)}$) and instantiates the 
continuation type (similarly to rule \textsf{T-PApp}).  

Rule \textsf{T-App} is standard for \emph{call-by-value} evaluation~\cite
{padovani:hal-00954364}: it requires $e_1$ to have a 
\emph{more urgent} priority than the typing context for $e_2$ and the effect of $e_2$ to be \emph{more 
urgent} than the priority of the type of $e_1$ (which has reduced to a value by then). 
The typing environment is split to properly distribute channels, 
but the priority map is passed to both premises to ensure that the 
priority sequence is available for instantiation.
Unlike $\envtvars$, the priority map $\envmap$ is not a linear resource: it is only 
consulted by $\mathsf{inst}^\envmap$ (and updated by `popping' in sequencing rules). 

Rule \textsf{T-Let} uses the full map $\envmap$ in the first premise to enable priority
instantiation (e.g., reasoning about $\mathsf{next}^k$), whereas the second premise is 
checked under an updated map $\envmap_2$ that reflects the priority information remaining 
after evaluating $\e_1$ (as captured by $\envmap_1$ in $\envmap=\envmap_1\circ\envmap_2$).
This rule also checks that the channels used for evaluating 
$\e_1$ have lower priority than those occurring in $\e_2$. 
This constraint enforces an increasing order of priorities between 
successive operations, thus complementing the type formation rules, as observed in~\cref{sec:types}.

Rule \textsf{T-LetPair} checks that the effect of $e_1$ 
is lower than the lowest priority of the typing environment for $\e_2$.
This rule is also responsible for updating the map of priorities when the 
priority sequence emerging from the evaluation of $e_1$ is not empty 
(that is, when the rule preceding \textsf{T-LetPair} is \textsf{T-NewPoly}, see below).
More specifically, when the priority sequence $\priority{\bar{\pi}}$ is not empty, 
$e_2$ is typed under the context with a priority extended with 
a mapping of the new channel endpoints ($x_1$ and $x_2$) to the corresponding priority sequence. 

Rule 
\textsf{T-Sel} assigns no effect ($\prbot$) to the evaluation of 
$\eselect{k}$, requires the priority of the choice operation to be 
lower than that of the continuation types and also that the typing 
environment has no channel left, i.e., has priority $\prtop$. Rule 
\textsf{T-Match} treats $\e_\ell$ as if it were a function (see \textsf
{T-AbsLin} in~\cref{ap:semantics}) and, according to the call-by-value evaluation strategy, 
requires the evaluation of $\e$ to be more urgent than that of 
$\e_\ell$; the distribution of channel names by the priority maps follows 
the usual rules for splitting linear contexts.

Typing rules for channel creation distinguish session types with a priority 
binder (denoted by $\EuScript{S}^\fforall$) from session types without (denoted by $\EuScript{S}$). Channels of the first 
kind must be assigned a  priority sequence. 
In rule \textsf{T-NewPoly}, the primitive $\enewk$ receives a session type with a priority 
abstraction $\EuScript{S}^\fforall$ 
that indicates the type of the 
channel being created and 
{two natural numbers ($\priority{n_1}$ and $\priority{n_2}$) representing 
the start and increment of its priority sequence. 
The rule defines a sequence starting at $\priority{n_1}$ and with increment $\priority{n_2}$, which is passed as a side 
effect through the first annotation $\priority{\bar{\pi}}$ (that should be eventually mapped to the 
corresponding endpoints and stored in $\envmap$, cf. rule \textsf{T-LetPair}).
Rule \textsf{T-New} requires the typing environment to be un\-re\-strict\-ed
and type $\EuScript{S}$ to be well-formed and such that $\unr{\EuScript{S}} \neq \tskip$, to 
ensure that there are some pending operations. 


\begin{figure}[!t]
    \begin{mathpar}
    \infer[\textsf{{C-Child}}]
      {\typingpoly{\emptyenv}{\envtvars}{\emptyenv}{\Psi}{e}{\tunit}{\bar{\pi}}{\pr}{\oracons}
      }
      {\runtimetyping{\Gamma}{\confthread{\childthread}{e}}{\childthread}{\Psi}}
    \qquad
    \infer[\textsf{{C-Main}}]
      {\typingpoly{\emptyenv}{\envtvars}{\emptyenv}{\Psi}{e}{T}{\bar{\pi}}{\pr}{\oracons}
      \quad 
      \nuLowestpriorityEnv{T}{} = \prtop
      }
      {\runtimetyping{\Gamma}{\confthread{\mainthread}{e}}{\mainthread}{\Psi}}
    \\
    \infer[\textsf{{C-Par}}]
      {\runtimetyping{\Gamma_1}{\config{C}_1}{\thread_1}{\Psi_1} \quad 
      \runtimetyping{\Gamma_2}{\config{C}_2}{\thread_2}{\Psi_2}}
      {\runtimetyping{\Gamma_1 \circ \Gamma_2}{\confpar{\config{C}_1}{\config{C}_2}}{\thread_1+\thread_2}{\Psi_1 \circ \Psi_2}}
    \qquad 
      \infer[\textsf{{C-New}}]
      {\runtimetyping{\Gamma, x: S, y: \dualof{S}}{\config{C}}{\thread} {\Psi_2}
      \quad 
      \istype{\emptyenv}{\emptyenv}{S}
      \quad 
      \Psi_2 = 
       \Psi_1,x: S \mapsto \priority{\bar{\pi}}, y: \dualof{S} \mapsto \priority{\bar{\pi}}
      }
      {\runtimetyping{\Gamma}{\confnu{x}{y}{\config{C}}{\bar{\pi}}}{\thread}{\Psi_1}} 
  \end{mathpar}
    \vspace{-2mm}
  \caption{Typing rules for configurations.}
  \label{fig:typing-rules-proc}
  \vspace{-2mm}
\end{figure}

The judgement for configurations  $\runtimetyping{\envtvars}{\config{C}}{\thread}{\Psi}$ means that  $\config{C}$ is well-typed under   $\envtvars$ with priority map $\envmap$.
The typing rules are in 
\cref{fig:typing-rules-proc}. 
Rules \textsf{C-Child} and \textsf{C-Main} distinguish between child and main threads: 
while any expression (closed for priority and polymorphic variables) 
with unrestricted type can run in the main thread, only expressions 
with type $\tunit$ can run in a child thread; the expression is evaluated 
under $\envtvars$ and $\envmap$. 
Rules \textsf{C-Par} splits the context and the priority map, and 
considers a flag obtained as follows:
 $\childthread+\childthread = \childthread$, 
    $\childthread+\;\mainthread = \mainthread$, and
    $\mainthread+\;\childthread = \mainthread$ 
    (no rule for $\mainthread+\;\mainthread$).
Rule \textsf{C-New} types  
$\confnu{x}{y}{\config{C}}{\bar{\pi}}$: it 
introduces two dual endpoints of a channel in $\envtvars$ and extends the map $\envmap$ for $x$ and $y$ with the sequence $\priority{\bar{\pi}}$.

\section{Dynamics: Type Preservation and Deadlock Freedom} 
\label{sec:safety}
We define reduction
for  expressions and 
configurations  considering a call-by-value evaluation 
strategy.
Reduction rules for expressions are standard for a call-by-value setting; see~\cref{ap:safety}.
Novelties are presented in~\cref{fig:reduction}. A new reduction rule for 
channel instantiations (\textsf{E-PInst}) and a
$\beta$-reduction rule for 
priority-level abstractions (\textsf{E-PApp}). 
Evaluation 
contexts also reflect the left-to-right evaluation strategy and 
are defined by the grammar in the top-left of~\cref{fig:reduction}. 

Reduction includes
rules to lift reductions from expressions to configurations 
(Rule \textsf{R-LiftE}) and from configurations to configuration contexts
(Rule \textsf{R-LiftC}). 
Rule \textsf{R-Fork} creates a new \emph{child} thread, while rules 
\textsf{R-New} and \textsf{R-NewPoly} create a new channel and bind its 
endpoints and the priority sequence in 
a restriction. 
Rules for send/receive, select/match, and close/wait and for configuration reduction 
are standard, and rely on a notion of structural congruence (\cref{ap:safety}).  

\begin{figure}[t!]
  \begin{minipage}{.65\textwidth}
    \declrel{Evaluation contexts}\smallskip\\
    $\evalctx \grmeq \evalhole{} \grmor \eapp{\evalctx}{e} \grmor \eapp{v}{\evalctx} 
    \grmor \etapp{\evalctx}{T} \grmor \eprapp{\evalctx}{\pr}
    \grmor \elet{\evar}{\evalctx}{e} \smallskip\\
    \grmor \eletpair{\evar_1}{\evar_2}{\evalctx}{e}
    \grmor \eseq{\evalctx}{e}$ 
    $  \grmor \epair{\evalctx}{e} 
    \grmor \epair{v}{\evalctx} 
    \grmor \einst{E}{\envmap}\smallskip\\
    \grmor \ematch{\evalctx}{\ell}{e_\ell}{\ell\in L}
    $ \hfill{(Evaluation contexts)}\smallskip\\
    $\threadctx \grmeq \confthread\thread\evalctx$ \hfill{(Thread evaluation contexts)}\smallskip\\
    $\configctx \grmeq \evalhole{} \grmor \confpar{\configctx}{\config{C}} \grmor \confnu{x}{y}{\configctx}{\bar{\pi}}$ \hfill
    {(Configuration contexts)}\smallskip\\
  \end{minipage}
  \begin{minipage}{0.3\textwidth}
    \declrel{Expression reduction}
  \begin{gather*}
    \infer[\textsf{E-PInst}]
      {}
      {\red{\einst{x}{\envmap}}{\eprapp{x}{\priority{\fst{\envmap(x)}}}}}
    \\\\
    \infer[\textsf{E-PApp}]
      {}
      {\red{\eprapp{(\eprabs{\prvar}{v})}{\pr}}{\subs{v}{\pr}{\prvar}}}
  \end{gather*}
  \end{minipage}
  \declrel{Configuration reduction}
  \begin{gather*}
    \infer[\textsf{R-LiftE}]
      {\red{\e_1}{\e_2}}
      {\red{\evalapp{\threadctx}{\e_1}}{\evalapp{\threadctx}{\e_2}}}
    \qquad
    \infer[\textsf{R-LiftC}]
      {\red{\config{C}_1}{\config{C}_2}}
      {\red{\evalapp{\configctx}{\config{C}_1}}{\evalapp{\configctx}{\config{C}_2}}}
    \quad
    \infer[\textsf{R-Fork}]
      {}
      {\red{\evalapp{\threadctx}{\eapp{\eprapp{\eprapp{\efork}{\_}}{\_}}{e}}}
           {\confpar{\evalapp{\threadctx}{\eunit}}{\confthread{\childthread}{\eapp{\e}{\eunit}}}}}
    \\
    \infer[\textsf{R-New}]
      {}
      {\red{\evalapp{\threadctx}{\enew{\EuScript{S}}}}{\confnu{x}{y}{\evalapp{\threadctx}{\epair{x}{y}}}{\emptyenv}}
      \text{\enspace $x, y\not \in \fv{\threadctx}$ and $\unr{\EuScript{S}} \neq \tskip$}}
    \\
    \infer[\textsf{R-NewPoly}]
      {}
      {\red{\evalapp{\threadctx}{\enewpoly{\EuScript{S}^\fforall}{n_1}{n_2}}}{\confnu{x}{y}{\evalapp{\threadctx}{\epair{x}{y}}}{\bar{\pi}}}
      \text{\enspace where $x, y\not \in \fv{\threadctx}$ and $\priority{\bar{\pi}} = [\priority{n_1} + k \cdot \priority{n_2} \mid k\in\mathbb{N}]$}}
  \end{gather*}
  \caption{Evaluation contexts and expression reduction (excerpt).}
  \label{fig:reduction}
\end{figure}

We now state our
type preservation results; their proofs are presented in~\cref{ap:safety}.

\begin{lemma}
  \label{lem:exp-reduction-type-pres}
  If  $\typingpoly{\envpoly}{\envtvars}{\envpr}{\envmap}{e_1}{T}{\bar{\pi}}{\pr}{\oracons}$
  and $\red{e_1}{e_2}$
  then $\typingpoly{\envpoly}{\envtvars}{\envpr}{\envmap}{e_2}{T}{\bar{\pi}}{\pr}{\oracons}$.
\end{lemma}

\begin{theorem}
  \label{lem:subject-red-processes}
  If $\runtimetyping{\envtvars}{\config{C}_1}{\thread}{\Psi}$
  and
   $\red{\config{C}_1}{\config{C}_2}$
  then $\runtimetyping{\envtvars}{\config{C}_2}{\thread}{\Psi}$.
\end{theorem}
We now focus on our goal of proving that well-typed 
(closed) configurations are deadlock-free.
Following Kokke and Dardha~\cite{DBLP:journals/lmcs/KokkeD23}, we rely on some useful notions on expressions.  
An expression \emph{acts on a channel $x$} if it is 
of one of the following forms:
$\eapp{\eapp{\etapp{\eprapp{\etapp{\eprapp{\esend}{\pr}}{T}}{\priority{\sigma}}}{S}}{v}}{x}$, 
or  $\eapp{\etapp{\eprapp{\etapp{\eprapp{\ereceive}{\pr}}{T}}{\priority{\sigma}}}{S}}{x}$, 
or 
 $\eapp{\eprapp{\eclose}{\pr}}{x}$, or  $\eapp{\eprapp{\ewait}{\pr}}{x}$.
In this case, the expression is said to be an \emph{action}.
Actions need to be extended to a notion that considers the 
other forms of \emph{readiness} to communicate.
We thus define an expression  $\e_1$ as \emph{ready} if it 
is of the form $\evalapp{\evalctx}{\e_2}$,
where $\e_2$ is of the form $\enew{S}$, $\enewpoly{\EuScript{S}^\fforall}{n_1}{n_2}$ or 
$\eapp{\eprapp{\eprapp{\efork}{\pr}}{\prsigma}}{\e_3}$ or 
$\e_2$ acts on channel $x$. In the latter case, 
$\e_1$ is said to be \emph{ready to act on $x$}.

We now state  progress  for the term language; see~\cref{ap:safety} for its proof.

\begin{theorem}
  \label{thm:progress}
  If $\typingpoly{\envpoly}{\envtvars}{\envpr}{\envmap}{\e}{T}{\bar{\pi}}{\pr}{\oracons}$ and
  for all $x: T\in \envtvars$, $T$ is a session type, then:
 (i)~$\e$ is a value, or
    (ii)~$\red{\e}{\e'}$ for some $\e'$, or
    (iii)~$\e$ is ready.
\end{theorem}

Following~\cite{DBLP:journals/lmcs/KokkeD23}, we 
say that a configuration $\config{C}$ is in \emph{canonical form}
if it is of the form $\confnu{x_1}{y_1}{\ldots  \confnu{x_n}{y_n}{\confthread{\mainthread}{\e}}{\bar{\pi}_n}}{\bar{\pi}_1}$
or, when it has child threads, it is of the form 
$\confnu{x_1}{y_1}{\ldots  \confnu{x_n}{y_n}{(\confpar{\confthread{\childthread}{e_1}}{\confpar{\ldots}{\confpar{\confthread{\childthread}{\e_m}}{\confthread{\mainthread}{\e}}}})}{\bar{\pi}_n}}{\bar{\pi}_1}$, 
where no expression $\e_i$ is a value.
Intuitively, a configuration   is \emph{deadlock-free} 
if either it reduces forever or, if it terminates, then
all child threads are able to reduce to unit and the main thread  
reduces to an unrestricted value. More precisely:

\begin{definition}
$\config{C}$ is \emph{deadlock-free} if 
$\config{C} \rightarrow^* \config{C}' \not \rightarrow$ then 
$\config{C}'\structcong \confthread{\mainthread}{v}$, where 
$v$ is such that $\typingpoly{\envpoly}{\envtvars}{\envpr}{\envmap}{v}{T}{\bar{\pi}}{\pr}{\oracons}$ with $\nuLowestpriorityEnv{T}{} = \prtop$.
\end{definition}


We can now establish deadlock freedom (see~\cref{ap:safety} for details): 

\begin{theorem}
  \label{thm:config-deadlock-free}
  Suppose $\runtimetyping{\emptyenv}{\config{C}}{\mainthread}{\emptyenv}$. 
 (1) If $\config{C}$ is in 
  canonical form, 
  then $\red{\config{C}}{\config{D}}$ for some 
  $\config{D}$ or $\config{C}\structcong {\confthread{\mainthread}{v}}$, where $v$ is an unrestricted value.
  (2) $\config{C}$  is deadlock-free.
\end{theorem}

\section{Closing Remarks}

\label{sec:conclusion}

We have extended CFSTs 
for high\-er-order concurrent programs~\cite{DBLP:conf/tacas/AlmeidaMV20,DBLP:conf/icfp/ThiemannV16} 
with mechanisms to enforce 
deadlock freedom~\cite{DBLP:journals/lmcs/KokkeD23,padovani:hal-00954364,DBLP:conf/forte/PadovaniN15}. 
Our work shows that the high expressiveness of CFSTs (which includes polymorphism and non-regular recursion) does not conflict with strong correctness guarantees, as CFSTs can be effectively extended with deadlock-freedom guarantees while still supporting programs not expressible in regular session types, including those featuring cyclic process networks (see \cref{ap:cyclic-scheduler} for examples).
Our framework's design is aligned with FreeST, a programming 
language with CFSTs~\cite{freest}. Future work includes developing ways of relieving programmers from defining correct priority sequences. We believe that inference techniques in prior 
works~\cite{DBLP:journals/mscs/CoppoDYP16,DBLP:journals/iandc/0001L17,DBLP:journals/acta/Kobayashi05} may 
provide a good starting point.

\paragraph{Related Work}
\label{sec:related-work}

The priority-based ap\-proach to deadlock freedom can be traced back to seminal  work by Kobayashi~\cite{DBLP:conf/unu/Kobayashi02,DBLP:conf/concur/Kobayashi06}. The resulting classes of typed, deadlock-free processes can express cyclic process networks and encode typed $\lambda$-calculi and the sequencing behavior distinctive of (standard) session types~\cite{DARDHA2017253,DBLP:conf/unu/Kobayashi02}. 
Curry-Howard correspondences between session types and linear logic induce type systems that elegantly ensure
deadlock freedom~\cite{DBLP:conf/concur/CairesP10,DBLP:conf/icfp/Wadler12}. 
A  limitation  is their support for  tree-like topologies only, which leaves out realistic scenarios with cyclic topologies~\cite{DBLP:journals/jlap/DardhaP22}. 
To address this shortcoming, logic-based type systems have been extended with priorities in 
\cite{DBLP:conf/fossacs/DardhaG18} (for synchronous processes without recursive types) and in \cite{DBLP:journals/corr/abs-2110-00146,DBLP:journals/lmcs/0001024} (for asynchronous processes with recursive types). 

All previously mentioned works concern processes, not programs. 
The difference is relevant, as the non-local analyses enabled by priorities can be more directly performed for processes 
than for programs, where communication arises via functional constructs (cf. \cref{fig:types-constants}). 
To our knowledge, Padovani and Novara~\cite{DBLP:conf/forte/PadovaniN15,padovani:hal-00954364} were the first to adapt the 
type system for processes 
in~\cite{padovani_linear_pi} to the case of higher-order concurrent programs. 
A source of inspiration for our work, they use polymorphism and a form of \emph{regular} recursion that appears  limited to specify 
some examples.
Our proof of deadlock freedom has been also influenced by  Kokke and Dardha's work~\cite{DBLP:journals/lmcs/KokkeD23}, who adapted the approach in~\cite{DBLP:conf/forte/PadovaniN15,padovani:hal-00954364} to the case of regular session types without recursion, which is subsumed by our work.

\paragraph{Acknowledgments}
We are grateful to the anonymous reviewers of the current and previous versions of this paper for their constructive feedback.
The first author was funded by  FCT -- Funda\c{c}\~ao para a Ci\^{e}ncia e a Tecnologia, I.P., 
under the project REACT, 
ref. 2023.13752.PEX, DOI \url{https://doi.org/10.54499/2023.13752.PEX}, and 
under the LASIGE Research Unit, ref. UID/00408/2025, 
DOI \url{https://doi.org/10.54499/UID/00408/2025}.
The second author gratefully acknowledges the support of the Dutch Research Council (NWO) under project  ``\href{https://doi.org/10.61686/FHYZO53064}{Cyclic Structures in Programs and
Proofs}'' (OCENW.XL.23.089).



\bibliographystyle{splncs04}
\bibliography{biblio}

\newpage
\appendix
\section{Auxiliary Rules, Concepts and Results}

Before diving into definitions and auxiliary results, we 
present a motivating example illustrating the non-regular 
recursive nature of CFSTs; we omitted it 
from~\cref{sec:deadlocks} due to space constraints.

\subsection{Motivating Example: Deadlock Freedom and Non-Regular Recursion}
\label{ss:trees}
CFSTs shine with \emph{non-regular} recursion. 
The classic example that capitalises on all their features is exchanging binary 
trees~\cite{DBLP:conf/tacas/AlmeidaMV20,DBLP:conf/icfp/ThiemannV16}.
Consider the type \lstinline|TreeChannel| given by:
\[\trec{\svar}{(\tpolyp{S}{\prvar}{\intervaloo{\bot}{\top}}{\&^{\priority{\prvar}}\{\keyword{LeafC}: \tskip, \keyword{NodeC}: \tseq{\tin{\tint}{\prvar+1}}{\tseq{\svar}{\svar}}\}{}}{})}.\]
A channel governed by \lstinline|TreeChannel| starts by 
instantiating the priority variable $\priority{\prvar}$
and then offers two branches:  one with label 
$\keyword{LeafC}$, which  does nothing (represented 
by type $\tskip$); and another 
with label $\keyword{NodeC}$, which receives an integer 
under priority $\priority{\prvar+1}$ and then receives the 
left subtree followed by the right subtree.  
Consider the implementation of function \lstinline|receiveTree| under 
our priority-based approach to deadlock freedom:
\lstset{firstnumber=1}
\begin{lstlisting}
data Tree = Leaf | Node Int Tree Tree

type TreeChannel = ~forallp i belongsTo (bot,top) => ~
    &~i~{LeafC: Skip, 
       NodeC: ?~(i+1)~Int; TreeChannel; TreeChannel}

receiveTree : ~forallp p belongsTo (bot, top) =>~ forall a :: ~p~ =>
             TreeChannel; a ->~[top,top]~ (Tree, a)
receiveTree c =
  match (inst c) with {
    LeafC c ->
      (Leaf, c),
    NodeC c ->
      let (x, c) = receive c in
      let (left, c) = receiveTree~{next c}~ @(TreeChannel;a) c in
      let (right, c) = receiveTree~{next c}~ @a c in
      (Node x left right, c)
  }
\end{lstlisting}
\lstset{firstnumber=1}

The signature of \lstinline|receiveTree| states that it receives a channel of type \lstinline|TreeChannel; a|, where 
\lstinline|a| is a \emph{type variable} that represents a type
with priority \lstinline|~p belongsTo (bot, top)~|.
This type variable is instantiated in a function call 
through the \lstinline|@| operator (Lines 15 and 16).
Finally, it should return a pair with a reconstruction of the
\lstinline|Tree| and the continuation of the input channel. 
Since channel \lstinline|c| is polymorphic, the function
\lstinline|receiveTree| starts by instantiating its priority
variable (\lstinline|inst c| in Line 10).
The \lstinline|match| expression handles the 
choices on the instantiated channel \lstinline|c|: for \lstinline|LeafC| it 
returns a pair with a \lstinline|Leaf| and the 
continuation of the channel; for \lstinline|NodeC|, the 
function  receives the integer on channel
\lstinline|c| and then receives the left and right subtrees. 

Notice that the recursive calls apply polymorphic recursion 
on priorities \emph{and} types. Both recursive calls
instantiate the priority variable \lstinline|~p~| with the 
\lstinline|next| priority assigned to the operations in channel \lstinline|c|;
note that channel \lstinline|c| evolves, so \lstinline|next ~c~| in Line 15
does not have the same value as \lstinline|next ~c~| in Line 16.
Furthermore, in the first call (Line 15), 
the  {type variable} is updated with the 
type of the continuation of the channel 
(\lstinline|TreeChannel;a|); in the second call 
the reminder of the 
channel has now type \lstinline|a|. Finally, the function
returns a pair composed of the reconstruction of the node
followed by the continuation of the channel.

\subsection{Context-Free Session Types, with Priorities}
\label{ap:types}

\paragraph{Type Formation.}
\begin{figure}[!t]
  \begin{gather*}
    \qquad
    \infer[\textsf{F-Unit}]
      {}
      {\istype{\envpoly}{\envpr}{\tunit}}
    \qquad
    \infer[\textsf{F-Abs}]
      {\istype{\envpoly}{\envpr}{T_1} \quad \istype{\envpoly}{\envpr}{T_2}}
      {\istype{\envpoly}{\envpr}{\tarrow{T_1}{T_2}{\pr_1}{\pr_2}{m}}}
    \qquad
    \infer[\textsf{F-Pair}]
      {\istype{\envpoly}{\envpr}{T_1} \quad \istype{\envpoly}{\envpr}{T_2}}
      {\istype{\envpoly}{\envpr}{\tprod{T_1}{T_2}}}
    \\
    \infer[\textsf{F-Close}]
      {}
      {\istype{\envpoly}{\envpr}{\tclose{\pr}}}
    \qquad
    \infer[\textsf{F-Wait}]
      {}
      {\istype{\envpoly}{\envpr}{\twait{\pr}}}
    \qquad
    \infer[\textsf{F-Skip}]
      {}
      {\istype{\envpoly}{\envpr}{\tskip }}
    \qquad
    \infer[\textsf{F-MsgOut}]
      {\istype{\envpoly}{\envpr}{T} \enspace 
      }
      {\istype{\envpoly}{\envpr}{\tout T \pr }}
    \qquad
    \infer[\textsf{F-MsgIn}]
      {\istype{\envpoly}{\envpr}{T} \enspace 
      }
      {\istype{\envpoly}{\envpr}{\tin T \pr}}
    \\
    \infer[\textsf{F-Seq}]
      {\istype{\envpoly}{\envpr}{S_1} \enspace \istype{\envpoly}{\envpr}{S_2}}
      {\istype{\envpoly}{\envpr}{\tseq{S_1}{S_2}}}
    \qquad
    \infer[\textsf{F-IntChoice}]
      {\istype{\envpoly}{\envpr}{S_\ell }
        \enspace (\forall \ell\in L)}
      {\istype{\envpoly}{\envpr}{\tintchoice \ell {S_\ell} {\ell\in L} \pr}}
    \qquad
    \infer[\textsf{F-ExtChoice}]
      {\istype{\envpoly}{\envpr}{S_\ell }
        \enspace (\forall \ell\in L)}
      {\istype{\envpoly}{\envpr}{\textchoice \ell {S_\ell} {\ell\in L} \pr }}
    \\
    \qquad 
    \infer[\textsf{F-TAbs}]
        {\istype{\envpoly, \gamma :: \pr}{\envpr}{T}
        \quad 
        \fpv{\pr}\in \dom{\envpr}}
        {\istype{\envpoly}{\envpr}{\tpolyt{\gamma}{\pr}{T}{\prsigma}}}
      \qquad
            \infer[\textsf{F-Rec}]
        {\istype{\envpoly, \gamma :: \pr}{\envpr}{T}
        \quad \text{T contractive}
        }
        {\istype{\envpoly}{\envpr}{\trec{\gamma}{T}}}
      \\
      \infer[\textsf{F-PAbsF}]
            {\istype{\envpoly}{\envpr, \priority{\prvar\in \intervalvar}}{T}
            \quad 
            \fpv{\intervalvar}\in \dom{\envpr}}
            {\istype{\envpoly}{\envpr}{\tpolyp{F}{\prvar}{\intervalvar}{T}}}
\quad 
      \infer[\textsf{F-PAbsS}]
            {\istype{\envpoly}{\priority{\prvar\in \intervalvar}}{S}
            \quad 
            \fpv{\intervalvar}\in \dom{\envpr}}
            {\istype{\envpoly}{\emptyenv}{\tpolyp{S}{\prvar}{\intervalvar}{S}}}
      \quad
      \infer[\textsf{F-Var}]
              {\gamma \in \dom{\envpoly}}
              {\istype{\envpoly}{\envpr}{\gamma}}
  \end{gather*}
  \caption{Type formation (complete). The main text includes an excerpt in~\cref{fig:type-formation}.}
  \label{fig:ap:type-formation}
\end{figure}

The judgement  $\istype{\envpoly}{\envpr}{T}$ 
denotes that type $T$ is well-formed under the polymorphic context $\envpoly$
and the priority context $\envpr$.
The rules for type formation are 
defined in~\cref{fig:type-formation}.
%
The type formation rules 
\textsf{F-Unit}, \textsf{F-Abs}, \textsf{F-Pair}, \textsf{F-Close}, \textsf{F-Wait}, 
\textsf{F-Skip}, \textsf{F-MsgOut}, \textsf{F-MsgIn}, \textsf{F-Seq}, \textsf{F-IntChoice}, 
\textsf{F-ExtChoice} and \textsf{F-Var} are standard for CFSTs~\cite{DBLP:conf/icfp/ThiemannV16}.

Rule \textsf{F-TAbs} states that a polymorphic type $\tpolyt{\gamma}{\pr}{T}{}$ 
can only assign a type variable $\gamma$
to a priority $\pr$ if all the free priority variables occurring
in $\pr$ are defined in $\envpr$; furthermore, $T$ should be well-formed
under the extended environment $\envpoly, \gamma::\pr$.
Similarly, rule \textsf{F-PAbsF} states that a priority variable $\prvar$
can only be assigned to $\intervalvar$ through the $\fforall^{\fallstyle{F}}$ binder
if all the free variables occurring in $\intervalvar$ are defined 
in $\envpr$; furthermore, the body of the type should be well-formed 
under the priority environment $\envpr, \prvar\in \intervalvar$. 
Rule \textsf{F-PAbsS}
imposes the same restrictions but requires the priority environment to be empty, leading 
to the condition that the
$\fforall^{\fallstyle{S}}$ binder occurs at most once in a well-formed type---this 
restriction is intended to simplify the representation of 
priority sequences (priority maps are defined in~\Cref{fig:terms}).

Rule \textsf{F-Rec} states that a recursive type $\trec{\gamma}{T}$ is 
well-formed if $T$ is well-formed under the extended context $\envpoly, \gamma :: \pr$
and $T$ is contractive, meaning that $T$ should have no subterms 
of the form $\mu \alpha. \mu \alpha_1. \ldots \mu \alpha_n . \alpha$, considering 
an adaptation of the conventional notion of 
contractivity~\cite{DBLP:journals/acta/GayH05} to account 
for $\tskip$ as the neutral element of  sequential composition. (A rule-based 
definition of contractivity can be found in, e.g.,~\cite{DBLP:conf/tacas/AlmeidaMV20,DBLP:conf/concur/SilvaMV23}.)

\paragraph{Type Duality.}

Duality is paramount to establish \emph{compatible} communication 
between two endpoints of a channel. This notion is defined only 
over session types; the definition is 
standard~\cite{DBLP:conf/tacas/AlmeidaMV20,DBLP:conf/icfp/ThiemannV16}
 and is presented in~\cref{fig:ap:duality}.
Since we are dealing with higher-order sessions,
in the sense that channels can be conveyed into messages, 
the duality of recursive types relies on  
\emph{negative references}, denoted by $\beta^-$, which serve to dualize 
recursive sessions properly~\cite{jlamp}. 
Crucially, duality preserves priorities.
The dual of a priority abstraction 
($\dualof{\tpolyp{}{\svar}{\intervalvar}{S}}$)
is the priority abstraction 
of the dual session type, $\tpolyp{}{\svar}{\intervalvar}{\dualof{S}}$.

\begin{figure}
  \begin{gather*}
    \dualof{\tskip} = \tskip
    \quad
    \dualof{\tseq{S_1}{S_2}} = \tseq{\dualof{S_1}}{\dualof{S_2}}
    \quad 
    \dualof{\tout{T}{\pr}} = \tin{T}{\pr} 
    \quad 
    \dualof{\tclose{\pr}} = \twait{\pr} 
    \quad 
    \dualof{\tin{T}{\pr}} = \tout{T}{\pr}
    \quad
    \dualof{\twait{\pr}} = \tclose{\pr}
    \\ 
    \dualof{\tintchoice{\ell}{S_\ell}{\ell\in L}{\pr}} = \textchoice{\ell}{\dualof{S_\ell}}{\ell\in L}{\pr}
    \quad 
    \dualof{\textchoice{\ell}{S_\ell}{\ell\in L}{\pr}} = \tintchoice{\ell}{\dualof{S_\ell}}{\ell\in L}{\pr}
    \\
    \dualof{\trec{\svar}{S}} = \trec{\svar}{\subs{\dualof{\subs{S}{\svar^-}{\svar}}}{\trec{\svar}{S}}{\svar^-}}
    \quad 
    \dualof{\tpolyp{S}{\svar}{\intervalvar}{S}} = \enspace \tpolyp{S}{\svar}{\intervalvar}{\dualof{S}}
  \end{gather*}
  \caption{Type duality on session types.}
  \label{fig:ap:duality}
\end{figure}

\subsection{Statics: Configurations and Typing Rules}
\label{ap:semantics}

\paragraph{Context Formation and Split.}
\cref{fig:ctx-formation} presents context formation, defined as usual from type formation (\Cref{fig:ap:type-formation}).
\cref{fig:ctx-split} presents context split, which handles the distribution 
of bindings depending on their priority, recalling that types with priority 
$\prtop$ are  {unrestricted}. The typing environments needed to determine the 
priority of $T$ are prescribed by the typing rules, where context split is invoked
(see~\cref{fig:typing-rules-expr}).

\begin{figure}[!t]
  \begin{mathpar}
    \infer[\textsf{Ctx-Emp}]
      {}
      {\wfpolyctx{\envpoly}{\envpr}{\emptyenv}}
    \qquad
    \infer[\textsf{Ctx-NEmp}]
      {\wfpolyctx{\envpoly}{\envpr}{\envtvars} \qquad \istype{\envpoly}{\envpr}{T}}
      {\wfpolyctx{\envpoly}{\envpr}{\envtvars,x:T}}
    \qquad
  \end{mathpar}
    \vspace{-2mm}
  \caption{Context formation rules.}
  \label{fig:ctx-formation}
\end{figure}

\begin{figure}[!t]
  \begin{mathpar}
    \infer[\textsf{S-Emp}]
      {}
      {\ctxsplit{\envpoly}{\envpr}{\emptyenv}{\emptyenv}{\emptyenv}}
    \and 
    \infer[\textsf{S-Top}]
      {\ctxsplit{\envpoly}{\envpr}{\envtvars}{\envtvars_1}{\envtvars_2}
      \qquad \istype{\envpoly}{\envpr}{T}
      \qquad \nuLowestpriorityEnv{T}{} = \prtop}
      {\ctxsplit{\envpoly}{\envpr}{\envtvars,x:T}{(\envtvars_1,x:T)}{(\envtvars_2,x:T)}}
    \\
    \infer[\textsf{S-Left}]
      {\ctxsplit{\envpoly}{\envpr}{\envtvars}{\envtvars_1}{\envtvars_2}
      \qquad \istype{\envpoly}{\envpr}{T}
      \qquad \nuLowestpriorityEnv{T}{} < \prtop}
      {\ctxsplit{\envpoly}{\envpr}{\envtvars,x:T}{(\envtvars_1,x:T)}{\envtvars_2}}
    \\  
    \infer[\textsf{S-Right}]
      {\ctxsplit{\envpoly}{\envpr}{\envtvars}{\envtvars_1}{\envtvars_2}
      \qquad \istype{\envpoly}{\envpr}{T}
      \qquad \nuLowestpriorityEnv{T}{} < \prtop}
      {\ctxsplit{\envpoly}{\envpr}{\envtvars,x:T}{\envtvars_1}{(\envtvars_2,x:T)}}
  \end{mathpar}
  \caption{Context split.}
  \label{fig:ctx-split}
\end{figure}


\paragraph{Typing Rules.}
The typing judgement for expressions  {\[\typingpoly{\envpoly}{\envtvars}{\envpr}{\envmap}{e}{T}{\bar{\pi}}{\pr}{\oracons}\]} means
that an expression $e$ has type $T$ and latent effect $\pr$ under the typing environments $\envpoly, \envtvars$, $\envpr$, and $\envmap$.
%
The evaluation of expression $e$ might yield a sequence of priorities $\priority{\bar{\pi}}$ 
which predetermines the possible instantiations of polymorphic channels---this annotation is useful
for creating new channels. The predefined priorities are mapped to the corresponding channel names
and stored in $\envmap$.

\begin{figure}
  \begin{mathpar}
    \infer[\textsf{T-Const}]
      {\wfpolyctx{\envpoly}{\envpr}{\envtvars} \qquad  \nuLowestpriorityEnv{\envtvars}{} = \prtop}
      {\typingpoly{\envpoly}{\envtvars}{\envpr}{\envmap}{c}{\typeof c}{\emptyenv}{\bot}{\oracons}}
    \quad
    \infer[\textsf{T-Var}]
    {\wfpolyctx{\envpoly}{\envpr}{\envtvars} \qquad \nuLowestpriorityEnv{\envtvars}{} = \prtop}
      {\typingpoly{\envpoly}{\envtvars, \gamma : T}{\envpr}{\envmap}{\gamma}{T}{\emptyenv}{\bot}{\oracons}}
    \\
    \infer[\textsf{T-TAbs}]
    {\typingpoly{\envpoly, \gamma :: \pr}{\envtvars}{\envpr}{\envmap}{e}{T}{\emptyenv}{\prsigma}{\oracons} \\\\
      \nuLowestpriorityEnv{\envtvars}{} = \prtop \quad 
      \fpv{\pr}\in \dom{\envpr}
       }
      {\typingpoly{\envpoly}{\envtvars}{\envpr}{\envmap}{\etabs{\gamma}{e}}{\tpolyt{\gamma}{\pr}{T}{\prsigma}}{\emptyenv}{\bot}{\oracons}}
    \quad
    \infer[\textsf{T-PAbs}]
      {\typingpoly{\envpoly}{\envtvars}{\envpr, \prvar\in \intervalvar}{\envmap}{e}{T}{\emptyenv}{\prsigma}{\oracons}\\\\
      \nuLowestpriorityEnv{\envtvars}{} = \prtop
      \quad 
      \fpv{\intervalvar}\in \dom{\envpr}}
      {\typingpoly{\envpoly}{\envtvars}{\envpr}{\envmap}{\eprabs{\prvar}{e}}{\tpolyp{}{\prvar}{\intervalvar}{T}}{\emptyenv}{\bot}{\oracons}}
    \\
    \infer[\textsf{T-AbsLin}]
      {\typingpoly{\envpoly}{\envtvars, x : T_1}{\envpr}{\envmap}{e}{T_2}{\emptyenv}{\rho}{\oracons}}
      {\typingpoly{\envpoly}{\envtvars}{\envpr}{\envmap}{\eabs x{T_1}e{\linmult}}{\tarrow{T_1}{T_2}{\nuLowestpriorityEnv{\Gamma}{}}{\rho}{\linmult}}{\emptyenv}{\bot}{\oracons}}
    \\
    \infer[\textsf{T-AbsUn}]
      {\typingpoly{\envpoly}{\envtvars, x : T_1}{\envpr}{\envmap}{e}{T_2}{\emptyenv}{\rho}{\oracons}
      \quad 
        \nuLowestpriorityEnv{\envtvars}{} = \prtop}
      {\typingpoly{\envpoly}{\envtvars}{\envpr}{\envmap}{\eabs x{T_1}e{\unmult}}{\tarrow{T_1}{T_2}{\nuLowestpriorityEnv{\Gamma}{}}{\rho}{\unmult}}{\emptyenv}{\bot}{\oracons}}
    \\
    \infer[\textsf{T-TApp}]
      {\typingpoly{\envpoly}{\envtvars}{\envpr}{\envmap}{e}{\tpolyt{\tvar}{\pr}{U}{\prsigma}}{\emptyenv}{\pr_1}{\oracons}
        \\\\
        \istype{\envpoly}{\envpr}{T}
        \quad 
        \pr \in \nuLowestpriorityEnv{T}{} 
        }
      {\typingpoly{\envpoly}{\envtvars}{\envpr}{\envmap}{\etapp {e}{T}}{\subs{U}{T}{\tvar}}{\emptyenv}{\pr_1}{\oracons}}
    \quad
    \infer[\textsf{T-PApp}]
      {\typingpoly{\envpoly}{\envtvars}{\envpr}{\envmap}{e}{\tpolyp{}{\prvar}{\intervalvar}{U}}{\emptyenv}{\pr_1}{\oracons}
        \quad 
        \pr \in \intervalvar}
      {\typingpoly{\envpoly}{\envtvars}{\envpr}{\envmap}{\eprapp {e}{\pr}}{\subs{U}{\pr}{\prvar}}{\emptyenv}{\pr_1}{\oracons}}
    \\
     \infer[\textsf{T-Inst}]
      {\typingpoly{\envpoly}{\envtvars}{\envpr}{\envmap}{x}{\tpolyp{S}{\prvar}{\intervalvar}{S}}{\emptyenv}{\pr_1}{\oracons}
        \quad 
        \pr = \fst{\envmap(x)}
        \quad
        \pr \in \intervalvar
        }
      {\typingpoly{\envpoly}{\envtvars}{\envpr}{\envmap}{\einst{x}{\envmap}}{\subs{S}{\pr}{\prvar}}{\emptyenv}{\pr_1}{\oracons}}
    \\
    \infer[\textsf{T-App}]
      {\typingpoly{\envpoly}{\envtvars_1}{\envpr}{\envmap}{e_1}{\tarrow{T_1}{T_2}{\pr}{\prsigma}{m}}{\emptyenv}{\rho_1}{\oracons}
        \quad
        \typingpoly{\envpoly}{\envtvars_2}{\envpr}{\envmap}{e_2}{T_1}{\emptyenv}{\rho_2}{\oracons}
        \\
        \priority{\rho_1} < \nuLowestpriorityEnv{\envtvars_2}{}
        \quad 
        \priority{\rho_2} < \pr
        \quad 
        }
      {\typingpoly{\envpoly}{\envtvars_1 \circ \envtvars_2}{\envpr}{\envmap}{\eapp {e_1}{e_2}}{T_2}{\emptyenv}{\prglb{\prglb{\pr_1}{\pr_2}}{\prsigma}}{\oracons}}
    \\
      \infer[\textsf{T-Pair}]
      {\typingpoly{\envpoly}{\envtvars_1}{\envpr}{\envmap_1}{e_1}{T_1}{\emptyenv}{\pr_1}{\oracons} \quad
      \typingpoly{\envpoly}{\envtvars_2}{\envpr}{\envmap_2}{e_2}{T_2}{\emptyenv}{\pr_2}{\oracons} \\
      \priority{\rho_1} < \nuLowestpriorityEnv{\envtvars_2}{} \quad
      \priority{\rho_2} < \nuLowestpriorityEnv{T_1}{}  
       }
      {\typingpoly{\envpoly}{\envtvars_1 \circ \envtvars_2}{\envpr}{\envmap_1\circ\envmap_2}{\epair{e_1}{e_2}}{\tprod{T_1}{T_2}}{\emptyenv}{\prglb{\pr_1}{\pr_2}}{\oracons}}
    \\
      \infer[\textsf{T-Let}]
      {\typingpoly{\envpoly}{\envtvars_1}{\envpr}{\envmap}{e_1}{T}{\emptyenv}{\pr_1}{\oracons}
        \quad 
        \exists \envmap_1 \,.\, \envmap = \envmap_1\circ \envmap_2\text{ and } \left.\envmap\right|_{\envtvars_2} = \left.\envmap_2\right|_{\envtvars_2}
        \\\\
        \typingpoly{\envpoly}{\envtvars_2, x : T}{\envpr}{\envmap_2}{e_2}{S}{\emptyenv}{\pr_2}{\oracons}
        \quad 
        \quad
        \priority{\rho_1} < \nuLowestpriorityEnv{\envtvars_2}{}
      }
      {\typingpoly{\envpoly}{\envtvars_1\circ \envtvars_2}{\envpr}{\envmap}{\elet{x}{e_1}{e_2}}{S}{\emptyenv}{\prglb{\pr_1}{\pr_2}}{\oracons}}
    \\
     \infer[\textsf{T-LetPair}]
      {\typingpoly{\envpoly}{\envtvars_1}{\envpr}{\envmap}{e_1}{\tprod{T_1}{T_2}}{\bar{\pi}}{\pr_1}{\oracons}
      \quad 
        \exists \envmap_1 \,.\, \envmap = \envmap_1\circ \envmap_2\text{ and } \left.\envmap\right|_{\envtvars_2} = \left.\envmap_2\right|_{\envtvars_2}
        \\\\
        \text{if } \priority{\bar{\pi}} \neq \priority{\emptyenv} \text{ then }
        \envmap_2' = \envmap_2,x_1:T_1 \mapsto \priority{\bar{\pi}}, x_2:T_2 \mapsto \priority{\bar{\pi}}
        \text{ else } \envmap_2' = \envmap_2
        \\\\
        \typingpoly{\envpoly}{\envtvars_2, x_1 : T_1, x_2 : T_2}{\envpr}{\envmap_2'}{e_2}{S}{\emptyenv}{\pr_2}{\oracons}
        \quad
        \priority{\rho_1} < \nuLowestpriorityEnv{\envtvars_2}{}
      }
      {\typingpoly{\envpoly}{\envtvars_1\circ \envtvars_2}{\envpr}{\envmap}{\eletpair{x_1}{x_2}{e_1}{e_2}}{S}{\emptyenv}{\prglb{\pr_1}{\pr_2}}{\oracons}}
  \end{mathpar}
  \caption{Typing rules for expressions (complete), part 1.}
  \label{fig:ap:typing-rules-expr}
\end{figure}

\begin{figure}[!t]
  \begin{mathpar} 
    \infer[\textsf{T-Sel}]
      {\wfpolyctx{\envpoly}{\envpr}{\envtvars}
        \quad
        \istype{\envpoly}{\envpr}{S_\ell}
        \quad 
        \pr < \nuLowestpriorityEnv{S_\ell}{}
        \enspace
        (\forall \ell\in L)
        \quad
        k \in L
        \quad 
        \nuLowestpriorityEnv{\envtvars}{} = \prtop}
      {\typingpoly{\envpoly}{\envtvars}{\envpr}{\envmap}{\eselect{k}}{\tarrow{\tintchoice{\ell}{S_\ell}{\ell\in L}{\pr}}{S_k}{\prtop}{\glb_{\ell\in L}\nuLowestpriorityEnv{S_\ell}{}}{m}}{\emptyenv}{\bot}{\oracons}} 
      \\
    \infer[\textsf{T-Match}]
      { \typingpoly{\envpoly}{\envtvars_1}{\envpr}{\envmap_1}{e}{\textchoice{\ell}{S_\ell}{\ell\in L}{\pr}}{\emptyenv}{\priority{\pr_1}}{\oracons}
        \quad
        \priority{\pr_1} < \nuLowestpriorityEnv{\envtvars_2}{}
        \\
        \typingpoly{\envpoly}{\envtvars_2}{\envpr}{\envmap_2}{e_\ell}{\tarrow{S_\ell}{T}{\nuLowestpriorityEnv{\envtvars_2}{}}{\pr_2}{\linmult}}{\emptyenv}{\prbot}{\oracons}
      }
      {\typingpoly{\envpoly}{\envtvars_1 \circ \envtvars_2}{\envpr}{\envmap_1 \circ \envmap_2}{\ematch{e}{\ell}{e_\ell}{\ell\in L}}{T}{\emptyenv}{\prglb{\prglb{\pr_1}{\pr_2}}{\pr}}{\oracons}}
      \\
    \infer[\textsf{T-NewPoly}]
      {\wfpolyctx{\envpoly}{\envpr}{\envtvars}
      \\
      \istype{\envpoly}{\envpr}{\EuScript{S}^{\fforall}}
      \\
      \nuLowestpriorityEnv{\envtvars}{} = \prtop
      \\ 
      \priority{\bar{\pi}} = [\priority{n_1} + k \cdot \priority{n_2} \mid k\in\mathbb{N}]
      }
      {\typingpoly{\envpoly}{\envtvars}{\envpr}{\envmap}{\enewpoly{\EuScript{S}^\fforall}{n_1}{n_2}}{\tprod{\EuScript{S}^\fforall}{\dualof{\EuScript{S}^\fforall}}}{\bar{\pi}}{\bot}{\oracons}}
      \\
      \infer[\textsf{T-New}]
      {\wfpolyctx{\envpoly}{\envpr}{\envtvars}
      \\
      \istype{\envpoly}{\envpr}{\EuScript{S}}
      \\
      \nuLowestpriorityEnv{\envtvars}{} = \prtop
      \\ 
      \unr{\EuScript{S}} \neq \tskip
      }
      {\typingpoly{\envpoly}{\envtvars}{\envpr}{\envmap}{\enew{\EuScript{S}}}{\tprod{\EuScript{S}}{\dualof{\EuScript{S}}}}{\emptyenv}{\bot}{\oracons}}
    \\
    \infer[\textsf{T-Eq}]
      {\typingpoly{\envpoly}{\envtvars}{\envpr}{\envmap}{\e}{T_1}{\emptyenv}{\pr}{\oracons}
      \\ 
      \envpoly\mid \envpr \vdash T_1 \simeq T_2 
      }
      {\typingpoly{\envpoly}{\envtvars}{\envpr}{\envmap}{\e}{T_2}{\emptyenv}{\pr}{\oracons}}
  \end{mathpar}
  \caption{Typing rules for expressions (complete), part 2.}
  \label{fig:ap:typing-rules-expr-core}
\end{figure}

\cref{fig:ap:typing-rules-expr,fig:ap:typing-rules-expr-core,fig:typing-rules-proc} present our typing rules for expressions and configurations, which 
result from incorporating the priority-based approach to 
deadlock freedom~\cite{DBLP:journals/lmcs/KokkeD23,padovani:hal-00954364,DBLP:conf/forte/PadovaniN15}
in the type system of context-free session types~\cite{DBLP:conf/tacas/AlmeidaMV20,DBLP:conf/icfp/ThiemannV16}. 
We discuss all the typing rules. 

Constants are 
typed through \textsf{T-Const} according to the types 
assigned in~\cref{fig:types-constants}; they have no 
priority effect ($\prbot$), do not yield any sequence of priorities
($\priority{\bar{\pi}}=\emptyenv$), and assume that the typing context 
$\envtvars$ has priority $\prtop$, i.e., it is 
\emph{unrestricted}. Rule \textsf{T-Var}  requires 
that variables are typed under an unrestricted environment, 
with no effects. 

We have three forms of abstraction: on types, priorities, and terms.
Rule \textsf{T-TAbs} states that a type 
abstraction $\etabs{\gamma}{\e}$ has type $\tpolyt{\gamma}{\pr}{T}{}$ with no effect if $\e$ has type $T$ under the extended 
polymorphic environment $\envpoly, \gamma :: \pr$, assuming 
that the free priority variables occurring in $\pr$ are 
defined in $\envpr$ and that the typing environment $\envtvars$ is 
unrestricted---because we do not want to abstract over linear values. 
Rule \textsf{T-PAbs} is similar with the expected adaptation to priority abstractions.
Rules \textsf{T-AbsLin} and \textsf{T-AbsUn} consider linear and unrestricted functions, respectively; in the latter case, the context $\envtvars$ should be unrestricted. 
In terms of priorities, both rules
state that the 
lower priority bound coincides with the lowest 
priority in the typing context $\envtvars$ and the upper
priority bound is the effect of evaluating the body of 
the function. 

Rule \textsf{T-TApp} types the application $\etapp {e}{T}$ requiring that the type variable $\tvar$ is associated to a priority  
$\pr \in \nuLowestpriorityEnv{T}{}$, interpreted under environment $\envpr$. 
Rule \textsf{T-PApp}, distinctive of our work, is applicable to functional and session types
(recalling that the ins\-tan\-tia\-tion of priorities in sessions is only a runtime value),
and states that a priority $\pr$ can instantiate 
$\prvar$ if $\pr$ belongs to the priority interval assigned to 
$\prvar$;  all the occurrences of $\prvar$ in $U$
are updated with the new value $\pr$. 
Rule \textsf{T-Inst} governs channel instantiation: it takes the first priority $\pr$ in the 
  sequence for $x$ ($\pr=\fst{\envmap(x)}$) and instantiates the 
continuation type (similarly to rule \textsf{T-PApp}).  

Rule \textsf{T-App} is standard 
for \emph{call-by-value} evaluation strategies~\cite
{padovani:hal-00954364}: it requires the first expression to have a 
\emph{more urgent} priority than the typing context for the second 
expression and the effect of the second expression to be \emph{more 
urgent} than the priority of the type of the first expression (which, by then, has reduced to a value). 
The typing environment is split to distribute channels across different parts of the program, 
but the priority map is passed to both premises to ensure that the 
priority sequence is available for priority instantiation.
Unlike $\envtvars$, the priority map $\envmap$ is not a linear resource: it is only 
consulted by $\mathsf{inst}^\envmap$ (and updated by `popping' in sequencing rules). 

Rule \textsf{T-Pair} types $\epair{e_1}{e_2}$ by splitting the typing environments and the priority map; the conditions on $\priority{\rho_1}$ and $ \priority{\rho_2}$ ensure that $e_1$ is evaluated before $e_2$.
Rule \textsf{T-Let} uses the full map $\envmap$ in the first premise to enable priority
instantiation (e.g., reasoning about $\mathsf{next}^k$), whereas the second premise is 
checked under an updated map $\envmap_2$ that reflects the priority information remaining 
after evaluating $\e_1$ (as captured by $\envmap_1$ in the split $\envmap=\envmap_1\circ\envmap_2$).
This rule further requires that the channels used for evaluating 
$\e_1$ have lower priority than those occurring in $\e_2$. 
This constraint enforces an increasing order of priorities between 
successive operations, thus complementing Rule \textsf{F-Seq} (\Cref{fig:ap:type-formation}), as observed in~\cref{sec:types}.

Rule \textsf{T-LetPair} checks that the effect of $e_1$ 
is lower than the lowest priority of the typing environment for $\e_2$.
This rule is also responsible for updating the map of priorities when the 
priority sequence emerging from the evaluation of $e_1$ is not empty 
(that is, when the rule preceding \textsf{T-LetPair} is \textsf{T-NewPoly}, see below).
More specifically, when the priority sequence $\priority{\bar{\pi}}$ is not empty, 
$e_2$ is typed under the context with a priority extended with 
a mapping of the new channel endpoints ($x_1$ and $x_2$) to the corresponding priority sequence. 
\smallskip

Rule 
\textsf{T-Sel} assigns no effect ($\prbot$) to the evaluation of 
$\eselect{k}$, requires the priority of the choice operation to be 
lower than that of the continuation types and also that the typing 
environment has no channel left, i.e., has priority $\prtop$. Rule 
\textsf{T-Match} treats $\e_\ell$ as if it were a function (see \textsf
{T-AbsLin}) and, according to the call-by-value evaluation strategy, 
requires the evaluation of $\e$ to be more urgent than that of 
$\e_\ell$; the distribution of channel names by the priority maps follows 
the usual rules for splitting linear contexts.

Typing rules for channel creation distinguish session types with a priority 
binder---denoted by $\EuScript{S}^\fforall$---from session types without priority binders---denoted by $\EuScript{S}$. Channels of the first 
kind must be assigned a  priority sequence. 
More specifically, in rule \textsf{T-NewPoly}, the primitive $\enewk$ receives a session type with a priority 
abstraction $\EuScript{S}^\fforall$ 
that indicates the type of the 
channel being created and 
{two natural numbers ($\priority{n_1}$ and $\priority{n_2}$) representing 
the start and increment of the associated priority sequence. 
A priority sequence starting at $\priority{n_1}$ and with increment $\priority{n_2}$
is defined in this rule and passed as a side 
effect through the first annotation $\priority{\bar{\pi}}$ (that should be eventually mapped to the 
corresponding endpoint names and stored in $\envmap$, cf. rule \textsf{T-LetPair}).
rule \textsf{T-New} requires the typing environment to be un\-re\-strict\-ed
and type $\EuScript{S}$ to be well-formed and such that $\unr{\EuScript{S}} \neq \tskip$, to 
ensure that there are some pending operations. 

Rule \textsf{T-Eq} concerns type equivalence: it 
states that an 
expression that is well typed against a type is well typed against any 
of its equivalent types.  
The judgement of type equivalence used in rule \textsf{T-Eq} is an 
extension of previous proposals (cf.~\cite{DBLP:journals/iandc/AlmeidaMTV22,almeida2020deciding,costa2022higher,DBLP:conf/icfp/ThiemannV16}) by considering only the comparison 
of types and type constructors with exactly the same priority.

\paragraph{Revisiting the Motivating Example.}

To illustrate the typing rules, we show how they are used to 
type-check the example presented in~\cref{subsec:deadlocks-simple-rec}. The process defined 
in~\cref{lst:main} reduces to the following configuration:
\[
\confnu{w_1}{r_1}{\confnu{w_2}{r_2}{(\confpar{\confthread{\mainthread}{\eapp{\eapp{\eprapp{\mathsf{client''}}{\enext\,w_2}}{w_2}}{r_1}}}{\confthread{\childthread}{\eapp{\eapp{\eprapp{\mathsf{client'}}{\enext\,w_1}}{w_1}}{r_2}}})}{\bar{\pi_2}}}{\bar{\pi_1}}
\]

The typing derivation for this configuration begins as follows:\\

{\small
\begin{prooftree}
        \hypo{\ldots}
       \Infer1[(C-Main)]{\runtimetyping{\envtvars_M}{\confthread{\mainthread}{\ldots}}{\mainthread}{\envmap_M}}
      \hypo{\text{(proceeds with T-App, T-App, T-PApp)}}
      \Infer1{\typingpoly{\emptyenv}{\envtvars_C}{\emptyenv}{\envmap_C}{\eapp{\eapp{\eprapp{\mathsf{client'}}{\enext\,w_1}}{w_1}}{r_2}}{\tunit}{\emptyenv}{\prbot}{\oracons}}
       \Infer1[]{\runtimetyping{\envtvars_C}{\confthread{\childthread}{\eapp{\eapp{\eprapp{\mathsf{client'}}{\enext\,w_1}}{w_1}}{r_2}}}{\childthread}{\envmap_C}}
      \Infer2[(C-Par)]{\runtimetyping{\envtvars}{(\confpar{\confthread{\mainthread}{\eapp{\eapp{\eprapp{\mathsf{client''}}{\enext\,w_2}}{w_2}}{r_1}}}{\confthread{\childthread}{\eapp{\eapp{\eprapp{\mathsf{client'}}{\enext\,w_1}}{w_1}}{r_2}}})}{\mainthread}{\envmap_1}}
      \Infer1[(C-New)]{\runtimetyping{w_1:\mathsf{Stream}, r_1:\dualof{\mathsf{Stream}}}{\confnu{w_2}{r_2}{(\confpar{\confthread{\mainthread}{\ldots}}{\confthread{\childthread}{\ldots}})}{\bar{\pi_2}}}{\mainthread}{\envmap_0}}
      \Infer1[(C-New)]{\runtimetyping{\emptyenv}{\confnu{w_1}{r_1}{\confnu{w_2}{r_2}{(\confpar{\confthread{\mainthread}{\eapp{\eapp{\eprapp{\mathsf{client''}}{\enext\,w_2}}{w_2}}{r_1}}}{\confthread{\childthread}{\eapp{\eapp{\eprapp{\mathsf{client'}}{\enext\,w_1}}{w_1}}{r_2}}})}{\bar{\pi_2}}}{\bar{\pi_1}}}{\mainthread}{\emptyenv}}
\end{prooftree}}\\\\

where:
{\small
\begin{align*} 
\envtvars &= \{w_1:\mathsf{Stream}, r_1:\dualof{\mathsf{Stream}}, w_2:\mathsf{Stream}, r_2:\dualof{\mathsf{Stream}}\},\\
\envtvars_C &= \{w_1:\mathsf{Stream}, r_2:\dualof{\mathsf{Stream}}\},\\
\envtvars_M &= \{r_1:\dualof{\mathsf{Stream}}, w_2:\mathsf{Stream}\},\\
\envmap_0 & = \{w_1:\mathsf{Stream} \mapsto\bar{\pi_1}, r_1:\dualof{\mathsf{Stream}}\mapsto\bar{\pi_1}\},\\ 
\envmap_1 &= \{w_1:\mathsf{Stream} \mapsto\bar{\pi_1}, r_1:\dualof{\mathsf{Stream}}\mapsto\bar{\pi_2}, w_2:\mathsf{Stream} \mapsto\bar{\pi_2}, r_2:\dualof{\mathsf{Stream}}\mapsto\bar{\pi_2}\}\\
\envmap_C &= \{w_1:\mathsf{Stream} \mapsto\bar{\pi_1}, r_2:\dualof{\mathsf{Stream}}\mapsto\bar{\pi_2}\}\\
\envmap_M &= \{r_1:\dualof{\mathsf{Stream}}\mapsto\bar{\pi_2}, w_2:\mathsf{Stream} \mapsto\bar{\pi_2}\}
\end{align*} }

For illustration, consider the function $\mathsf{client}$ from~\cref{lst:stream}. 
We now sketch a typing derivation showing that $\mathsf{client}'$ has the type declared 
in Lines~3–4 of~\cref{lst:stream}.
Since T-App 
does not split the priority map, the priority environment used to type-check 
$\mathsf{client}$ is $\envmap_C$. 

In the following sketch we $\alpha$-rename the formal parameters of $\mathsf{client}'$ 
so that they are written $w_1$ and $r_2$, matching the concrete endpoints occurring in the configuration. 
This is without loss of generality: 
programs and configurations are considered up to consistent $\alpha$-renaming of bound variables, 
applying the same renaming to $\envtvars$ and $\envmap$. 
Note that $\envmap$ is a map consulted by $\mathsf{inst}^{\envmap}$ 
that is consumed and updated as priorities advance.

To check that 
\[\typingpoly{\emptyenv}{\emptyenv}{\emptyenv}{\envmap_C}{\mathsf{client}}{\tpolyp{S}{\prvar}{(\prbot,\prtop)}{\tarrow{\tarrow{\mathsf{Stream}}{\dualof{\mathsf{Stream}}}{\prtop}{\prbot}{\unmult}}{\tunit}{\prvar}{\prtop}{\linmult}}}{\emptyenv}{\prbot}{\oracons}\]
we start by inverting T-PAbs followed by inversion of T-AbsUn/Lin and get:
\[
\typingpoly{\emptyenv}{\envtvars}{\envpr}{\envmap_C}{\elet{w_1}{\eapp{\eapp{\esend}{\eunit}}{(\einst{w_1}{})}}{\ldots}}{\tunit}{\emptyenv}{\pr}{\oracons}  
\]
where: $\envtvars = \{w_1:\mathsf{Stream}, r_2:\dualof{\mathsf{Stream}}\}$ and $\envpr = \{\prvar\in (\prbot,\prtop)\}$.
For brevity we type-check the unfolded body corresponding to $\mathsf{client}$ 
applied to $w_1$ and $r_2$ (i.e., the $\beta$-reduct of the two applications).
Making the type and priority instantiations explicitly, we actually need to prove that:
{\small
\[
  \typingpoly{\emptyenv}{\envtvars}{\envpr}{\envmap_C}{\elet{w_1}
  {\eapp{\eapp{\etapp{\eprapp{\eprapp{\etapp{\eprapp{\esend}{\prtop}}{\tunit}}{\eapp{\enext}{w_1}}}{\eapp{\enext^2}{w_1}}}{\mathsf{Stream}}}{\eunit}}{(\einst{w_1}{})}}
  {\ldots}}{\tunit}{\emptyenv}{\pr}{\oracons}  
\]}

Assuming the type of $\esend$ from~\cref{fig:types-constants}, inversion of T-Let yields~\eqref{eq:first-let1} and~\eqref{eq:first-let2}:
\begin{align}
  \label{eq:first-let1}
  &\typingpoly{\emptyenv}{w_1:\mathsf{Stream}}{\envpr}{\envmap_C}{
  {\eapp{\eapp{\esend\,\ldots}{\eunit}}{(\einst{w_1}{})}}}{\mathsf{Stream}}{\emptyenv}{\prbot}{\oracons} 
\\
  \label{eq:first-let2}
  &\typingpoly{\emptyenv}{w_1:\mathsf{Stream}, r_2:\dualof{\mathsf{Stream}}}{\envpr}{\envmap_C'}{\eletpair{a}{r_2}{\eapp{\ereceive}{(\einst{r_2}{})}}{\ldots}}{\tunit}{\emptyenv}{\pr}{\oracons} 
\end{align} 
The priority environment $\envmap_C'$ is obtained by popping the priority stack of $w_1$, i.e.: 
$$\envmap_C' = \{w_1:\mathsf{Stream} \mapsto\tail{\bar{\pi_1}},r_2:\dualof{\mathsf{Stream}}\mapsto\bar{\pi_2}\}.$$

Applying inversion of T-App to~\eqref{eq:first-let1} yields~\eqref{eq:first-let3} and~\eqref{eq:first-let4}.
\begin{align}
  \label{eq:first-let3}
  &\typingpoly{\emptyenv}{\emptyenv}{\envpr}{\envmap_C}{
  {\eapp{\esend\,\ldots}{\eunit}}}{\tarrow{\tseq{\tout{\tunit}{\fst{\bar{\pi_1}}}}{\mathsf{Stream}}}{\mathsf{Stream}}{\prtop}{\eapp{\enext}{w_1}}{\linmult}}{\emptyenv}{\prbot}{\oracons} \\
  \label{eq:first-let4}
  &\typingpoly{\emptyenv}{w_1:\mathsf{Stream}}{\envpr}{\envmap_C}{
  {\einst{w_1}{}}}{\tseq{\tout{\tunit}{\fst{\bar{\pi_1}}}}{\mathsf{Stream}}}{\emptyenv}{\prbot}{\oracons} 
\end{align}
Inversion of T-Inst followed by T-Var completes the derivation of~\eqref{eq:first-let4}, provided $\fst{\bar{\pi}}\in (\prbot,\prtop)$.

A further inversion of T-App on~\eqref{eq:first-let3} yields~\eqref{eq:first-let5}:
{\small
\begin{equation}
  \label{eq:first-let5}
  \typingpoly{\emptyenv}{\emptyenv}{\envpr}{\envmap_C}{
  {\etapp{\esend\,\ldots}{\mathsf{Stream}}}}{\tarrow{\tunit}{\tarrow{\tseq{\tout{\tunit}{\fst{\bar{\pi_1}}}}{\mathsf{Stream}}}{\mathsf{Stream}}{\prtop}{\eapp{\enext}{w_1}}{\linmult}}{\prtop}{\prbot}{\unmult}}{\emptyenv}{\prbot}{\oracons} 
\end{equation}}
Inversion of T-TApp yields~\eqref{eq:first-let6}, 
under the proviso that $\enext^2 w_1 \in \lowestpriority{\mathsf{Stream}}$.
{\small
\begin{equation}
  \label{eq:first-let6}
  \typingpoly{\emptyenv}{\emptyenv}{\envpr}{\envmap_C}{
  {\eprapp{\esend\,\ldots}{\enext^2 w_1}}}{\tpolyt{\beta}{\enext^2 w_1}{\tarrow{\tunit}{\tarrow{\tseq{\tout{\tunit}{\fst{\bar{\pi_1}}}}{\beta}}{\beta}{\prtop}{\eapp{\enext}{w_1}}{\linmult}}{\prtop}{\prbot}{\unmult}}{}}{\emptyenv}{\prbot}{\oracons} 
\end{equation}}
This requirement motivates two observations: (i) the priority of a polymorphic type must be a set rather than a single value; and (ii) the priority map must be preserved from the conclusion to the first premise of T-Let.

Two applications of T-PApp then reconnect the derivation with the type of $\esend$ 
in~\cref{fig:types-constants}, under the side conditions $\enext^2 w_1 \in (\prbot,\prtop)$ 
and $\enext w_1 \in (\prbot,\prtop)$.
This branch concludes by inverting T-TApp, T-PApp, and finally T-Const.

The derivation of~\eqref{eq:first-let2} follows analogously. By T-Let and T-LetPair, the priority map is 
popped, so the subsequent recursive call proceeds with the next elements of the priority sequence.

\subsection{Dynamics: Semantics, Type Preservation, and Deadlock Freedom} 
\label{ap:safety}

We define reduction
for  expressions and 
configurations  considering a call-by-value evaluation 
strategy; see \cref{fig:ap:reduction}.
Reduction rules for expressions are standard for a call-by-value setting; novelties are a new reduction rule for 
channel instantiations (\textsf{E-PInst}) and a
$\beta$-reduction rule for 
priority-level abstractions (\textsf{E-PApp}). 
Evaluation 
contexts also reflect the left-to-right evaluation strategy and 
are defined by the grammar at the top of~\cref{fig:reduction}. 

Configuration reduction includes
rules to lift reductions from expressions to configurations 
(Rule \textsf{R-LiftE}) and from configurations to configuration contexts
(Rule \textsf{R-LiftC}). 
Rule \textsf{R-Fork} creates a new \emph{child} thread, while rules 
\textsf{R-New} and \textsf{R-NewPoly} create a new channel and bind its 
endpoints in 
a restriction. Rules \textsf{R-Com}, \textsf{R-Ch} and \textsf{R-Close} 
match, respectively, send and receive, select and match, 
close and wait operations in parallel threads. Configuration reduction 
also includes the usual rules for parallel execution 
(\textsf{R-Par}), reduction under restriction (\textsf{R-Bind}) 
and reduction up to structural congruence (\textsf{R-Cong}).

\begin{figure}
  \declrel{Evaluation contexts}
  \begin{align*}
    \evalctx \grmeq & \evalhole{} \grmor \eapp{\evalctx}{e} \grmor \eapp{v}{\evalctx} 
    \grmor \etapp{\evalctx}{T} \grmor \eprapp{\evalctx}{\pr}
    \grmor \elet{\evar}{\evalctx}{e} \\
    &\grmor \eletpair{\evar_1}{\evar_2}{\evalctx}{e}
    \grmor \eseq{\evalctx}{e}
      \grmor \epair{\evalctx}{e} 
    \grmor \epair{v}{\evalctx} \\
    &
    \grmor \ematch{\evalctx}{\ell}{e_\ell}{\ell\in L}
    \grmor \einst{E}{\envmap}
    \tag{Evaluation contexts}\\
    \threadctx \grmeq& \confthread\thread\evalctx \tag{Thread evaluation contexts}\\
    \configctx \grmeq & \evalhole{} \grmor \confpar{\configctx}{\config{C}} \grmor \confnu{x}{y}{\configctx}{\bar{\pi}}
    \tag{Configuration contexts}
  \end{align*}
  \declrel{Expression reduction}
  \begin{gather*}
    \infer[\textsf{E-App}]
      {}
      {\red{\eapp{(\eabs{x}{\_}{e}{m})}{v}}{\subs{e}{v}{x}}}
    \quad
    \infer[\textsf{E-LetElim}]
      {}
      {\red{\elet{\evar}{v}{e}}{\subs{e}{v}{\evar}}}
    \quad 
    \infer[\textsf{E-PairElim}]
      {}
      {\red{\eletpair{\evar_1}{\evar_2}{\epair{v_1}{v_2}}{e}}{\subs{\subs{e}{v_1}{\evar_1}}{v_2}{\evar_2}}}
    \\ 
    \infer[\textsf{E-UnitElim}]
      {}
      {\red{\eseq{\eunit}{e}}{e}}
    \quad
    \infer[\textsf{E-PInst}]
      {}
      {\red{\einst{x}{\envmap}}{\eprapp{x}{\priority{\fst{\envmap(x)}}}}}
    \qquad
    \infer[\textsf{E-TApp}]
      {}
      {\red{\etapp{(\etabs{\tvar}{v})}{T}}{\subs{v}{T}{\tvar}}}
    \qquad
    \infer[\textsf{E-PApp}]
      {}
      {\red{\eprapp{(\eprabs{\prvar}{v})}{\pr}}{\subs{v}{\pr}{\prvar}}}
    \\
    \infer[\textsf{E-Ctx}]
      {\red{\e_1}{\e_2}}
      {\red{\evalapp{\evalctx}{\e_1}}{\evalapp{\evalctx}{\e_2}}}
  \end{gather*}
  \declrel{Configuration reduction}
  \begin{gather*}
    \infer[\textsf{R-LiftE}]
      {\red{\e_1}{\e_2}}
      {\red{\evalapp{\threadctx}{\e_1}}{\evalapp{\threadctx}{\e_2}}}
    \qquad
    \infer[\textsf{R-LiftC}]
      {\red{\config{C}_1}{\config{C}_2}}
      {\red{\evalapp{\configctx}{\config{C}_1}}{\evalapp{\configctx}{\config{C}_2}}}
    \quad
    \infer[\textsf{R-Fork}]
      {}
      {\red{\evalapp{\threadctx}{\eapp{\eprapp{\eprapp{\efork}{\_}}{\_}}{e}}}
           {\confpar{\evalapp{\threadctx}{\eunit}}{\confthread{\childthread}{\eapp{\e}{\eunit}}}}}
    \\
    \infer[\textsf{R-New}]
      {}
      {\red{\evalapp{\threadctx}{\enew{\EuScript{S}}}}{\confnu{x}{y}{\evalapp{\threadctx}{\epair{x}{y}}}{\emptyenv}}
      \text{\enspace $x, y\not \in \fv{\threadctx}$ and $\unr{\EuScript{S}} \neq \tskip$}}
    \\
    \infer[\textsf{R-NewPoly}]
      {}
      {\red{\evalapp{\threadctx}{\enewpoly{\EuScript{S}^\fforall}{n_1}{n_2}}}{\confnu{x}{y}{\evalapp{\threadctx}{\epair{x}{y}}}{\bar{\pi}}}
      \text{\enspace where $x, y\not \in \fv{\threadctx}$ and $\priority{\bar{\pi}} = [\priority{n_1} + k \cdot \priority{n_2} \mid k\in\mathbb{N}]$}}
    \\
    \infer[\textsf{R-Com}]
      {}
      {
      {\confnu{x}{y}{(\confpar{\evalapp{\threadctx}{\eapp{\eapp{\etapp{\eprapp{\eprapp{\etapp{\eprapp{\esend}{\_}}{\_}}{\_}}{\_}}{\_}}{v}}{x}}}
                             {\evalapp{\threadctx'}{\eapp{\etapp{\eprapp{\eprapp{\etapp{\eprapp{\ereceive}{\_}}{\_}}{\_}}{\_}}{\_}}{y}}})}{\bar{\pi}}}
      \rightarrow
      {\confnu{x}{y}{(\confpar{\evalapp{\threadctx}{x}}{\evalapp{\threadctx'}{\epair{v}{y}}})}{\bar{\pi}}}
      }
    \\
    \infer[\textsf{R-Ch}]
      {k \in L}
      {
      {\confnu{x}{y}{(\confpar{\evalapp{\threadctx}{\eapp{\eselect{k}}{x}}}
                             {\evalapp{\threadctx'}{\ematch{y}{\ell}{e_\ell}{\ell\in L}}})}{\bar{\pi}}}
      \rightarrow
      {\confnu{x}{y}{(\confpar{\evalapp{\threadctx}{x}}{\evalapp{\threadctx'}{\eapp{e_k}{y}}})}{\bar{\pi}}}
      }
    \\
    \infer[\textsf{R-Close}]
      {}
      {
      {\confnu{x}{y}{(\confpar{\evalapp{\threadctx}{\eapp{\eprapp{\eclose}{\_}}{y}}}
                             {\evalapp{\threadctx'}{\eapp{\eprapp{\ewait}{\_}}{x}}})}{\bar{\pi}}}
      \rightarrow
      {\confnu{x}{y}{(\confpar{\evalapp{\threadctx}{\eunit}}{\evalapp{\threadctx'}{\eunit}})}{\bar{\pi}}}
      }
    \\
    \infer[\textsf{R-Par}]
      {\red{\config{C}_1}{\config{C}_2}}
      {\red
      {\confpar{\config{C}_1}{\config{D}}}
      {\confpar{\config{C}_2}{\config{D}}}
      }
    \qquad
    \infer[\textsf{R-Bind}]
      {\red{\config{C}_1}{\config{C}_2}}
      {\red
      {\confnu{x}{y}{\config{C}_1}{\bar{\pi}}}
      {\confnu{x}{y}{\config{C}_2}{\bar{\pi}}}
      }
    \quad
    \infer[\textsf{R-Cong}]
        {\config{C}_1\structcong \config{C}_2
        \quad \red{\config{C}_1}{\config{D}_1}
        \quad \config{D}_1\structcong \config{D}_2}
        {\red {\config{C}_2}{\config{D}_2}}
  \end{gather*}
  \declrel{Structural congruence}
  \begin{align*} 
    \config{C} \structcong\,&  \confpar{\confthread{\childthread}{\eunit}}{\config{C}}&
    \confnu{x}{y}{\confthread{\childthread}{\eunit}}{\bar{\pi}}\structcong\, &\confthread{\childthread}{\eunit}  
    \\
    \confpar{\config{C}}{\config{D}} \structcong\,& \confpar{\config{D}}{\config{C}} &
    \confnu{x}{y}{\config{C}}{\bar{\pi}} \structcong\, & \confnu{y}{x}{\config{C}}{\bar{\pi}}\\
    \confpar{(\confpar{\config{C}}{\config{D}})}{\config{E}} \structcong\,& \confpar{\config{C}}{(\confpar{\config{D}}{\config{E}})} &
    \confnu{w}{x}{\confnu{y}{z}{\config{C}}{\bar{\pi}_1}}{\bar{\pi}_2}\structcong\,&{\confnu{y}{z}{\confnu{w}{x}{\config{C}}{\bar{\pi}_2}}{\bar{\pi_1}}}\\
    \confpar{(\confnu{x}{y}{\config{C}}{\bar{\pi}})}{\config{D}} \structcong\,& \confnu{x}{y}{(\confpar{\config{C}}{\config{D}})}{\bar{\pi}}
  \end{align*}
  \caption{Evaluation contexts, expression and configuration reduction and structural congruence (complete).
  The main text includes an excerpt in~\cref{fig:reduction}.}
  \label{fig:ap:reduction}
\end{figure}

We start by proving that structural congruence ($\structcong$, cf. \cref{fig:ap:reduction}, bottom) is type preserving (\Cref{t:cong}).
To this end, we require the following auxiliary results. 

\begin{lemma}[Strengthening]\
  \begin{enumerate}
    \item If $\istype{\envpoly, \tvar :: \pr}{\envpr}{T}$ and $\tvar\not\in\ftv{T}$ then $\istype{\envpoly}{\envpr}{T}$.
    \item If $\typingpoly{\envpoly}{\envtvars, x:T}{\envpr}{\envmap}{e}{U}{\bar{\pi}}{\pr}{\oracons}$
    and $x \not\in \fv{e}$, then \\ $\nuLowestpriorityEnv{T}{} = \prtop$
    and $\typingpoly{\envpoly}{\envtvars}{\envpr}{\envmap}{e}{U}{\bar{\pi}}{\pr}{\oracons}$.
    \item If $\runtimetyping{\envtvars, x: T}{\config{C}}{\thread}{\Psi}$ and $x \not\in \fv{\config{C}}$
    then $\nuLowestpriorityEnv{T}{} = \prtop$ and $\runtimetyping{\envtvars}{\config{C}}{\thread}{\Psi}$.
  \end{enumerate}  
\end{lemma}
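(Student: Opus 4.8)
The plan is to prove the three items separately, each by induction on the relevant derivation: item~(1) by induction on the type-formation derivation of $\istype{\envpoly, \tvar :: \pr}{\envpr}{T}$; item~(2) by induction on the typing derivation of $\typingpoly{\envpoly}{\envtvars, x:T}{\envpr}{e}{U}{\pr}$; and item~(3) by induction on the derivation of $\runtimetyping{\envtvars, x:T}{\config{C}}{\thread}$, with item~(3) invoking item~(2) at its leaves. Throughout I treat term contexts up to permutation (exchange), so that the strengthened binding $x:T$ may be assumed at the tail. Item~(1) is the easy one: $\envpr$ is never modified, and since $\tvar \notin \ftv{T}$ the variable removed from $\envpoly$ is never the one consulted by \textsf{F-Var}; for the binders \textsf{F-TAbs}, \textsf{F-PAbs}, \textsf{F-Rec} I permute the newly introduced binding past $\tvar$ (renaming the bound variable so that it differs from $\tvar$) and apply the induction hypothesis, noting that the side conditions $\fpv{\pr} \subseteq \dom{\envpr}$, $\fpv{\intervalvar} \subseteq \dom{\envpr}$ and contractivity are untouched.

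For items~(2) and~(3) the key device is an inversion sub-lemma for context split: whenever $\ctxsplit{\envpoly}{\envpr}{\envtvars, x:T}{\envtvars_1}{\envtvars_2}$, the last rule processing $x:T$ is \textsf{S-Top}, \textsf{S-Left}, or \textsf{S-Right}; in the first case $\lowestpriorityEnv{T} =_\envpr \prtop$, both $\envtvars_i$ contain $x:T$, and the remaining split is $\ctxsplit{\envpoly}{\envpr}{\envtvars}{\envtvars_1'}{\envtvars_2'}$, whereas in the other two cases $\lowestpriorityEnv{T} <_\envpr \prtop$, $x$ occurs in exactly one $\envtvars_i$, and again the shorter context splits. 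With this in hand, item~(2) goes by cases on the last typing rule. At the leaves \textsf{T-Var} and \textsf{T-Const}, the premise $\lowestpriorityEnv{\envtvars, x:T} =_\envpr \prtop$ yields $\lowestpriorityEnv{T} =_\envpr \prtop$ (since $\prtop$ is the neutral element of the operation $\lub$ used in the definition of $\lowestpriorityEnv{\cdot}$), so $x$ can be dropped and the judgement re-derived, well-formedness of the shorter context following by inversion on \textsf{Ctx-NEmp}. For \textsf{T-TAbs}, \textsf{T-PAbs}, \textsf{T-AbsUn}, \textsf{T-Sel} and \textsf{T-New}, whose side conditions already require $\lowestpriorityEnv{\envtvars} =_\envpr \prtop$, I get $\lowestpriorityEnv{T} =_\envpr \prtop$ immediately; I then apply the induction hypothesis to the single premise ($\alpha$-renaming any bound variable away from $x$ and using $x \notin \fv{e}$ to see $x$ stays unused) and rebuild, observing that the annotation $\lowestpriorityEnv{\envtvars, x:T}$ coincides with $\lowestpriorityEnv{\envtvars}$ once $\lowestpriorityEnv{T} =_\envpr \prtop$. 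Rules \textsf{T-TApp}, \textsf{T-PApp}, and \textsf{T-Eq} have a single premise whose side conditions do not mention $\envtvars$, so the induction hypothesis transfers verbatim.

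The delicate cases of item~(2) are those involving a context split---\textsf{T-App}, \textsf{T-Pair}, \textsf{T-LetPair}, \textsf{T-LetUnit}, \textsf{T-AbsLin}, \textsf{T-Match}---where $x$ is routed by the split sub-lemma. Since $x \notin \fv{e}$ means $x$ is not free in any sub-expression, in the \textsf{S-Left}/\textsf{S-Right} case the induction hypothesis applied to the branch holding $x$ would force $\lowestpriorityEnv{T} =_\envpr \prtop$, contradicting the side condition $\lowestpriorityEnv{T} <_\envpr \prtop$ of those rules; hence only \textsf{S-Top} is possible, giving $\lowestpriorityEnv{T} =_\envpr \prtop$ with $x$ present in both branch contexts. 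I then apply the induction hypothesis to every premise and recombine using the shorter split, again using $\lowestpriorityEnv{T} =_\envpr \prtop$ to leave any $\lowestpriorityEnv{\envtvars_i}$-annotations unchanged. Item~(3) is then analogous: over \textsf{C-Child}/\textsf{C-Main} I invoke item~(2) on the thread expression (with $\envpoly = \envpr = \emptyenv$); over \textsf{C-Par} I use the split sub-lemma with $\emptyenv \mid \emptyenv$ and exclude \textsf{S-Left}/\textsf{S-Right} exactly as above; over \textsf{C-New} I $\alpha$-rename the endpoints away from $x$, apply the induction hypothesis, and note that $\istype{\emptyenv}{\emptyenv}{S}$ is unaffected. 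I expect the main obstacle to be precisely this linearity bookkeeping---stating and proving the context-split inversion sub-lemma and using the induction hypothesis to rule out the branches where the unused $x$ would be sent to a single side---after which the remaining steps ($\alpha$-renaming, prefix-closure of context formation, and the identity $\lowestpriorityEnv{\envtvars, x:T} = \lowestpriorityEnv{\envtvars}$ when $\lowestpriorityEnv{T} =_\envpr \prtop$) are routine.
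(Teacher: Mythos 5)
Your proposal is correct and follows exactly the route the paper takes, namely rule induction on the first hypothesis of each item (with item~3 relying on item~2); the paper leaves all details implicit, and the ones you supply—the context-split inversion, ruling out \textsf{S-Left}/\textsf{S-Right} for the unused binding via the induction hypothesis, and the neutrality of $\prtop$ in $\lowestpriorityEnv{\cdot}$—are the right ones.
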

\begin{proof}
  By rule induction on the first hypotheses.\qed
\end{proof}

\begin{lemma}
  If $\istype{\envpoly}{\envpr}{S}$ then $\istype{\envpoly}{\envpr}{\dualof{S}}$.
\end{lemma}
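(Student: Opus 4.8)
The plan is to proceed by rule induction on the derivation of $\istype{\envpoly}{\envpr}{S}$ (equivalently, by structural induction on $S$), consulting in each case the matching clause of duality in \cref{fig:duality} and re-deriving well-formedness with the ``mirror'' formation rule. I would first record two easy auxiliary facts, each by a direct induction on $S$: (i) duality preserves the lowest priority, $\lowestpriorityEnv{\dualof{S}} =_{\envpr} \lowestpriorityEnv{S}$---every clause of \cref{fig:duality} keeps the priority annotations untouched, $\dualof{\tskip}=\tskip$, and $\lowestpriorityEnv{\tseq{S_1}{S_2}}$ follows $S_1$, whose dual is $\dualof{S_1}$; and (ii) duality preserves contractivity---each guarding constructor ($\tout{\cdot}{\pr}$, $\tin{\cdot}{\pr}$, the two choices, $\tclose{\pr}$, $\twait{\pr}$) is sent to another guarding constructor, and $\tskip$ and sequential composition are left in place.

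The non-recursive cases are then mechanical. For $S \in \{\tskip, \tclose{\pr}, \twait{\pr}\}$ the dual is $\tskip$, $\twait{\pr}$, $\tclose{\pr}$ and the relevant formation rules are axioms. For $S = \tout{T}{\pr}$, inverting \textsf{F-MsgOut} gives $\istype{\envpoly}{\envpr}{T}$ and $\pr <_{\envpr} \lowestpriorityEnv{T}$; since $\dualof{S}=\tin{T}{\pr}$ carries the same payload $T$, \textsf{F-MsgIn} applies verbatim (symmetrically for $\tin{T}{\pr}$). For $S = \tintchoice{\ell}{S_\ell}{\ell\in L}{\pr}$, inverting \textsf{F-IntChoice} gives $\istype{\envpoly}{\envpr}{S_\ell}$ and $\pr <_{\envpr} \lowestpriorityEnv{S_\ell}$ for each $\ell$; the induction hypothesis yields $\istype{\envpoly}{\envpr}{\dualof{S_\ell}}$, auxiliary fact (i) rewrites the priority side-condition to $\pr <_{\envpr} \lowestpriorityEnv{\dualof{S_\ell}}$, and \textsf{F-ExtChoice} delivers $\istype{\envpoly}{\envpr}{\textchoice{\ell}{\dualof{S_\ell}}{\ell\in L}{\pr}}$ (symmetrically for external choice). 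For $S = \tseq{S_1}{S_2}$ and $S = \tpolyp{\prvar}{\intervalvar}{S'}{\prsigma}$ one inverts \textsf{F-Seq} (resp.\ \textsf{F-PAbs}), applies the induction hypothesis to the immediate subterm(s)---duality commutes with both constructors and does not alter the interval $\intervalvar$---and re-applies the same rule.

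The only real obstacle is the recursive type $S = \trec{\svar}{S'}$, where $\dualof{S}$ is built through negative references and therefore mentions $\svar^-$, so the plain induction hypothesis about $S'$ does not apply directly; this is also where the bare session-variable case really lives, since duality is only meaningful on variables once they sit under a $\mu$. Following the negative-reference treatment of duality from prior work~\cite{jlamp}, I would generalise the statement to (open) session pre-types that may contain negative references, recording $\svar^-$ in $\envpoly$ with the same priority assignment as $\svar$, and prove the auxiliary claim: if $\istype{\envpoly, \svar::\pr}{\envpr}{S'}$ then $\istype{\envpoly, \svar^- ::\pr}{\envpr}{\dualof{\subs{S'}{\svar^-}{\svar}}}$; this goes through by the same case analysis, the variable case now being covered by membership in $\dom{\envpoly}$ via \textsf{F-Var}. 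The recursive case then closes by combining this with a (routine) substitution lemma for type formation---$\istype{\envpoly, \svar^-::\pr}{\envpr}{U}$ and $\istype{\envpoly}{\envpr}{V}$ with $\lowestpriorityEnv{V} =_{\envpr} \pr$ imply $\istype{\envpoly}{\envpr}{\subs{U}{V}{\svar^-}}$---instantiating $\svar^-$ with $\trec{\svar}{S'}$, whose well-formedness is precisely the hypothesis of this case, and applying \textsf{F-Rec}, with contractivity of the dual body supplied by auxiliary fact (ii). Setting up this substitution lemma and the strengthened statement is the one piece of real work; the rest is a mechanical mirroring of the formation rules.
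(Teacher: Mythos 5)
Your proof follows exactly the paper's route: the paper's entire proof is a one-line appeal to induction on the derivation of $\istype{\envpoly}{\envpr}{S}$, which is precisely the induction you carry out. Your elaboration of the auxiliary facts (duality preserves lowest priorities and contractivity) and of the recursive case via negative references and a substitution lemma is a faithful and correct unpacking of the details the paper leaves implicit.
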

\begin{proof}
  By induction on the derivation $\istype{\envpoly}{\envpr}{S}$.\qed
\end{proof}

\begin{lemma}[Agreement for process formation]\
  \begin{enumerate}
    \item If $\typingpoly{\envpoly}{\envtvars}{\envpr}{\envmap}{e}{T}{\bar{\pi}}{\pr}{\oracons}$
    then $\wfpolyctx{\envpoly}{\envpr}{\envtvars}$ and $\istype{\envpoly}{\envpr}{T}$.
    \item If $\runtimetyping{\envtvars}{\config{C}}{\thread}{\Psi}$ then $\wfpolyctx{\emptyenv}{\emptyenv}{\envtvars}$.
  \end{enumerate}
  \begin{proof}\
    (1) By induction on the derivation of $\typingpoly{\envpoly}{\envtvars}{\envpr}{\envmap}{e}{T}{\bar{\pi}}{\pr}{\oracons}$.
    (2) By induction on the derivation of $\runtimetyping{\envtvars}{\config{C}}{\thread}{\Psi}$ using (1).\qed
  \end{proof}
\end{lemma}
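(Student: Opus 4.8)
Both items go by rule induction, with item~2 invoking item~1. For item~1 I would induct on the derivation of $\typingpoly{\envpoly}{\envtvars}{\envpr}{e}{T}{\pr}$, establishing in each case that $\wfpolyctx{\envpoly}{\envpr}{\envtvars}$ holds (via the rules of \cref{fig:ctx-formation}) and that $\istype{\envpoly}{\envpr}{T}$ holds (via the rules of \cref{fig:type-formation}). In \textsf{T-Const} and \textsf{T-Var} the context well-formedness is already a premise; for \textsf{T-Var} I would additionally invert \textsf{Ctx-NEmp} to extract $\istype{\envpoly}{\envpr}{T}$, and for \textsf{T-Const} I would rely on the self-contained observation that every type listed in \cref{fig:types-constants} is closed and well formed under any $\envpoly,\envpr$. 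In the abstraction cases \textsf{T-AbsLin}, \textsf{T-AbsUn}, \textsf{T-TAbs}, \textsf{T-PAbs} I would invert \textsf{Ctx-NEmp} on the induction hypothesis (and, for \textsf{T-TAbs} and \textsf{T-PAbs}, use a context-level version of the Strengthening lemma to remove the freshly-bound $\tvar$ or $\prvar$ from $\envpoly$ or $\envpr$, which is legitimate by the usual freshness convention), then rebuild the arrow, polymorphic, or priority-abstraction type with \textsf{F-Abs}, \textsf{F-TAbs}, or \textsf{F-PAbs}, discharging the side conditions on $\fpv{\pr}$ and $\fpv{\intervalvar}$ straight from the typing premises.

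The cases that need extra machinery are those involving context split, substitution, duality, or type equivalence. For \textsf{T-App}, \textsf{T-Pair}, \textsf{T-LetPair}, \textsf{T-LetUnit}, \textsf{T-Match} (and \textsf{C-Par} in item~2) I would use a small lemma that context split transports well-formedness---if $\envtvars_1$ and $\envtvars_2$ are well formed then so is $\envtvars_1 \circ \envtvars_2$---which follows by a trivial induction on \cref{fig:ctx-split} since every rule merely copies or drops a binding; the result type is then recovered by inverting \textsf{F-Abs} or \textsf{F-Pair}. For \textsf{T-TApp} and \textsf{T-PApp} I would need substitution lemmas for type formation: from $\istype{\envpoly, \gamma :: \pr}{\envpr}{U}$, $\istype{\envpoly}{\envpr}{T}$ and $\lowestpriorityEnv{T} =_\envpr \pr$ conclude $\istype{\envpoly}{\envpr}{\subs{U}{T}{\gamma}}$; and from $\istype{\envpoly}{\envpr, \prvar \in \intervalvar}{U}$ and $\pr \in_\envpr \intervalvar$ conclude $\istype{\envpoly}{\envpr}{\subs{U}{\pr}{\prvar}}$. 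Each is proved by induction on the formation of $U$, the interesting cases being \textsf{F-MsgOut}, \textsf{F-MsgIn}, \textsf{F-IntChoice}, \textsf{F-ExtChoice}, where one must check that inequalities of the form $\pr <_\envpr \lowestpriorityEnv{\cdot}$ survive the substitution; this relies on $\lowestpriorityEnv{\subs{U}{T}{\gamma}}$ agreeing with $\lowestpriorityEnv{U}$ up to $=_\envpr$, which holds precisely because $\lowestpriorityEnv{T} =_\envpr \pr$. For \textsf{T-New} I would combine its premise $\istype{\envpoly}{\envpr}{S}$ with the preceding duality lemma to get $\istype{\envpoly}{\envpr}{\dualof{S}}$ and conclude with \textsf{F-Pair}. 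For \textsf{T-Eq} I would invoke preservation of type formation under type equivalence; since the equivalence used in \textsf{T-Eq} only relates types carrying identical priority annotations, this reduces to the analogous property for the underlying context-free session type equivalence, which can be imported from prior work.

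Item~2 is then a short induction on $\runtimetyping{\envtvars}{\config{C}}{\thread}$: rules \textsf{C-Child} and \textsf{C-Main} delegate the obligation to item~1 applied to their expression premise; \textsf{C-Par} uses the context-split transport lemma above; and \textsf{C-New} inverts \textsf{Ctx-NEmp} twice on the induction hypothesis $\wfpolyctx{\emptyenv}{\emptyenv}{\envtvars, x:S, y:\dualof{S}}$.

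\textbf{Main obstacle.} Everything above is routine bookkeeping with one genuine difficulty: the \textsf{T-Eq} case, which forces us to show that the bisimulation-style equivalence for context-free session types respects type formation once priority annotations are present; pinning down exactly which result from the type-equivalence literature can be reused, and verifying that it is undisturbed by the priority decorations, is the least mechanical part of the argument. A secondary, milder care point is making the type- and priority-substitution lemmas interact correctly with the $<_\envpr$ and $\in_\envpr$ side conditions, so that instantiating a variable neither breaks a well-formedness inequality nor escapes an interval constraint recorded in $\envpr$.
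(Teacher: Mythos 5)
Your proposal is correct and follows exactly the paper's (one-line) proof: rule induction on the typing derivation for item~1 and on the configuration typing derivation for item~2 using item~1; the auxiliary facts you identify (context-split transport, duality preserving formation, and the type/priority substitution properties for type formation) correspond precisely to lemmas the paper itself states nearby. Your expansion simply supplies the case analysis the authors leave implicit.
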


\begin{lemma}[Weakening]\
  \label{lem:weakening}
  \begin{enumerate}
    \item If $\istype{\envpoly}{\envpr}{T}$ then $\istype{\envpoly, \tvar :: \pr}{\envpr}{T}$.
    \item If $\istype{\envpoly}{\envpr}{T}$ then $\istype{\envpoly}{\envpr, \prvar\in \intervalvar}{T}$.
    \item If $\wfpolyctx{\envpoly}{\envpr}{\envtvars}$ then $\wfpolyctx{\envpoly, \tvar :: \pr}{\envpr}{\envtvars}$.
    \item If $\wfpolyctx{\envpoly}{\envpr}{\envtvars}$ then $\wfpolyctx{\envpoly}{\envpr, \prvar \in \intervalvar}{\envtvars}$.
    \item If $\typingpoly{\envpoly}{\envtvars}{\envpr}{\envmap}{e}{T}{\bar{\pi}}{\pr}{\oracons}$ then 
    $\typingpoly{\envpoly, \tvar :: \pr}{\envtvars}{\envpr}{\envmap}{e}{T}{\bar{\pi}}{\pr}{\oracons}$.
    \item If $\typingpoly{\envpoly}{\envtvars}{\envpr}{\envmap}{e}{T}{\bar{\pi}}{\pr}{\oracons}$ and 
    $\nuLowestpriorityEnv{U}{} = \prtop$ then 
    $\typingpoly{\envpoly}{\envtvars, x:U}{\envpr}{\envmap}{e}{T}{\bar{\pi}}{\pr}{\oracons}$.
    \item If $\typingpoly{\envpoly}{\envtvars}{\envpr}{\envmap}{e}{T}{\bar{\pi}}{\pr}{\oracons}$ then
    $\typingpoly{\envpoly}{\envtvars}{\envpr, \prvar\in \intervalvar}{\envmap}{e}{T}{\bar{\pi}}{\pr}{\oracons}$.
    \item If $\runtimetyping{\envtvars}{\config{C}}{\thread}{\Psi}$ and $\istype{\envpoly}{\envpr}{T}$
    and $\nuLowestpriorityEnv{T}{} = \prtop$ then $\runtimetyping{\envtvars, x:T}{\config{C}}{\thread}{\Psi}$.
  \end{enumerate}
\end{lemma}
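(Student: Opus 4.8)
The plan is to prove the eight items in the order in which they are stated, each by rule induction on the derivation appearing in its hypothesis, invoking earlier items when a premise of a rule is itself a type-formation, context-formation, or typing judgement. Items~1--2 are entirely routine: every rule of \cref{fig:type-formation} re-derives its conclusion verbatim after $\envpoly$ (resp.\ $\envpr$) is extended with a \emph{fresh} binding, since that binding occurs nowhere in $T$, in $\envtvars$, or in the side conditions $\fpv{\cdot}\in\dom{\envpr}$, and (after $\alpha$-renaming bound variables away from it) the extended environment in \textsf{F-TAbs}, \textsf{F-PAbs}, \textsf{F-Rec} still types the body. Items~3--4 then follow by induction on the structure of $\envtvars$, weakening the $\istype{\envpoly}{\envpr}{T}$ premise of \textsf{Ctx-NEmp} by item~1 (resp.\ item~2).

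For items~5 and~7 the weakening binding is fresh, so I rely on two invariance observations that I would record explicitly: \emph{(i)} $\lowestpriorityEnv{T}$ is unchanged when $\envpoly$ or $\envpr$ is extended with a binding for a variable not occurring in $T$; \emph{(ii)} extending $\envpr$ with a new interval assignment $\prvar\in\intervalvar$ can only enlarge the set of derivable comparisons, so every side condition of the form $\pr <_{\envpr}\prsigma$, $\lowestpriorityEnv{T}=_{\envpr}\pr$, $\pr\in_\envpr\intervalvar$, or $\lowestpriorityEnv{\envtvars}=_\envpr\prtop$ survives the extension. With these in hand each typing rule goes through directly: in \textsf{T-TApp} the premise $\istype{\envpoly}{\envpr}{T}$ is weakened by item~1 (resp.\ item~2), in \textsf{T-Const}/\textsf{T-Var}/\textsf{T-Sel}/\textsf{T-New} the well-formedness premise is weakened by item~3 (resp.\ item~4), the term-level splits are unaffected, and the abstraction rules push the extension under their binders.

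Item~6 --- weakening the term-variable context with an \emph{unrestricted} binding $x:U$ with $\lowestpriorityEnv{U}=_\envpr\prtop$ --- is where the real work lies, and I expect it to be the main obstacle. The arithmetic fact underpinning everything is $\lowestpriorityEnv{\envtvars,x:U}=_\envpr\lowestpriorityEnv{\envtvars}$, because $\prtop$ is neutral for the priority meet; this keeps intact every side condition mentioning $\lowestpriorityEnv{\envtvars}$ (in \textsf{T-Const}, \textsf{T-Var}, \textsf{T-AbsUn}, \textsf{T-Sel}, \textsf{T-New}, and each $\rho_i<_\envpr\lowestpriorityEnv{\envtvars_j}$ proviso of \textsf{T-App}, \textsf{T-Pair}, \textsf{T-LetPair}, \textsf{T-LetUnit}, \textsf{T-Match}). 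The delicate point is every rule using a split $\envtvars=\envtvars_1\circ\envtvars_2$: there I must exhibit the split $\envtvars,x:U=(\envtvars_1,x:U)\circ(\envtvars_2,x:U)$, which is exactly rule \textsf{S-Top} applied to the given split --- sound precisely because $\lowestpriorityEnv{U}=_\envpr\prtop$ --- after which the two induction hypotheses apply to the two sub-derivations carrying the duplicated binding. In the leaf rules $\wfpolyctx{\envpoly}{\envpr}{\envtvars,x:U}$ follows from \textsf{Ctx-NEmp} using the hypothesis $\istype{\envpoly}{\envpr}{U}$, and in \textsf{T-AbsLin}/\textsf{T-AbsUn}/\textsf{T-Match} the new binding is pushed under the term binder, which freshness lets us assume distinct from $x$.

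Finally, item~8 follows by induction on $\runtimetyping{\envtvars}{\config{C}}{\thread}$: \textsf{C-Child} and \textsf{C-Main} reduce to item~6 applied with empty polymorphic and priority contexts, \textsf{C-Par} again duplicates the binding into both halves of the split via \textsf{S-Top} exactly as above, and \textsf{C-New} threads the binding past the two endpoint declarations using freshness of $x$. The recurring difficulty is thus a single pattern --- making an unrestricted weakening binding survive each context split by one application of \textsf{S-Top} --- rather than any deep argument; once that pattern and the neutrality of $\prtop$ for the priority meet are isolated, the remaining cases are mechanical re-applications of the same rule with the enlarged context.
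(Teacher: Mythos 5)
Your proposal follows exactly the paper's (one-line) proof: rule induction on the first hypothesis of each item, with earlier items discharging the type- and context-formation premises. The details you supply --- the neutrality of $\prtop$ for the context priority, the use of \textsf{S-Top} to carry the unrestricted binding through context splits, and the invariance of the $<_\envpr$ side conditions under fresh extensions of $\envpr$ --- are precisely the observations needed to make that induction go through, so the proposal is correct and matches the paper's approach.
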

\begin{proof}
  By induction on the first hypothesis, for each item.\qed
\end{proof}

We then have:

\begin{theorem}\label{t:cong}
  If $\config{C} \structcong \config{C}'$ and $\runtimetyping{\envtvars}{\config{C}}{\thread}{\Psi}$
  then $\runtimetyping{\envtvars}{\config{C}'}{\thread}{\Psi}$.
\end{theorem}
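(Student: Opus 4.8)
The plan is to argue by induction on the derivation of $\config{C} \structcong \config{C}'$, reading $\structcong$ as generated by the equations at the bottom of \Cref{fig:reduction}, closed under reflexivity, symmetry, transitivity, and the configuration constructors $\confpar{\cdot}{\config{D}}$ and $\confnu{x}{y}{\cdot}$. Reflexivity and transitivity are immediate (the latter composes the two induction hypotheses); and since each generating equation will be shown to preserve typing \emph{in both directions}, symmetry is subsumed. For a congruence step under $\confpar{\cdot}{\config{D}}$ (resp.\ $\confnu{x}{y}{\cdot}$) one inverts the typing derivation with \textsf{C-Par} (resp.\ \textsf{C-New}), applies the induction hypothesis to the rewritten sub-configuration, and reassembles with the same rule; the ambient context and the thread flag are untouched, so nothing more is needed. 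It then remains to treat each generating equation.

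Before that I would record two elementary facts used throughout. First, about context split (\Cref{fig:ctx-split}): it is symmetric and associative (a three-way split can be re-associated), $\emptyenv$ and every all-$\prtop$ context are units — in particular $\ctxsplit{\emptyenv}{\emptyenv}{\envtvars}{\envtvars_{\prtop}}{\envtvars}$ holds, where $\envtvars_{\prtop}$ is the sub-context of unrestricted bindings — and a binding distributed by \textsf{S-Left}/\textsf{S-Right} can be pushed out of the split, i.e.\ $(\envtvars_1, x{:}T) \circ \envtvars_2 = (\envtvars_1 \circ \envtvars_2), x{:}T$ whenever $x \notin \dom{\envtvars_2}$. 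Second, about thread flags: $\thread_1 + \thread_2$ is a commutative, associative partial operation with $\childthread$ as neutral element, undefined only on $\mainthread + \mainthread$. The two workhorses on the typing side are the \emph{Weakening} lemma (last clause, to add an unrestricted binding to a configuration judgement) and the \emph{Strengthening} lemma (third clause, to drop one), applied after inverting \Cref{fig:typing-rules-proc}.

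With these in place the generating equations go through as follows. For the two garbage-collection laws $\config{C} \structcong \confpar{\confthread{\childthread}{\eunit}}{\config{C}}$ and $\confnu{x}{y}{\confthread{\childthread}{\eunit}} \structcong \confthread{\childthread}{\eunit}$, the inert thread $\confthread{\childthread}{\eunit}$ is typed by \textsf{C-Child} via \textsf{T-Const}, which forces its context (hence, in the second law, the channel types $S$ and $\dualof S$) to have priority $\prtop$; the two directions then follow from Weakening and Strengthening together with $\childthread + \thread = \thread$ and the split $\envtvars = \envtvars_{\prtop} \circ \envtvars$, the session type needed in the second law being $\tskip$ (well formed by \textsf{F-Skip}, self-dual, and unrestricted). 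Commutativity and associativity of $\confpar{\cdot}{\cdot}$ reduce, after inversion and reassembly with \textsf{C-Par}, to the symmetry and associativity of context split and of the thread-flag combination. For scope extrusion $\confpar{(\confnu{x}{y}{\config{C}})}{\config{D}} \structcong \confnu{x}{y}{(\confpar{\config{C}}{\config{D}})}$, inversion yields $\envtvars = \envtvars_1 \circ \envtvars_2$ and $\runtimetyping{\envtvars_1, x{:}S, y{:}\dualof S}{\config{C}}{\thread_1}$; choosing the bound names fresh for $\envtvars_2$, the push-out property gives $(\envtvars_1, x{:}S, y{:}\dualof S) \circ \envtvars_2 = (\envtvars_1 \circ \envtvars_2), x{:}S, y{:}\dualof S$, and \textsf{C-Par} followed by \textsf{C-New} closes the case (the converse is symmetric). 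Finally, for the swaps $\confnu{x}{y}{\config{C}} \structcong \confnu{y}{x}{\config{C}}$ and $\confnu{w}{x}{\confnu{y}{z}{\config{C}}} \structcong \confnu{y}{z}{\confnu{w}{x}{\config{C}}}$ one inverts \textsf{C-New} and re-applies it after exchanging bindings in the typing context; in the first law the restriction is reintroduced with witness $\dualof S$, which requires that duality is involutive ($\dualof{\dualof S} = S$) and preserves well formedness of session types (the lemma ``$\istype{\envpoly}{\envpr}{S}$ implies $\istype{\envpoly}{\envpr}{\dualof S}$'').

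I expect the main difficulty to be bookkeeping rather than conceptual: keeping the interaction of context split, the \emph{partiality} of the thread-flag combination (no rule for $\mainthread + \mainthread$), and the $\alpha$-renaming side conditions consistent across the scope-extrusion and endpoint-swap cases. The one genuinely delicate ingredient imported from the literature is the involution of duality in the presence of negative references~\cite{jlamp}; everything else is a direct inversion-and-reassembly argument.
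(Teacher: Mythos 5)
Your proposal is correct and follows essentially the same route as the paper, which disposes of this theorem in one line ("by rule induction on the derivation of $\config{C} \structcong \config{C}'$, using the above lemmas" — namely Strengthening, Weakening, and the preservation of well-formedness under duality). Your write-up simply makes explicit the case analysis and the split/flag bookkeeping that the paper leaves implicit, and the auxiliary facts you invoke are exactly the ones the paper has already set up for this purpose.
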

\begin{proof}
  The proof follows by rule induction on the derivation of $\config{C} \structcong \config{C}'$, using the above lemmas.\qed
\end{proof}

\paragraph{Type Preservation.}
We start by 
showing that values produce no effect and the priority of the context coincides with the priority of their type.

\begin{lemma}
  \label{lem:values}
  If $\typingpoly{\envpoly}{\envtvars}{\envpr}{\envmap}{v}{T}{\bar{\pi}}{\pr}{\oracons}$ then
  $\pr = \prbot$ and $\nuLowestpriorityEnv{\envtvars}{} = \nuLowestpriorityEnv{T}{}$.
\end{lemma}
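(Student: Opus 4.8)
The plan is to argue by rule induction on the derivation of $\typingpoly{\envpoly}{\envtvars}{\envpr}{v}{T}{\pr}$, exploiting the fact that the subject is a value $v$ to restrict the possible concluding rules. For the ``structural'' value forms the statement is essentially read off the corresponding rule. Rules \textsf{T-Const}, \textsf{T-Var}, \textsf{T-TAbs}, \textsf{T-PAbs}, \textsf{T-AbsLin}, \textsf{T-AbsUn} and \textsf{T-Sel} all conclude with effect $\prbot$, so the first claim is immediate in those cases; for the second claim one inspects the assigned type: in \textsf{T-AbsLin}/\textsf{T-AbsUn} the arrow type has lower bound $\lowestpriorityEnv{\envtvars}$, so its priority equals $\lowestpriorityEnv{\envtvars}$ by \cref{fig:priority-types}; in \textsf{T-TAbs}, \textsf{T-PAbs} and \textsf{T-Sel} the side condition forces $\lowestpriorityEnv{\envtvars} =_{\envpr} \prtop$, and the assigned type (a $\forall$-type, an $\fforall$-type, or an arrow with lower bound $\prtop$) also has priority $\prtop$; in \textsf{T-Var} the context is $\envtvars, x{:}T$ with $\lowestpriorityEnv{\envtvars} =_{\envpr} \prtop$, hence $\lowestpriorityEnv{\envtvars, x{:}T} =_{\envpr} \prlub{\prtop}{\lowestpriorityEnv{T}} =_{\envpr} \lowestpriorityEnv{T}$; in \textsf{T-Const} one uses that every type in \cref{fig:types-constants} is either $\tunit$, a polymorphic type, or a priority abstraction, each of which has priority $\prtop$.

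For \textsf{T-Pair} we have $v = \epair{v_1}{v_2}$; the induction hypothesis gives that the effects of $v_1,v_2$ are both $\prbot$ and that $\lowestpriorityEnv{\envtvars_i} =_{\envpr} \lowestpriorityEnv{T_i}$, so the conclusion effect $\prglb{\pr[1]}{\pr[2]} =_{\envpr} \prbot$. For the priorities, I would first establish the auxiliary fact that context split is compatible with $\lowestpriorityEnv{\cdot}$, i.e.\ $\ctxsplit{\envpoly}{\envpr}{\envtvars}{\envtvars_1}{\envtvars_2}$ implies $\lowestpriorityEnv{\envtvars} =_{\envpr} \prlub{\lowestpriorityEnv{\envtvars_1}}{\lowestpriorityEnv{\envtvars_2}}$ (a short induction on \cref{fig:ctx-split}: \textsf{S-Top} duplicates a $\prtop$-binding, which is neutral for $\prlub$, whereas \textsf{S-Left}/\textsf{S-Right} route a binding to a single side). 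Then $\lowestpriorityEnv{\envtvars_1 \circ \envtvars_2} =_{\envpr} \prlub{\lowestpriorityEnv{T_1}}{\lowestpriorityEnv{T_2}} =_{\envpr} \lowestpriorityEnv{\tprod{T_1}{T_2}}$ by \cref{fig:priority-types}. For \textsf{T-Eq}, $v : T_2$ comes from $v : T_1$ with $T_1 \simeq T_2$; since the equivalence relation used in \textsf{T-Eq} only relates types whose priority annotations agree at corresponding positions, it preserves $\lowestpriorityEnv{\cdot}$, and both claims transfer from the induction hypothesis.

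The remaining rules are \textsf{T-TApp}, \textsf{T-PApp} and \textsf{T-App}, and these can conclude a value only when the subject is one of the partially-applied constant forms of \cref{fig:terms}. For \textsf{T-TApp}/\textsf{T-PApp} the effect is copied unchanged from the immediate subexpression and the context is not touched, so an auxiliary induction down the instantiation chain (bottoming out at \textsf{T-Const}) shows the effect is $\prbot$, that $\lowestpriorityEnv{\envtvars}$ stays $=_{\envpr} \prtop$, and that the residual type---after the relevant type/priority substitutions---is still a $\forall$-/$\fforall$-/arrow-with-$\prtop$-lower-bound type of priority $\prtop$; this step uses that priority and type substitution do not change $\lowestpriorityEnv{\cdot}$. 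The only term-level application that is a value is $\eapp{w}{v}$ with $w$ a fully instantiated $\esend$; here \textsf{T-App} yields effect $\prglb{\prglb{\pr[1]}{\pr[2]}}{\prsigma}$, and the induction hypothesis ($\pr[1] = \pr[2] = \prbot$) together with a direct reading of $\esend$'s type in \cref{fig:types-constants} identifies the latent effect $\prsigma$ and the residual arrow type, from which the two claims follow, using once more the context-split/$\lowestpriorityEnv{\cdot}$ compatibility to match $\lowestpriorityEnv{\envtvars_1 \circ \envtvars_2}$ with the priority of the residual type. I expect this last family of cases---the priority bookkeeping for the partially-applied communication primitives---to be the main obstacle; all other cases amount to a direct reading of the typing rules plus the two lightweight auxiliary facts that context split and substitution/equivalence preserve $\lowestpriorityEnv{\cdot}$.
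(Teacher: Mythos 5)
Your proof takes exactly the route of the paper's (one-line) proof: rule induction on the typing derivation, with the case analysis and the two auxiliary facts---that context split and substitution/equivalence are compatible with $\lowestpriorityEnv{\cdot}$---made explicit. This is correct at the same level of rigor as the paper; the only shared soft spot is the \textsf{T-Eq} case, where the monoidal law $\tseq{\tskip}{S}\simeq S$ can in principle change the top-level value of $\lowestpriorityEnv{\cdot}$, a detail neither you nor the paper addresses.
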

\begin{proof}
   By induction on the derivation of $\typingpoly{\envpoly}{\envtvars}{\envpr}{\envmap}{v}{T}{\bar{\pi}}{\pr}{\oracons}$.\qed
\end{proof}



The usual substitution lemmas now consider also priority substitution.

\begin{lemma}[Value substitution]
  \label{lem:value-subst}
  Assume that:
  \begin{itemize}
    \item $\ctxsplit{\envpoly}{\envpr}{\envtvars}{\envtvars_1}{\envtvars_2}$
    \item $\typingpoly{\envpoly}{\envtvars_2, x: U}{\envpr}{\envmap_2}{e}{T}{\bar{\pi}}{\pr_2}{\oracons}$
    \item There exists a priority map prefix $\envmap_p$ such that $\envmap = \envmap_p \circ \envmap_2$ and for
    every $y\in (\fv{e}\setminus\{x\})\cap \dom{\envmap}$, $\envmap_p(y) =\emptyenv$,
    \item Either (i) $\typingpoly{\envpoly}{\envtvars_1}{\envpr}{\envmap_p}{v}{U}{\emptyenv}{\pr_1}{\oracons}$
    or (ii) $\typingpoly{\envpoly}{\envtvars_1}{\envpr}{\envmap}{v}{U}{\emptyenv}{\pr_1}{\oracons}$.
  \end{itemize} 
    Then, $\typingpoly{\envpoly}{\envtvars}{\envpr}{\envmap}{\subs{e}{v}{x}}{T}{\bar{\pi}}{\pr_2}{\oracons}$.
   
\end{lemma}
\begin{proof}
    By induction on the derivation of $\typingpoly{\envpoly}{\envtvars_1, x: U}{\envpr}{\envmap_1}{e}{T}{\bar{\pi}}{\pr_1}{\oracons}$,
    using~\cref{lem:values}. We sketch the proof for case T-Inst to illustrate why the empty prefix map is needed.

    \textbf{Case T-Inst.}
    Assume that $\typingpoly{\envpoly}{\envtvars_2, x: U}{\envpr}{\envmap_2}{e}{T}{\bar{\pi}}{\pr_2}{\oracons}$
    and exists $\envmap_p$ in the stated conditions. Inversion of T-Inst yields: $e = \einst{y}{}$ with $y \neq x$ 
    and 
    $\typingpoly{\envpoly}{\envtvars_2, x: U}{\envpr}{\envmap_2}{y}{ \tpolyp{S}{\prvar}{\intervalvar}{S}}{\emptyenv}{\pr_2}{\oracons}$
     for $T = \subs{S}{\pr}{\intervalvar}$ and $\pr = \fst{\envmap_2(y)}\in \intervalvar$. We have $\subs{(\einst{y}{})}{v}{x}= \einst{y}{}$.
    The premise of T-Inst holds because the derivation of variables does not depend on the content of $\envmap$ beyond well-formedness.
    Let $\priority{\pr'} = \fst{\envmap(y)}$. Because $\envmap = \envmap_p \circ \envmap_2$, we have 
    $\envmap(y) = \envmap_p(y) \cdot \envmap_2(y)$. Since $\envmap_p(y) = \emptyenv$, $\envmap(y) = \envmap_2(y)\in \intervalvar$.
    We conclude applying rule T-Inst.
    \qed
 \end{proof}

 \begin{lemma}[Type substitution]\
  \label{lem:type-subst}
  \begin{enumerate}
    \item Suppose that $\istype{\envpoly}{\envpr}{U}$. If 
    $\istype{\envpoly, \tvar::\pr}{\envpr}{T}$ 
    and $\nuLowestpriorityEnv{U}{} = \pr$
    then 
    $\istype{\envpoly}{\envpr}{\subs{T}{U}{\tvar}}$.
    \item If $\ctxsplit{\envpoly, \tvar::\pr}{\envpr}{\envtvars}{\envtvars_1}{\envtvars_2}$
    and $\istype{\envpoly}{\envpr}{U}$
    and $\nuLowestpriorityEnv{U}{} = \pr$
    then \\ $\ctxsplit{\envpoly}{\envpr}{\subs{\envtvars}{U}{\tvar}}{\subs{\envtvars_1}{U}{\tvar}}{\subs{\envtvars_2}{U}{\tvar}}$
    \item If $\typingpoly{\envpoly, \tvar :: \pr}{\envtvars}{\envpr}{\envmap}{e}{T}{\emptyenv}{\pr_1}{\oracons}$
    and $\istype{\envpoly}{\envpr}{U}$
    and $\nuLowestpriorityEnv{U}{} = \pr$
    then \\ $\typingpoly{\envpoly}{\subs{\envtvars}{U}{\tvar}}{\envpr}{\envmap}{\etapp{e}{U}}{\subs{T}{U}{\tvar}}{\emptyenv}{\pr_1}{\oracons}$.
  \end{enumerate}
 \end{lemma}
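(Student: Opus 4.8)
The plan is to establish the three items by rule induction, in the stated order, since item~2 relies on item~1 and item~3 on both. The common technical ingredient is an auxiliary observation that the priority of a type is invariant (up to $=_{\envpr}$) under the substitution of interest: if $\istype{\envpoly}{\envpr}{U}$ and $\lowestpriorityEnv{U} =_{\envpr} \pr$, then for every $T$ with $\istype{\envpoly,\tvar::\pr}{\envpr}{T}$ one has $\lowestpriorityEnv{\subs{T}{U}{\tvar}} =_{\envpr} \lowestpriorityEnv{T}$. This is a short induction on $T$ following~\cref{fig:priority-types}: every clause of $\lowestpriorityEnv{\cdot}$ is either a constant ($\prtop$), reads a priority annotation that substitution leaves untouched, or propagates compositionally through $\lub$ (which respects $=_{\envpr}$); the only genuinely non-structural case is $T=\tvar$, where $\lowestpriorityEnv{\tvar}=\pr$ is read off the binding $\tvar::\pr$ and $\pr =_{\envpr}\lowestpriorityEnv{U}$ is the hypothesis. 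This fact is precisely what lets us transport side conditions of the shape $\pr' <_{\envpr}\lowestpriorityEnv{-}$ and $\lowestpriorityEnv{-} =_{\envpr}\prtop$ across a substitution.

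For item~1, I would do rule induction on $\istype{\envpoly,\tvar::\pr}{\envpr}{T}$ (\cref{fig:type-formation}). The axioms \textsf{F-Unit}, \textsf{F-Close}, \textsf{F-Wait}, \textsf{F-Skip} are immediate since those types are closed. \textsf{F-Var} splits on whether the variable is $\tvar$: if so, $\subs{\tvar}{U}{\tvar}=U$ and the goal is the hypothesis $\istype{\envpoly}{\envpr}{U}$; otherwise the variable remains in $\envpoly$. The compositional rules \textsf{F-Abs}, \textsf{F-Pair}, \textsf{F-Seq}, \textsf{F-MsgOut}, \textsf{F-MsgIn}, \textsf{F-IntChoice}, \textsf{F-ExtChoice} follow by applying the induction hypothesis to the sub-derivations and, where present, re-establishing the priority premise with the auxiliary fact. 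The binder rules \textsf{F-TAbs}, \textsf{F-PAbs}, \textsf{F-Rec} require $\alpha$-renaming the bound variable away from $\tvar$ and $\ftv{U}$, after which the induction hypothesis applies under the extended polymorphic/priority context, using items~1--2 of the Weakening lemma to move $U$ there. Item~2 is then a direct induction on the derivation of $\ctxsplit{\envpoly,\tvar::\pr}{\envpr}{\envtvars}{\envtvars_1}{\envtvars_2}$ (\cref{fig:ctx-split}): \textsf{S-Emp} is trivial, and in \textsf{S-Top}/\textsf{S-Left}/\textsf{S-Right} one applies the induction hypothesis to the smaller split, uses item~1 to re-derive $\istype{\envpoly}{\envpr}{\subs{T}{U}{\tvar}}$, and uses the auxiliary fact to preserve the priority test on $T$.

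For item~3 I would induct on the derivation of $\typingpoly{\envpoly,\tvar::\pr}{\envtvars}{\envpr}{e}{T}{\pr_1}$, producing a typing for the term $\subs{e}{U}{\tvar}$ obtained by substituting $U$ for $\tvar$ throughout $e$. \textsf{T-Const} is vacuous because the constant types of~\cref{fig:types-constants} are closed; \textsf{T-Var} uses the auxiliary fact to keep $\lowestpriorityEnv{\subs{\envtvars}{U}{\tvar}} =_{\envpr} \prtop$. The structural rules \textsf{T-App}, \textsf{T-Pair}, \textsf{T-LetPair}, \textsf{T-LetUnit}, \textsf{T-Match} combine the induction hypotheses with item~2 for the context-split premise and the auxiliary fact for the $<_{\envpr}$ premises, while \textsf{T-TApp}, \textsf{T-PApp}, \textsf{T-New}, \textsf{T-Sel} additionally invoke item~1 for their type-formation premises; the binder rules \textsf{T-TAbs}, \textsf{T-PAbs}, \textsf{T-AbsLin}, \textsf{T-AbsUn} (and the per-branch function-typing premise of \textsf{T-Match}) are handled as in item~1 with $\alpha$-conversion and Weakening. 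I expect the case \textsf{T-Eq} to be the main obstacle: it forces proving that the equivalence $\simeq$ is closed under this substitution, \ie that $\envpoly,\tvar::\pr \mid \envpr \vdash T_1 \simeq T_2$ together with $\istype{\envpoly}{\envpr}{U}$ and $\lowestpriorityEnv{U} =_{\envpr} \pr$ gives $\envpoly \mid \envpr \vdash \subs{T_1}{U}{\tvar} \simeq \subs{T_2}{U}{\tvar}$. This is a compatibility property of the coinductively defined, priority-refined type equivalence; it mirrors the substitutivity results for context-free session type equivalence in prior work, but one must additionally check that equated type constructors still carry syntactically identical priorities after substitution, which is once more supplied by the auxiliary priority-invariance fact.
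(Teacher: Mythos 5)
Your proposal follows the same route as the paper: rule induction on the first hypothesis of each item, with item~2 using item~1 and item~3 using both, so it is essentially the paper's (one-line) proof elaborated in detail. The auxiliary priority-invariance fact and the observation that \textsf{T-Eq} requires substitutivity of $\simeq$ are exactly the points the paper leaves implicit, and your treatment of them is correct (note only that the $\etapp{e}{U}$ in the statement of item~3 is best read, as you do, as the substituted term $\subs{e}{U}{\tvar}$).
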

 \begin{proof}
    (1) By induction on the derivation of $\istype{\envpoly, \tvar::\pr}{\envpr}{T}$.
    (2) By induction on the derivation of $\ctxsplit{\envpoly, \tvar::\pr}{\envpr}{\envtvars}{\envtvars_1}{\envtvars_2}$.
    (3) By induction on the derivation of $\typingpoly{\envpoly, \tvar :: \pr}{\envtvars}{\envpr}{\envmap}{e}{T}{\emptyenv}{\pr_1}{\oracons}$, 
    using items 1~and~2.\qed
 \end{proof}




\begin{lemma}[Priority substitution]\
  \label{lem:priority-subst}
  \begin{enumerate}
    \item If $\istype{\envpoly}{\envpr, \prvar\in\intervalvar}{T}$
    and $\pr \in \intervalvar$
    then $\istype{\subs{\envpoly}{\pr}{\prvar}}{\subs{\envpr}{\pr}{\prvar}}{\subs{T}{\pr}{\prvar}}$.
    \item If $\ctxsplit{\envpoly}{\envpr, \prvar\in\intervalvar}{\envtvars}{\envtvars_1}{\envtvars_2}$
    and $\pr \in \intervalvar$ then
    $\ctxsplit{\subs{\envpoly}{\pr}{\prvar}}{\subs{\envpr}{\pr}{\prvar}}{\subs{\envtvars}{\pr}{\prvar}}{\subs{\envtvars_1}{\pr}{\prvar}}{\subs{\envtvars_2}{\pr}{\prvar}}$.
    \item\label{it:3-lemma9} If $\typingpoly{\envpoly}{\envtvars}{\envpr, \prvar\in \intervalvar}{\envmap}{e}{T}{\emptyenv}{\pr_1}{\oracons}$
    and $\pr\in \intervalvar$
    then \\ $\typingpoly{\subs{\envpoly}{\pr}{\prvar}}{\subs{\envtvars}{\pr}{\prvar}}{\subs{\envpr}{\pr}{\prvar}}{\envmap}{\subs{\e}{\pr}{\prvar}}{\subs{T}{\pr}{\prvar}}{\emptyenv}{\subs{\pr_1}{\pr}{\prvar}}{\oracons}$.
  \end{enumerate}
\end{lemma}
\begin{proof}
    (1) By induction on the derivation of $\istype{\envpoly}{\envpr, \prvar\in\intervalvar}{T}$.
    (2) By induction on the derivation of $\ctxsplit{\envpoly}{\envpr, \prvar\in\intervalvar}{\envtvars}{\envtvars_1}{\envtvars_2}$.
    (3) By induction on the derivation of $\typingpoly{\envpoly}{\envtvars}{\envpr, \prvar\in \intervalvar}{\envmap}{e}{T}{\emptyenv}{\pr_1}{\oracons}$
    using items 1 and 2.\qed
\end{proof}




We may now prove that expression reduction preserves typing
(\cref{lem:exp-reduction-type-pres}).

\begin{proof}[\cref{lem:exp-reduction-type-pres}]
    The proof is done by induction on the derivation of $\red{e_1}{e_2}$, using Lemmas~\ref{lem:weakening},
    \ref{lem:values}, 
    \ref{lem:value-subst}, \ref{lem:type-subst} and \ref{lem:priority-subst}.
    We sketch some cases:

    \textbf{Case E-LetElim.} Assume that 
    $\typingpoly{\envpoly}{\envtvars}{\envpr}{\envmap}{\elet{\evar}{v}{e}}{T}{\bar{\pi}}{\pr}{\oracons}$. 
    By T-Let we know that $\envtvars = \envtvars_1 \circ \envtvars_2$ and $\envmap = \envmap_1 \circ \envmap_2$
    and $\pr = \priority{\pr_1} \glb \priority{\pr_2}$ and 
    $\priority{\bar{\pi}} =\priority{\emptyenv}$.
    Premises of T-Let yield \inlineequation[eq:type-pr-eletelim1]{\typingpoly{\envpoly}{\envtvars_1}{\envpr}{\envmap}{v}{U}{\emptyenv}{\pr_1}{\oracons}}
    and \inlineequation[eq:type-pr-eletelim2]{\typingpoly{\envpoly}{\envtvars_2, x:U}{\envpr}{\envmap_2}{e}{T}{\emptyenv}{\pr_2}{\oracons}}
    where $\priority{\rho_1} < \nuLowestpriorityEnv{\envtvars_2}{}$. Using~\cref{lem:values} we get $\priority{\pr_1} = \prbot$, so $\priority{\pr_2} = \pr$.
    Using~\eqref{eq:type-pr-eletelim1},~\eqref{eq:type-pr-eletelim2} and~\cref{lem:value-subst} we get 
    $\typingpoly{\envpoly}{\envtvars}{\envpr}{\envmap}{\subs{e}{v}{\evar}}{T}{\emptyenv}{\pr}{\oracons}$.

    \textbf{Case E-PairElim.} Assume that 
    $\typingpoly{\envpoly}{\envtvars}{\envpr}{\envmap}{\eletpair{\evar_1}{\evar_2}{\epair{v_1}{v_2}}{e}}{T}{\bar{\pi}}{\pr}{\oracons}$. 
    By T-LetPair we know that 
    $\envtvars = \envtvars_1 \circ \envtvars_2$ and $\envmap = \envmap_1 \circ \envmap_2$
    and $\pr = \priority{\pr_1} \glb \priority{\pr_2}$ and 
    $\priority{\bar{\pi}} =\priority{\emptyenv}$.
    The first premise of T-LetPair yields 
    \begin{equation}
      \label{eq:type-pr-eletpairelim1}
      \typingpoly{\envpoly}{\envtvars_1}{\envpr}{\envmap}{\epair{v_1}{v_2}}{\tprod{U_1}{U_2}}{\bar{\pi_0}}{\pr_1}{\oracons}
    \end{equation}
    to which we apply inversion of T-Pair and conclude that $\priority{\bar{\pi_0}} = \priority{\emptyenv}$ 
    and  
    \begin{align}
      \label{eq:type-pr-eletpairelim2}\typingpoly{\envpoly}{\envtvars'}{\envpr}{\envmap_1'}{v_1}{U_1}{\emptyenv}{\pr_1'}{\oracons}\\
      \label{eq:type-pr-eletpairelim3}\typingpoly{\envpoly}{\envtvars''}{\envpr}{\envmap_1''}{v_2}{U_2}{\emptyenv}{\pr_1''}{\oracons}
    \end{align}
    for $\envtvars_1 = \envtvars_1'\circ \envtvars_1''$ and $\envmap = \envmap' \circ \envmap''$ and 
    $\priority{\pr_1} = \priority{\pr_1'}\glb \priority{\pr_1''}$. Using ~\cref{lem:values} we get $\priority{\pr_1'} =\priority{\pr_1''}=\prbot$, 
    so $\priority{\pr_1} = \prbot$. The last premise of T-LetPair yields  
    \begin{equation}
      \label{eq:type-pr-eletpairelim4}
      \typingpoly{\envpoly}{\envtvars_2, x_1:U_1, x_2:U_2}{\envpr}{\envmap_2}{\e}{T}{\emptyenv}{\pr}{\oracons}
    \end{equation}
    Using~\cref{lem:value-subst} twice, first on~\eqref{eq:type-pr-eletpairelim4},~\eqref{eq:type-pr-eletpairelim2}
    and then using~\eqref{eq:type-pr-eletpairelim3}, we conclude that 
    $\typingpoly{\envpoly}{\envtvars}{\envpr}{\envmap}{\subs{\subs{e}{v_1}{\evar_1}}{v_2}{\evar_2}}{T}{\emptyenv}{\pr}{\oracons}$

    \textbf{Case E-PInst.} Assume that 
    $\typingpoly{\envpoly}{\envtvars}{\envpr}{\envmap}{\einst{x}{\envmap}}{T}{\bar{\pi}}{\pr}{\oracons}$.
    By T-Inst we know that $T = \subs{U}{\fst{\envmap(x)}}{\prvar}$ and $\priority{\bar{\pi}} =\priority{\emptyenv}$.
    Inversion of T-Inst yields 
    $\typingpoly{\envpoly}{\envtvars}{\envpr}{\envmap}{x}{\tpolyp{S}{\prvar}{\intervalvar}{U}}{\emptyenv}{\pr}{\oracons}$
    and $\fst{\envmap(x)} \in \intervalvar$. We conclude applying T-PApp.

    \textbf{Case E-PApp.} Assume that 
    $\typingpoly{\envpoly}{\envtvars}{\envpr}{\envmap}{\eprapp{(\eprabs{\prvar}{v})}{\priority{\pr'}}}{T}{\bar{\pi}}{\pr}{\oracons}$.
    By T-PApp we know that $T = \subs{U}{\pr'}{\prvar}$ and 
    $\priority{\bar{\pi}} =\priority{\emptyenv}$.
    Inversion of T-PApp yields 
    $\typingpoly{\envpoly}{\envtvars}{\envpr}{\envmap}{\eprabs{\prvar}{v}}{\tpolyp{}{\prvar}{\intervalvar}{U}}{\emptyenv}{\pr}{\oracons}$
    and $\priority{\pr'} \in \intervalvar$.
    Inversion of T-PAbs then leads to 
    $\typingpoly{\envpoly}{\envtvars}{\envpr, \prvar\in\intervalvar}{\envmap}{v}{U}{\emptyenv}{\pr}{\oracons}$.
    Applying~\cref{it:3-lemma9} of~\cref{lem:priority-subst} we conclude that 
    $\typingpoly{\envpoly}{\envtvars}{\envpr}{\envmap}{\subs{v}{\priority{\pr'}}{\prvar}}{T}{\emptyenv}{\pr}{\oracons}$.\qed
\end{proof}

Now we can prove that configuration reduction preserves typing.

\begin{proof}[\cref{lem:subject-red-processes}]
    The proof is done by induction on the derivation of $\red{\config{C}_1}{\config{C}_2}$, using~\cref{lem:exp-reduction-type-pres}.
    We sketch the proofs for some cases:
    
    \textbf{Case R-Fork.}
    Assume that $\runtimetyping{\envtvars}{\evalapp{\threadctx}{\eapp{\eprapp{\eprapp{\efork}{\_}}{\_}}{e}}}{\thread}{\Psi}$.
    Depending on $\threadctx$, we invert C-Main or C-Child and get $\typingpoly{\emptyenv}{\envtvars}{\emptyenv}{\Psi}{\eapp{\eprapp{\eprapp{\efork}{\_}}{\_}}{e}}{T}{\bar{\pi}}{\pr}{\oracons}$
    where $\nuLowestpriorityEnv{T}{} = \prtop$. (Note that $T=\tunit$ when inversion of C-Child is used.)
    By T-App we know that $\priority{\bar{\pi}} =\priority{\emptyenv}$.
    Inversion of T-App yields 
    \begin{align}
      \label{eq:type-pr-conf-fork1} \typingpoly{\emptyenv}{\emptyenv}{\emptyenv}{\Psi}{\eprapp{\eprapp{\efork}{\_}}{\_}}{\tarrow{U}{T}{\prtop}{\prbot}{}}{\epsilon}{\pr_1}{\oracons}\\
      \label{eq:type-pr-conf-fork2}\typingpoly{\emptyenv}{\envtvars}{\emptyenv}{\Psi}{e}{U}{\epsilon}{\pr_2}{\oracons}
    \end{align}
    where $\pr = \prglb{\pr_1}{\pr_2}$.
    Inversion of T-Const on~\eqref{eq:type-pr-conf-fork1} yields  $\priority{\pr_1} =\prbot$, so $\priority{\pr_2} =\pr$.
    On the other hand, looking at the type for $\keyword{fork}$ in~\cref{fig:types-constants} we infer that 
    $U = \tarrow{\tunit}{\tunit}{-}{-}{\unmult}$.
    From T-Const and~\cref{fig:types-constants} we know that 
    $\typingpoly{\emptyenv}{\emptyenv}{\emptyenv}{\emptyenv}{\eunit}{\tunit}{\epsilon}{\prbot}{\oracons}$,
    so we can use T-App and~\eqref{eq:type-pr-conf-fork2} to conclude that
    \inlineequation[eq:type-pr-conf-fork3]{\typingpoly{\emptyenv}{\envtvars}{\emptyenv}{\Psi}{\eapp{e}{\eunit}}{\tunit}{\epsilon}{\pr}{\oracons}}.
    We apply C-Child and C-Main (depending on $\threadctx$) and C-Par to conclude.

    \textbf{Case R-NewPoly.}
    Assume that $\runtimetyping{\envtvars}{\evalapp{\threadctx}{\enewpoly{\EuScript{S}^\fforall}{n_1}{n_2}}}{\thread}{\Psi}$.
    Inversion of C-Main yields
    $\typingpoly{\emptyenv}{\envtvars}{\emptyenv}{\Psi}{\enewpoly{\EuScript{S}^\fforall}{n_1}{n_2}}{T}{\bar{\pi}}{\pr}{\oracons}$.
    Inversion of T-NewPoly yields 
    \inlineequation[eq:type-pr-conf-newpoly2]{\wfpolyctx{\envpoly}{\envpr}{\envtvars}}
    and 
    \inlineequation[eq:type-pr-conf-newpoly3]{\nuLowestpriorityEnv{\envtvars}{} = \prtop}
    and 
    $\priority{\bar{\pi}} = [\priority{n_1} + k \cdot \priority{n_2} \mid k\in\mathbb{N}]$.

    On the other hand, for
    $\runtimetyping{\envtvars}{\confnu{x}{y}{\evalapp{\threadctx}{\epair{x}{y}}}{\bar{\pi}}}{\thread}{\Psi}$:
    inverting C-New yields 
    \[\runtimetyping{\envtvars, x:\EuScript{S}^\fforall, y:\dualof{\EuScript{S}^\fforall}}{\evalapp{\threadctx}{\epair{x}{y}}}{\thread}{\Psi'},\]
    where $\Psi' = \Psi,x: \EuScript{S}^\fforall \mapsto \priority{\bar{\pi}}, y: \dualof{\EuScript{S}^\fforall} \mapsto \priority{\bar{\pi}}$.
    By inversion of C-Child/C-Main, we get 
    \inlineequation[eq:type-pr-conf-newpoly4]{\typingpoly{\emptyenv}{\envtvars, x:\EuScript{S}^\fforall, y:\dualof{\EuScript{S}^\fforall}}{\emptyenv}{\Psi'}{\epair{x}{y}}{T}{\bar{\pi_0}}{\pr}{\oracons}}.
    We apply inversion of T-Pair and T-Var (twice, for $x$ and $y$), and conclude using~\eqref{eq:type-pr-conf-newpoly2} and~\eqref{eq:type-pr-conf-newpoly3}.\qed
\end{proof}

\paragraph{Progress and Deadlock Freedom}

With the concepts introduced in~\cref{sec:safety} to describe the state of
communication in a given expression, we can prove progress for the term language (\cref{thm:progress}).

\begin{proof}[\cref{thm:progress}]
  By rule induction on the hypothesis $\typingpoly{\envpoly}{\envtvars}{\envpr}{\envmap}{\e}{T}{\bar{\pi}}{\pr}{\oracons}$.
    
    \begin{enumerate}
    	\item     \emph{Cases T-Var, T-Const, T-TAbs, T-PAbs, T-AbsLin, T-AbsUn, T-Sel.}
    Variables, constants, term, type and priority abstractions, as well as $\eselect{k}$
    are all values.

   \item  \emph{Case T-App.}
    In this case, $\e = \eapp{e_1}{e_2}$, $\Gamma = \Gamma_1\circ \Gamma_2$ and $\pr = \prglb{\pr_1}{\pr_2}$
    and $\priority{\bar{\pi}} = \priority{\emptyenv}$. 
    Premises to the rule include
    $\typingpoly{\envpoly}{\envtvars_1}{\envpr}{\envmap}{e_1}{\tarrow{T_1}{T_2}{\pr}{\prsigma}{m}}{\emptyenv}{\pr_1}{\oracons}$
    and 
    $\typingpoly{\envpoly}{\envtvars_2}{\envpr}{\envmap}{e_2}{T_1}{\emptyenv}{\pr_2}{\oracons}$.
    Since $\envtvars$ only has session types, $\envtvars_1$ and $\envtvars_2$ also only have sessions.
    By induction hypothesis on $e_1$, we can have three cases:\\
    \textbf{Case 1: $e_1$ is a value $v_1$.} Here we proceed by induction on $e_2$ and 
      we get three subcases:
      \begin{itemize}
        \item $e_2$ is a value $v_2$. Since $v_1$ has type $\tarrow{T_1}{T_2}{\pr}{\prsigma}{m}$, then one of the following holds: \\
        $v_1 = \eabs{x}{U}{e}{m}$, in which case $e$ reduces by \textsf{E-App}, or\\
        $v_1 = \eselect{k}$, which yields $\typingpoly{\envpoly}{\envtvars_2}{\envpr}{\envmap}{v_2}{\tintchoice{\ell}{S_\ell}{L}{\pr'}}{\emptyenv}{\pr_2}{\oracons}$
          and so, by inversion on the typing rules, $v_2 = x$ and $e$ is ready, or\\
        $v_1 = \etapp{\eprapp{\etapp{\eprapp{\esend}{\pr}}{T}}{\prsigma}}{S}$, in which case 
          $\eapp{v_1}{v_2}$ is a value, or \\
        $v_1 = \eapp{\etapp{\eprapp{\etapp{\eprapp{\esend}{\pr}}{T}}{\prsigma}}{S}}{v}$: similarly to  
          $\eselect{k}$, by inversion we should have $v_2 = x$ and so $e$ is ready, or\\
        $v_1 = \etapp{\eprapp{\etapp{\eprapp{\ereceive}{\pr}}{T}}{\prsigma}}{S}, 
          \eprapp{\eclose}{\pr'}, \eprapp{\ewait}{\pr'}$, for which the analysis is similar, or\\
        $v_1 = \eprapp{\eprapp{\efork}{\pr'}}{\pr''}$, in which case $\eapp{v_1}{v_2}$ is ready.
        \item $e_2$ reduces. Take $E = \eapp{v_1}{\evalhole{}}$ and $e=E[e_2]$ reduces by \textsf{E-Ctx}.
        \item $e_2$ is ready and has the form $\evalapp{E}{e_2'}$. Considering $E'=\eapp{v_1}{E}$, 
        we get $e=E'[e_2']$ and so is ready.
      \end{itemize}
      \textbf{Case 2: $e_1$ reduces.} Considering $E = \eapp{\evalhole{}}{e_2}$, we get $e = E[e_1]$, which reduces by \textsf{E-Ctx}.\\
      \textbf{Case 3: $e_1$ is ready and has the form $\evalapp{E}{e_1'}$.} Consider $E' = \eapp{E}{e_2}$ and notice that $e = E'[e_1']$; 
      this means that $e$ is ready.

    \item  \emph{Cases T-TApp, T-PApp} are similar.

    \item \emph{Cases T-Inst}. In this case, $e=\einst{x}{\envmap}$ so it reduces by \textsf{E-PInst}.

    \item \emph{Case T-Pair}. In this case, $e = \epair{e_1}{e_2}$. By induction hypothesis, either both $e_1$ and $e_2$ are values, in which case $e$ is a value, 
    or one of them is not a value, say $e_1$. If $e_1$ reduces, consider the 
    context $E = \epair{\evalhole{}}{e_2}$ and conclude that $e$ reduces by \textsf{E-Ctx}. 
    If $e_1$ is ready and has the form $\evalapp{E}{e_1'}$, 
    consider the context $\epair{\evalapp{E}{e_1'}}{e_2}$ and conclude that $e$ is also ready.
    \item \emph{Case T-Let}. In this case $e = \elet{x}{e_1}{e_2}$. By induction hypothesis on $e$, one of the 
    following holds: $e_1$ is a value, in which case $e$ reduces by \textsf{E-LetElim};
    or $e_1$ reduces, in which case we consider $E = \elet{x}{\evalhole{}}{e_2}$ to conclude
    that $e$ reduces; or $e_1$ is ready and has the form $\evalapp{E}{e'}$ and we consider
    $E' = \elet{x}{\evalapp{E}{e'}}{e_2}$ and conclude that $e$ is ready.
    \item \emph{Case T-LetPair}. In this case $e = \eletpair{x_1}{x_2}{e_1}{e_2}$. By induction 
    hypothesis on $e_1$, one of the following holds: $e_1$ is a value, in which case $e_1 = \epair{v_1}{v_2}$
    and so $\e$ reduces by \textsf{E-PairElim}; or $e_1$ reduces, in which case we 
    consider $E = \eletpair{x_1}{x_2}{\evalhole{}}{e_2}$ to conclude that $e$ reduces;
    or $e_1$ is ready and has the form $\evalapp{E}{e_1'}$ we consider $E' = \eletpair{x_1}{x_2}{\evalapp{E}{e_1'}}{e_2}$ 
    and conclude that $e$ is ready.

    \item  \emph{Case T-Match} is similar.

   \item  \emph{Cases T-New, T-NewPoly}. In these cases, $e=\enew{S}$ or $e = \enewpoly{\EuScript{S}^\fforall}{n_1}{n_2}$. So, $e$ is ready by definition.

   \item  \emph{Case T-Eq} follows by induction hypothesis.\qed
        \end{enumerate}

\end{proof}

To ultimately prove that well-typed closed configurations are 
deadlock free (\cref{{thm:config-deadlock-free}}), we need an 
auxiliary lemma.

\begin{lemma}
  \label{lem:congruent-to-canonical-form}
  If $\runtimetyping{\envtvars}{\config{C}}{\mainthread}{\Psi}$ then there is some
  $\config{D}$ such that $\config{C}\structcong\config{D}$ and
  $\config{D}$ is in canonical form.
\end{lemma}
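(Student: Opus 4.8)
The plan is to prove the statement by induction on the derivation of $\runtimetyping{\envtvars}{\config{C}}{\mainthread}$, but strengthened so that it also covers child‑flagged configurations. Concretely I would establish, by a single induction on derivations $\runtimetyping{\envtvars}{\config{C}}{\thread}$, that there is a $\config{D}$ with $\config{C}\structcong\config{D}$ such that: if $\thread=\mainthread$ then $\config{D}$ is in canonical form; and if $\thread=\childthread$ then either $\config{D}=\confnu{x_1}{y_1}{\cdots\confnu{x_n}{y_n}{(\confthread{\childthread}{e_1}\|\cdots\|\confthread{\childthread}{e_m})}}$ with $m\ge1$ and no $e_i$ a value, or $\config{D}=\confthread{\childthread}{\eunit}$ (the ``fully terminated'' case). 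The engine of the argument is the structural‑congruence laws of \Cref{fig:reduction}: scope extrusion $\confpar{(\confnu{x}{y}{\config{C}})}{\config{D}}\structcong\confnu{x}{y}{(\confpar{\config{C}}{\config{D}})}$ to float all restrictions to the front, commutativity and associativity of $\|$ to flatten the thread pool into one multiset, $\nu$‑reordering to line up the binders, and the garbage‑collection laws $\confpar{\confthread{\childthread}{\eunit}}{\config{C}}\structcong\config{C}$ and $\confnu{x}{y}{\confthread{\childthread}{\eunit}}\structcong\confthread{\childthread}{\eunit}$ to erase terminated child threads. Throughout I would $\alpha$‑rename bound channel names so scope extrusion never captures (Barendregt convention), and use \Cref{t:cong} to track flags across the congruences.

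Case by case: \textsf{C-Main} gives $\config{C}=\confthread{\mainthread}{e}$, already canonical with $n=m=0$; \textsf{C-Child} gives $\config{C}=\confthread{\childthread}{e}$, which is of the required child shape with $n=0,m=1$ unless $e$ is a value, in which case $e=\eunit$ (see the obstacle below) and $\config{D}=\confthread{\childthread}{\eunit}$. For \textsf{C-New}, $\config{C}=\confnu{x}{y}{\config{C}'}$ with $\runtimetyping{\envtvars,x{:}S,y{:}\dualof{S}}{\config{C}'}{\thread}$; apply the IH to $\config{C}'$ and prefix the binder $\confnu{x}{y}{-}$, which preserves the shape since all restrictions still sit at the front --- except that if $\thread=\childthread$ and $\config{C}'\structcong\confthread{\childthread}{\eunit}$ one first collapses $\confnu{x}{y}{\confthread{\childthread}{\eunit}}\structcong\confthread{\childthread}{\eunit}$. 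For \textsf{C-Par}, $\config{C}=\confpar{\config{C}_1}{\config{C}_2}$ with $\thread_1+\thread_2=\mainthread$; since there is no rule for $\mainthread+\mainthread$, exactly one side is main. Apply the IH to each side, bring each $\config{D}_i$ to ``restrictions‑then‑threads'' form (using the terminated‑child clause when needed), then float all restrictions to the front by scope extrusion and $\nu$‑reordering and merge the two thread multisets by commutativity/associativity of $\|$; finally erase every terminated child thread via the garbage‑collection law. The outcome has all restrictions at the front, a (possibly empty) multiset of non‑value child threads, and exactly one main thread --- i.e.\ it is in canonical form.

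The main obstacle is precisely the child thread whose body is a value: canonical forms forbid such threads, so they must be erased, but the erasure law only fires when the thread sits beside another configuration, which is why the deletion is deferred to the enclosing \textsf{C-Par} node and why the induction hypothesis needs the degenerate clause $\config{D}=\confthread{\childthread}{\eunit}$. To even invoke that law one needs that a well‑typed, value‑bodied child thread has body exactly $\eunit$: this follows by inversion on \textsf{C-Child} and on the typing of values (as in the proof of \Cref{thm:progress}), using that the run‑time typing contexts to which this lemma is applied contain only session types, so no variable has type $\tunit$ and the sole remaining value of type $\tunit$ is $\eunit$. The remaining bookkeeping --- freshness for scope extrusion, reordering of $\nu$‑binders, and flattening nested parallel compositions --- is routine.
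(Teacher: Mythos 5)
Your proposal is correct and follows essentially the same route as the paper, whose proof is just the one-line remark that the result ``follows from structural congruence, observing that the only values that child threads might have are of the form $\eunit$'' --- your strengthened induction over the typing derivation, the use of scope extrusion and the garbage-collection laws, and the deferred erasure of $\confthread{\childthread}{\eunit}$ threads are exactly the elaboration that remark presupposes. If anything you are more careful than the paper on the one delicate point (why a value-bodied child thread must have body exactly $\eunit$), correctly noting that this needs an inversion argument on \textsf{C-Child} and the typing of values.
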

\begin{proof}
  It follows from structural congruence, 
  observing that the only values that child threads might have 
  are of the form $\eunit$.\qed
\end{proof}

We can now prove the results on deadlock freedom:

\begin{proof}[\cref{thm:config-deadlock-free}]
  \begin{enumerate}
    \item 
  Let $\config{C}$ be a configuration in canonical form. Assume that $\config{C}$ is of the form:
  \[\config{C}=\confnu{x_1}{y_1}{\ldots  \confnu{x_n}{y_n}{(\confpar{\confthread{\childthread}{e_1}}{\confpar{\ldots}{\confpar{\confthread{\childthread}{\e_m}}{\confthread{\mainthread}{\e}}}})}{\bar{\pi}_n}}{\bar{\pi}_1}.\]
  By \Cref{thm:progress}, we know that for each $e_i$, either:
  \begin{itemize}
    \item $\red{e_i}{e_i'}$, in which case we apply \textsf{R-LiftM}, \textsf{R-LiftC}
    and find $\config{D}$ such that $\red{\config{C}}{\config{D}}$, or
    \item $e_i$ is ready (provided that it cannot be a value by definition
    of canonical form).
  \end{itemize}
  Likewise, for $\e$, we either have a reduction $\red{\e}{\e'}$
  that is lifted to the configuration or, either, $\e$ is ready or it is a value.
  Assume \emph{all expressions} $\e_i$ are ready and pick the ready expression $\e' \in\{e_1, \ldots, \e_m, \e\}$ with 
  the lowest	 priority bound.\\
  \textbf{Case: $\e' = \evalapp{\evalctx}{\enew{S}}$.} We apply \textsf{R-New} and 
    then \textsf{R-LiftC}.\\
  \textbf{Case: $\e' = \evalapp{\evalctx}{\enewpoly{\EuScript{S}^\fforall}{n_1}{n_2}}$.} We apply \textsf{R-NewPoly} 
  and 
    then \textsf{R-LiftC}.\\
    \textbf{Case: $\e' = \evalapp{\evalctx}{\eapp{\eprapp{\eprapp{\efork}{\pr}}{\prsigma}}{\e_3}}$.} We apply \textsf{R-Fork} and then
    \textsf{R-LiftC}.\\
    \textbf{Case: $\e'$ is an action.} Suppose the action is on 
    endpoint $x: S$. Let $y:\dualof{S}$ be the dual endpoint of $x$. 
    Since dual endpoints have the same priority sequence, 
    the action on which $e'$ is stuck has the same priority as its dual counterpart, so 
    the dual action cannot occur in $e'$.
    Since channels are linear, there must be an expression $\e''\in \{\e_1,\ldots, \e_m, \e\}\setminus\{\e'\}$
    that uses $y$. The expression $\e''$ must either be ready or a value. We analyze these two sub-cases:
    
      \textbf{Sub-case: $\e''$ is ready.} 
      Expression $\e''$ must be ready to act on $y$, otherwise there would
      be another action on $y$ with priority  lower than that of ${\e''}$,
      which contradicts our choice that $e'$ has lowest priority.
      Hence, $e'$ and $e''$ reduce using \textsf{E-Com}, \textsf{E-Ch} 
      or \textsf{R-Close}. Then, we apply \textsf{R-LiftC}.
      
      \textbf{Sub-case: $\e''$ is a value.} Since $\config{C}$ is in canonical form, 
      $\e''$ can only be a value if $\e'' = \e$. Since $\e''$ is a value, it 
      is either $y$, $\epair{y}{v}$, $\epair{v}{y}$, or $\eabs{\_}{\_}{\e'''}{m}$, where $\e'''$
      contains $y$. In all these cases typing the expression in the main thread 
      would not lead to priority $\prtop$ (as is imposed by \textsf{C-Main}), 
      which leads to a contradiction.
      
  Suppose now that 
 $\config{C} =\confnu{x_1}{y_1}{\ldots  \confnu{x_n}{y_n}{\confthread{\mainthread}{\e}}{\bar{\pi}_n}}{\bar{\pi}_1}$.
  By \Cref{thm:progress} we either have a reduction $e\rightarrow e'$, or $e$ is ready, or $e$ is a value. By the reasoning above, if $e$ is ready then it cannot be an action, otherwise we would need to have an endpoint in scope with the same (smallest) priority within the same expression, which is not possible; we are then left with \textsf{R-New} and \textsf{R-Fork} which were analyzed above. If $e$ is a value, by the reasoning above it cannot include endpoints and so $v$ is an unrestricted value.\qed

\item 
  By \Cref{lem:congruent-to-canonical-form}, there exists $\config{D}_1$
  in canonical form 
  such that $\config{C}\structcong\config{D}_1$. 
  We have that either $\config{D}_1$ reduces forever, in which case $\config{C}$ is deadlock free, or
  we apply \Cref{thm:config-deadlock-free} (1), followed
  by \Cref{lem:subject-red-processes,lem:congruent-to-canonical-form}
  to get $\config{D}_1\rightarrow^*{\config{D}_n}$, where 
  $\runtimetyping{\emptyenv}{\config{D}_n}{\mainthread}{\emptyenv}$  and $\config{D}_n \not \rightarrow$.
  To show that $\config{D}_n \structcong \confthread{\mainthread}{v}$, we 
  first prove that $\config{D}_n$ does not contain any session type in scope. 
  
  For that purpose, let us assume that 
  $\config{D}_n = \confnu{x}{y}{\config{D}_{n+1}}{\bar{\pi}}$.
  By \textsf{C-New} we get $\runtimetyping{x:S, y:\dualof{S}}{\config{D}_{n+1}}{\mainthread}{\envmap}$, 
  where $\envmap = \{x:S\mapsto{\bar{\pi}}, y:\dualof{S}\mapsto{\bar{\pi}}\}$.
  That is, $x$ and $y$ occur in $\config{D}_{n+1}$ and have
  compatible types. Thus, one of the following reductions should hold on
  $\config{D}_n$:
  \textsf{R-Com}, \textsf{R-Ch} or \textsf{R-Close}. This contradicts our
  hypothesis that $\config{D}_n \not \rightarrow$, so this case does not hold.
  
  This means that $\config{D}_n$ does not have any restrictions. Since it is 
  well-typed under the $\emptyenv$ context, $\config{D}_n$
  should be closed. Then, it must be structurally congruent to 
  $\confthread{\mainthread}{v}$, where $v$ is an unrestricted value. 
  We conclude that $\config{C}$ is deadlock free.\qed
\end{enumerate}
\end{proof}

\subsection{A Cyclic Network: Milner's Scheduler}
\label{ap:cyclic-scheduler}

To illustrate the applicability of our type system and language to cyclic network topologies, 
let us now consider Milner's cyclic scheduler~\cite{DBLP:books/daglib/0067019}. 
This example is based on Van den Heuvel and P\'{e}rez's version of Milner's scheduler
under asynchronous session communication~\cite{DBLP:journals/lmcs/0001024}.

Milner's scheduler consists of a set of $k$ partial schedulers distributed in 
a ring topology and assisted by $k$ workers. Tasks are performed in rounds, 
initiated by a \emph{leader} and continued by the \emph{followers}. 
Now, we describe an implementation of the system. 
The communication of each scheduler with its worker is governed by 
type \lstinline|Worker|, which consists of selecting \lstinline|Start| at
priority \lstinline|~a~| and 
then awaiting an acknowledgement \lstinline|()| at priority \lstinline|~a+2~| 
(according to the implementation below):
\begin{lstlisting}
type Worker = ~forallp a belongsTo (bot,top) =>~ o+~a~{Start: ?~(a+2)~(); Worker}
\end{lstlisting}

The communication of a scheduler with its succeeding neighbour consists of 
selecting \lstinline|Start| and then selecting \lstinline|Next| after (up to) four operations:
\lstset{firstnumber=2}
\begin{lstlisting}
type Sched = ~forallp b belongsTo (bot,top) =>~ o+~b~{Start: o+~(b+4)~{Next: Sched}}
\end{lstlisting}

In a given round, each \lstinline|follower| waits for a \lstinline|Start| signal from its 
\lstinline|prev|ious neighbour and then notifies its corresponding \lstinline|worker| 
and the \lstinline|succ|eeding neighbour to \lstinline|Start| their tasks. After 
having \lstinline|receive|d an acknowledgement 
from its \lstinline|worker| and a \lstinline|Next| round notification from its \lstinline|prev|ious 
neighbour, this \lstinline|follower| sends a \lstinline|Next| round notification 
to its \lstinline|succ|eeding neighbour and then (recursively) waits for a new round.\footnote{To simplify notation, we omit type and priority applications when using 
the primitives \lstinline|send|, \lstinline|receive|, and \lstinline|fork|---the instantiations are easily 
identifiable from the context.} 

\lstset{firstnumber=3}
\begin{lstlisting}
follower : ~forallp p belongsTo (bot,top) =>~
           dualof Sched ->~[top,bot]~ Worker 1->~[p,bot]~ Sched 1->~[p,top]~ ()
follower prev worker succ = 
    match (inst prev) with {                                    
        Start prev -> 
            let worker = select Start (inst worker) in          
            let succ = select Start (inst succ) in              
            let (_, worker) = receive worker in                 
            match prev with {                                   
                Next prev ->                                    
                    let succ = select Next succ in              
                    follower~{next prev}~ prev worker succ
            }
    }
\end{lstlisting}

The first round is 
initiated by a \lstinline|leader|, who notifies its \lstinline|worker|
and \lstinline|succ|eeding neighbour
to \lstinline|Start| their tasks. After having \lstinline|receive|d an acknowledgment from 
its \lstinline|worker|, the \lstinline|leader|
sends a \lstinline|Next| round signal to its \lstinline|succ|eeding neighbour 
and behaves like a \lstinline|follower| for the next rounds.

\lstset{firstnumber=17}
\begin{lstlisting}
leader : ~forallp p belongsTo (bot,top) =>~ 
         Worker ->~[top,bot]~ dualof Sched 1->~[p,bot]~ Sched 1->~[p,bot]~ 
         ()
leader worker prev succ = 
    let worker = select Start (inst worker) in      
    let succ = select Start (inst succ) in              
    let (_, worker) = receive worker in                
    let succ = select Next succ in            
    follower~{next prev}~ prev worker succ                 
\end{lstlisting}

The \lstinline|worker| is notified to \lstinline|Start| and \lstinline|send|s an ack:

\lstset{firstnumber=26}
\begin{lstlisting}
worker : dualof Worker ->~[top,top]~ ()
worker x = 
    match (inst x) with { 
        Start x ->        
            let x = send () x in 
            worker x
    }
\end{lstlisting}

The \lstinline|main| function is responsible for creating communication channels, 
coordinating priorities 
and ensuring that threads are launched in accordance with 
the cyclic network topology. For a scheduler with $k = 3$, we define the following implementation:

\lstset{firstnumber=33}
\begin{lstlisting}
main : ()
main = 
    let (a1, b1) = new Worker ~1 6~ in
    let (a2, b2) = new Worker ~3 6~ in 
    let (a3, b3) = new Worker ~5 6~ in 
    let (c1, d1) = new Sched ~2 6~ in 
    let (c2, d2) = new Sched ~4 6~ in 
    let (c3, d3) = new Sched ~6 6~ in 

    fork (\_:()1-> leader~{next a1}~ a1 d3 c1);   --A1
    fork (\_:()1-> follower~{next d1}~ d1 a2 c2); --A2
    fork (\_:()1-> follower~{next d2}~ d2 a3 c3); --A3

    fork (\_:()1-> worker b1);                   --worker1
    fork (\_:()1-> worker b2);                   --worker2
    worker b3                                    --worker3
\end{lstlisting}

The rounds start with the leader. Looking at the implementation of 
the \lstinline|leader| we see that 
the first communication is done with the \lstinline|worker|, so 
the communication channel established with the first \lstinline|Worker|
starts at priority \lstinline|~1~| (Line 35). Looking to the \lstinline|leader| again,
we now see that the channel created in Line 42 should start at 
priority \lstinline|~2~| (*). Proceeding on function \lstinline|leader|, 
Line 23 occurs at priority \lstinline|~3~| (according to type \lstinline|Worker|). 
Then, looking at type \lstinline|Sched|, Line 24 is done at priority \lstinline|~6~| 
(which is okay). Now it is time to look at the succeding \lstinline|follower|
and ensure that priorities of dual operations match. Since this \lstinline|follower|
starts at priority \lstinline|~2~| (as a consequence of observation (*) above), the second 
\lstinline|Worker| starts at priority \lstinline|~3~| (Line 36) and the communication to the 
\lstinline|succ|eeding neighbour starts at priority \lstinline|~4~| (Line 39)---which is the information  
that we need to establish the priority of the third \lstinline|Worker| in Line 37 and the priority of 
the channel that connects the two followers in Line 40. Going through a second round of
analysis, we observe that communication with the first \lstinline|Worker| starts 
again at priority \lstinline|~7~|. {This shows that the global increment 
is }\lstinline|~6~|. 
We see how this increment is directly related to the number of created sessions, to the sequential behavior in their types \lstinline|Worker| and \lstinline|Sched|, and to the session interleaving in \lstinline|main|.
Notice that any other multiple 
of }\lstinline|~6~| (such as \lstinline|~12~|) {would fail to preserve the correct ordering 
of communications.}




\paragraph{Deadlock Freedom.}
The program above is well-typed and therefore deadlock free.
The typing derivation starts by applying \textsf{T-LetPair} and \textsf{T-NewPoly}
six times to put \lstinline|a1|--\lstinline|a3|, \lstinline|b1|--\lstinline|b3|, 
\lstinline|c1|--\lstinline|c3|, \lstinline|d1|--\lstinline|d3| in the context
and associate each endpoint to the corresponding priority sequence. 
Then, it isolates the
first \lstinline|fork| using \textsf{T-Let}. 
To check the typing of the first \lstinline|fork|, we apply \textsf{T-App} to split
the typing of constant \lstinline|fork| from the typing of the 
thunk \lstinline|\_:()1-> leader~{next a1}~ a1 d3 c1|. 
The typing of  \lstinline|fork| succeeds after applying \textsf{T-PApp}
twice to validate the ommited annotations \lstinline|~{top}{top}~|, i.e., checking that 
the first priority instantiation \lstinline|~top~|\,$\in\intervaloc{\prbot}{\prtop}$ 
and the second \lstinline|~top~|\,$\in\intervalcc{\prbot}{\prtop}$, as prescribed in the 
typing of \lstinline|fork| (\cref{fig:types-constants}). 
For   \lstinline|\_:() 1-> leader~{next a1}~ a1 d3 c1|, we 
apply \textsf{T-AbsLin}, followed by \textsf{T-App} three times, 
to account for each argument of \lstinline|leader~{next a1}~|. 
We apply \textsf{T-Var} to the arguments \lstinline|a1|, \lstinline|d3| and \lstinline|c1|.
Then, use \textsf{T-PApp} to check that 
\lstinline|leader~{next a1}~| conforms to the provided type.
For the other \lstinline|fork|s the reasoning is similar.

To check the rest of the program, we 
apply \textsf{T-App}, apply \textsf{T-Var} to the argument \lstinline|b3| and then 
evaluate function \lstinline|worker| under the updated priority sequences.

\section{Additional Related Work}
\label{sec:related-work}

Closely related work has been discussed throughout the paper; here we provide a broader context by commenting on other works. 
\paragraph{Session Types.}
Session types were proposed by Honda et al.~\cite{DBLP:conf/concur/Honda93,DBLP:conf/esop/HondaVK98} 
as a way of structuring communication in heterogeneously typed channels. The 
original theory of session types has been extended in many ways, including 
incorporating bounded polymorphism~\cite{DARDHA2017253,Gay2008895}, label 
dependency~\cite{10.1145/3371135} and types with non-regular recursion via 
con\-text-free session types~\cite{DBLP:conf/icfp/ThiemannV16}. 

Our work 
capitalizes on developments for CFSTs, 
including extensions with higher-order types~\cite{Costa2022}, impredicative polymorphism~\cite{DBLP:journals/iandc/AlmeidaMTV22},  
higher-order polymorphism~\cite{10.1007/978-3-031-30044-8_15}, algorithms for type equivalence~\cite{DBLP:conf/tacas/AlmeidaMV20}, subtyping~\cite{DBLP:conf/concur/SilvaMV23} and kind inference~\cite{Almeida_2023,jlamp}. These 
developments have been integrated in the compiler of the FreeST programming language~\cite{freest}. 
This line of work has focused on increasing the expressiveness of types,  adopting the view that programmers should have the liberty of writing typable programs with deadlocked behaviors.
Our work shows that aiming at expressivity does not really conflict with strong, type-based correctness guarantees, as CFSTs can be effectively extended with deadlock freedom guarantees while still supporting programs not expressible in standard session types. 

\paragraph{Type Systems for Deadlock Freedom.}
The priority-based ap\-proach that we have incorporated on context-free session types can be traced back to seminal  work by Kobayashi~\cite{DBLP:conf/unu/Kobayashi02,DBLP:conf/concur/Kobayashi06}, who developed several type systems based on priorities for statically enforcing (dead)lock freedom for $\pi$-calculus processes, under linear type systems with \emph{usages}. The resulting classes of concurrent processes are deadlock free by typing, and rather expressive: they can express process networks in cyclic topologies and encode typed $\lambda$-calculi and the sequencing behavior distinctive of (standard) session types~\cite{DARDHA2017253,DBLP:conf/unu/Kobayashi02}. 

Caires and Pfenning~\cite{DBLP:conf/concur/CairesP10} established 
a Curry-Howard correspondence between session types and intuitionistic linear logic, from which correctness guarantees such as 
deadlock freedom and confluence follow from proof-theoretical principles. 
Wadler~\cite{DBLP:conf/icfp/Wadler12} transferred this connection  to the classical setting. 
This Curry-Howard approach to session types enabled a number of extensions, including forms of 
typed observational equivalence~\cite{DBLP:journals/iandc/PerezCPT14} 
and dependent types~\cite{10.1145/2003476.2003499} as
well as integrations of processes, functions, and sessions~\cite{DBLP:conf/esop/ToninhoCP13}, to name just a few.
Also in this line we find nested session types~\cite{10.1007/978-3-030-72019-3_7}, which were proved to strictly extend context-free session types.
An outstanding limitation of the logic-based approach to session types is its support for  tree-like topologies only, which leaves out many real-world scenarios with cyclic topologies~\cite{DBLP:journals/jlap/DardhaP22}. 
Extensions of the logic-based approach with priorities, in order to support cyclic topologies, have been proposed in 
\cite{DBLP:conf/fossacs/DardhaG18} (for synchronous processes without recursive types) and in \cite{DBLP:journals/corr/abs-2110-00146,DBLP:journals/lmcs/0001024} (for asynchronous processes with recursive types). 

\paragraph{From Processes to Programs.}
Strictly speaking, all previously mentioned works concern processes, not programs. 
Although processes are long known to encode concurrent programs, the difference is relevant, as the non-local analyses enabled by priorities can be more directly performed for processes (where communication constructs are explicit) than for programs (where communication arises via functional constructs, cf. \cref{fig:types-constants}). 
To our knowledge, Padovani and Novara were the first to adapt a type system for deadlock freedom  for processes~\cite{padovani_linear_pi} to the case of higher-order concurrent programs~\cite{DBLP:conf/forte/PadovaniN15} (see also their   technical report~\cite{padovani:hal-00954364}); a source of inspiration for our developments, their work uses polymorphism and a form of \emph{regular} recursion that appears too limited to specify examples such as the one discussed in \cref{ss:trees}.
Our proof of deadlock freedom has been also influenced by  Kokke and Dardha's work~\cite{DBLP:journals/lmcs/KokkeD23}, who adapted Padovani and Novara's approach to the case of regular session types without recursion, which is subsumed by our work.



\end{document}